\documentclass[ecta, nameyear, draft]{econsocart} 
\RequirePackage{xr-hyper}
\RequirePackage[colorlinks, citecolor = blue, linkcolor = blue, urlcolor = blue, pagebackref]{hyperref}

\startlocaldefs
\usepackage{subfiles, subcaption}
\usepackage{natbib}
\usepackage{graphicx, tabularx, booktabs}
\usepackage{enumitem}
\usepackage{setspace}

\makeatletter
\renewcommand{\put@numberlines@box}{}
\renewcommand{\copyright@text}{} 
\makeatother

\theoremstyle{plain}

\newtheorem{theorem}{Theorem}
\newtheorem{property}{Property}

\newtheorem{lemma}{Lemma}

\theoremstyle{definition}

\newtheorem*{assumptionA}{Assumption A}

\newtheorem*{assumptionB1}{Assumption B.1}
\newtheorem*{assumptionB2}{Assumption B.2}
\newtheorem*{assumptionB3}{Assumption B.3}

\newcommand{\mathbbm}[1]{\text{\usefont{U}{bbm}{m}{n}#1}}
\DeclareMathOperator{\detr}{det}

\DeclareMathOperator{\diag}{diag}
\DeclareMathOperator{\mean}{E}
\DeclareMathOperator{\pr}{P}
\DeclareMathOperator{\var}{Var}
\DeclareMathOperator{\cov}{Cov}

\DeclareMathOperator{\sgn}{sgn}
\DeclareMathOperator{\argmax}{argmax}
\DeclareMathOperator{\argmin}{argmin}
\endlocaldefs
\renewcommand{\thetable}{\arabic{table}}

\begin{document}
\patchcmd{\subsection}
{\bigskipamount}
{8pt plus 1pt minus 1pt}
{}{}
\widowpenalty=300
\begin{frontmatter}
\title{Discrete Choice with Generalized Social Interactions}
\runtitle{Discrete Choice with Generalized Social Interactions}

\begin{aug}
\author[add1]{\fnms{Oscar}~\snm{Volpe}\ead[label=e1]{ovolpe@fas.harvard.edu}}
\address[add1]{\orgdiv{Department of Economics}, \orgname{Harvard University}}
\end{aug}

\begin{center}
	\vspace{-0.35cm}
	September 30, 2026
	\vspace{0.75cm}
\end{center}

\begin{funding}
I thank Isaiah Andrews, St\'{e}phane Bonhomme, Ozan Candogan, Arun Chandrasekhar, Michael Dinerstein, Ste-\\ven Durlauf, Ali Horta\c{c}su, Robert Moffitt, Magne Mogstad, Francesca Molinari, Phil Reny, and Alex Torgovitsky for helpful comments and suggestions, as well as seminar participants at the North American UEA Meeting, SITE (Empirical Implementation of Theoretical Models), and the Applied Micro, I/O, and 3rd-year PhD seminars at the University of Chicago. I am grateful to three anonymous referees for invaluable feedback. All errors are my own.
\end{funding}

\begin{abstract}
	This paper studies social interactions in discrete choice models in which indivi-\\duals can respond differently to different network members, conforming to some while distinguishing from others.\hspace{-0.03cm} In this generalized framework,\hspace{-0.02cm} I show how to use within-network variation in individual choices to draw inference about differences in endogenous interaction effects, while accounting for unobserved contextual effects. I also show how to construct internal instruments to correct for measurement error bias, a key source of endogeneity in network models with incomplete information.\hspace{-0.03cm} I apply these results to data from two empirical settings:\hspace{-0.01cm} classroom peer ef-\\[-0.65cm]
	
	\noindent
	fects in Tennessee primary schools and spillovers in deworming treatment uptake in Kenya. In both applications, I find that differences in social interaction effects, \\ with individuals conforming more strongly to similar peers, play a key role in sha-\\ping people's choices, with implications for the targeting of policy interventions. \\
\end{abstract}

\begin{keyword}
	\kwd{social interaction effects}
	\kwd{discrete choice}
	\kwd{strategic complements and substitutes}
	\kwd{contextual effects}
	\kwd{measurement error bias}
	\kwd{partial identification}
\end{keyword}

\vspace{-0.6cm}
\end{frontmatter}

\section{Introduction}\label{s1}

\noindent
The role of social interactions in shaping individual behavior is an active research topic in economics.\ Most empirical work has focused on settings with uniform positive interactions, where all agents conform in the same way to the average action of others in their network. Meanwhile, less attention has been given to settings with nonuniform or mixed-sign interactions, where agents are affected differently by different network members\textemdash conforming to some while distinguishing from others.\ These richer forms of interaction can be harder to evaluate empirically, yet they are important for understanding behavior in many economic settings, where social identity shapes how people's choices depend on those around them.

In this paper, I develop empirical tools to study discrete choice models with \textit{generalized} social interactions, where individuals can experience both positive and negative spillovers. I build an equilibrium framework that extends the classical model of \cite{brdu2001} by allowing the sign and magnitude of interaction effects to vary among agents in a network. Under this framework, I show how to recover key economic quantities from indiv-\\idual choice data, and I construct an estimator for endogenous interaction effects that can be readily applied to standard datasets. I then apply these methods to data from two large-scale education experiments to draw inference about the role of nonuniform interaction effects.

In Section 2, I begin by modeling a network of agents who make binary choices subject to social interaction effects and private payoffs. As in \cite{brdu2001}, I assume agents act with incomplete information, responding to expected rather than realized choices of others. Yet, unlike previous work, I allow interaction effects to vary among agents and to take negative values. To achieve this, I extend the model to allow for a matrix of interaction effects $\mathbf{J} = [J_{k\ell}]_{k,\ell}$,\hspace{-0.01cm} where $J_{k\ell}$ measures how agents of type $k$ respond to agents of type $\ell$. \\[-0.55cm]

By imposing no restrictions on $\mathbf{J}$, I show that the model can be used to study a range of economic decision problems across different network structures, featuring both conformity and differentiation. Under this generalized setting, I establish conditions for the equilibrium to be unique and locally stable, characterizing how these properties depend on the signs and \\[-0.75cm]

\noindent
magnitudes of the interaction effects.\footnote{While these properties are well-studied for uniform, positive interactions, they do not readily transfer to cases with negative interactions; see \cite{jaze2015} for discussion. My analysis contributes by showing how key results associated with positive interactions extend to a broader set of models involving negative interactions.} I also show that, when multiple equilibria exist, negative interactions can create welfare tradeoffs, in which no equilibrium is broadly favorable.

Next, in Section 3, I turn to the question of identification. That is, given data on individual choices across multiple networks, what can be learned about the role of social interactions? The identification of discrete choice network models with incomplete information has been studied extensively by \cite{brdu2001, brdu2007} and more recently by \cite{bhduka2023}. \cite{gr2015}, \cite{pa2017}, and \cite{klta2020} give surveys of \\ the literature. My analysis builds on this work by tackling two key identification challenges.

First,\hspace{-0.03cm} I address the issue of unobserved network effects\textemdash contextual factors that obscure the effect of social interactions on group-level outcomes.\ For example, from data on student achievement, it is hard to isolate the role of peer effects from unobserved class characteristics like teacher quality. Much of the applied literature rules out or places strong parametric restrictions on these unobservables.\footnote{For example, \cite{brdu2001, brdu2007} assume that the contextual effect is a constant linear function of observed variables, whereas \cite{bhduka2023} assume that it has a specific linear factor structure. Others, including \cite{ho2000}, propose using a panel to difference-out the fixed effects between time periods. However, this approach requires both access to panel data and the assumption that contextual effects are invariant over time.} However, this risks introducing omitted variable bias.

To handle this issue, I propose a partial identification strategy to recover key features of $\mathbf{J}$ without imposing any restrictions on unobserved network effects. This method exploits a panel structure inherent to the model, where agents in the same network interact in varying ways. Since these agents are exposed to the same contextual factors, I can control for network unobservables by comparing outcomes of different agents in the same network. This method allows me to draw inference about systematic variation in social interaction effects.

Second, I address a core issue in network models with incomplete information: agents' expectations are not observed by the researcher. Instead, a researcher only observes agents' realized choices in finite networks. Since the average choice in the network converges to an expectation as the network size grows large, previous work has treated this issue as an esti-\\mation problem rather than a threat to identification.\footnote{Previously, estimation has relied on double asymptotic arguments, requiring both the number and size of the networks to tend to infinity; e.g., see \cite{lelili2014, gumo2022}, and \cite{bhduka2023}.}\ I argue that this approach is inadvis-\\able, since, for any finite network, the average action is endogenous in the model due to me-\\[-0.75cm]

\noindent
asurement error. I show that, by failing to account for this endogeneity, standard estimators generally produce highly biased estimates, even as the number of networks tends to infinity.

I approach this issue as a classical measurement error problem, noting that realized average choices are noisy proxies for true expectations and are therefore endogenous. To correct for endogeneity, I construct internal instruments: randomly partitioning networks into two \\ parts, and using the average action in one part as an instrument for the average action in the other. Under the model’s assumptions, this method yields an IV estimator for social interaction effects that is consistent as the number of networks grows, even when network sizes remain fixed. I demonstrate that this estimator performs well in Monte Carlo simulations.

Finally, in Section 4, I apply these results to data from two empirical settings.\ First, I an-\\alyze Project STAR\textemdash a large-scale education experiment in Tennessee that randomly assi-\\gned students to classrooms\textemdash to study how peer effects vary by gender. I find that boys and girls are much more likely to conform to peers of their own gender than to peers of the other gender. Next, I use data from the Primary School Deworming Project, previously studied by \cite{mikr2004}, to estimate how spillovers in treatment uptake vary by age and gender. Again, I find that students prefer to conform to similar peers. Moreover, this own-group conformity is strongest among younger students. In both applications, I find that the generalized interactions model uncovers systematic differences in interaction effects that standard frameworks would overlook. Accounting for these differences provides a richer understanding of how social interactions shape the effects of targeted policies in networks.

\section{Model with Generalized Interaction Effects}\label{s2}\enlargethispage{\baselineskip}

\noindent
In this section, I introduce a binary choice model with social interactions that extends the cl-\\[-0.75cm]

\noindent
assical model of \cite{brdu2001} by allowing the sign and strength of interaction effects to vary across individuals. Unlike in the classical model, the analysis is nonparametric in the distribution of idiosyncratic preferences. Therefore, the equilibrium properties and subsequent identification results are robust to a wide range of functional form assumptions.

\subsection{Agent Preferences}\label{s2.1}

\noindent
Consider a network of agents indexed by $i$, where an agent belongs to one of $K$ types.\ These types can represent social identities, such as gender, class, race, or political partisanship, or they can represent organizational roles within the network, such as workers and supervisors, suppliers and manufacturers, patients and clinicians, or villagers and community leaders. 

Each agent chooses a binary action $\omega_i \in \{0, 1\}$ at a common time. Let $\bar\omega^k$ be the average action among agents of type $k$, and let $\bar\omega^k_{-i}$ be the average action among agents of type $k$ excluding $i$. The individual utility from choosing an action $\omega_i$ for any agent $i$ of type $k$ is:
\begin{equation}\label{eq:1}
	U_i(\omega_i|k) = v_k(\omega_i) + J_{kk}\omega_i \mean_i(\bar\omega_{-i}^k) + \sum_{\ell \neq k} J_{k\ell}\omega_i \mean_i(\bar\omega^\ell) + \epsilon_i(\omega_i). \vspace{-0.2cm}
\end{equation}
In this utility function, $v_k(\omega_i)$ is the deterministic private payoff, which can depend on the agent's type $k$, and $\epsilon_i(\omega_i)$ is an idiosyncratic payoff, which can vary across individuals.\ The terms $\mean_i(\bar\omega_{-i}^k)$ and $\mean_i(\bar\omega^\ell)$ are agent $i$'s subjective beliefs about $\bar\omega_{-i}^k$ and $\bar\omega^\ell$, respectively.\footnote{Another common way to model spillovers is through a quadratic conformity effect, $-\frac{1}{2}J_{k\ell}[\omega_i - \mean_i(\bar\omega^\ell)]^2$, as studied by \cite{be1994}. This parameterization is equivalent in that it leads agents to make the same choices.}

The parameter $J_{k\ell}$ is the interaction effect,\hspace{-0.01cm} measuring how agents of type $k$ are influenced by the expected average choice of agents of type $\ell$. These terms can be collected in a matrix:
\begin{equation}\label{eq:2}
	\mathbf{J} = \small
	\begin{bmatrix}
		J_{11} & J_{12} & \cdots & J_{1K} \\
		J_{21} & J_{22} & \cdots & J_{2K} \\
		\vdots & \vdots & \ddots & \vdots \\
		J_{K1} & J_{K2} & \cdots & J_{KK}
	\end{bmatrix}. \normalsize
\end{equation}
When analyzing model behavior, I refer to $\mathbf{J}$ as the \textit{interaction matrix}. Its entries characterize the direction, sign, and magnitude of social influence among different types of agents.\footnote{$\mathbf{J}$ can be interpreted as the adjacency matrix of a directed graph, where nodes represent different agent types.}

Since choices are binary, the private payoffs can, without loss of generality, be written as linear functions of the choice:\ $v_k(\omega_i) = h_k\omega_i + \eta_k$ and $\epsilon_i(\omega_i) = \varepsilon_i\omega_i + \xi_i$. Here, $h_k$ is the deterministic payoff from selecting $\omega_i = 1$, and $\varepsilon_i$ is the idiosyncratic taste for this action.\ The tastes are assumed independent across agents, and they can have type-specific distributions: \\[-0.55cm]
\begin{equation}\label{eq:3}
	\pr(\varepsilon_i \leq z|k) = F_{\varepsilon|k}(z), \quad\mbox{for } k = 1, \dots, K,
\end{equation}
where $F_{\varepsilon|k}$ is strictly increasing, smooth, and symmetric about zero.\ No additional parame-\\tric assumptions are maintained throughout the analysis. Therefore, the framework accommodates a variety of common empirical specifications, such as logistic and Gaussian errors.

\subsection{Economic Examples of Generalized Interactions}\label{s2.2}

\noindent
The classical model of social interactions for binary outcomes \citep{brdu2001} imposes two key restrictions on the interaction effects. First, it assumes uniformity ($J_{k\ell} = J$ for all $k$ and $\ell$), so all agents influence each other in the same way, regardless of their types. Second, it assumes that all interaction effects are nonnegative ($J_{k\ell} \geq 0$ for all $k$ and $\ell$), so each agent's payoff weakly rises when others take the same action.\ This restriction rules out strategic substitutability, where agents seek to differentiate from certain network members.

I relax the first restriction by allowing the interaction effects to vary among types and also \\[-0.75cm]

\noindent
by not imposing symmetry ($J_{k\ell} = J_{\ell k}$), so the influence of type $k$ on type $\ell$ can differ from that of type $\ell$ on type $k$. This generalization makes the model well-suited to study a variety of economic decision problems across different network structures. Table 1, Panel A, pres-\\ents four commonly studied network structures in economics, as well as the conditions on $\{J_{k\ell}\}_{k, \ell}$ that define each one. The first is a \textit{tree}, used to study spillovers in organizational hierarchies and supply chains, as well as the spread of information or disease. The second is a \textit{circle}, relevant for analyzing local interactions among close contacts \citep{el1993} and spatial competition between nearby firms \citep{sa1979}. The third is a \textit{bipartite} network, used to study two-sided markets linking workers and firms or buyers and sellers. The fourth is a \textit{complete} network, in which all agents interact with one another, as in classical peer eff-\\ects models. Each of these networks is nested under the generalized interactions framework.

I relax the second restriction by allowing for any combination of positive and negative interaction effects. This extension makes the model applicable to environments where agents seek to differentiate their behavior from that of others.\hspace{-0.035cm} Examples, shown in Table 1, Panel B, \\[-0.75cm]

\noindent
include \textit{free-riding} \citep{brkr2007}, where agents contribute less to a public \\ good when others contribute more; \textit{social distinction} \citep{akkr2000}, where agents differentiate their choices to reinforce identity; \textit{segregation} \citep{sc1971}, where agents prefer actions that result in the exclusion of particular peers;\hspace{-0.01cm} and \textit{polarization} \citep{bogesh2022}, where agents deviate from actions taken by those with opposing views. In each case, an agent's preference to distinguish her choice from those of others can be modeled as a negative interaction effect. Thus, by allowing for a mix of positive and negative interaction effects, the generalized model accommodates any pattern of conformity and differentiation.

Relative to the classical model, the generalized interactions framework admits a new set of target parameters that provide insight into how social interactions vary among different individuals. For example, the difference $J_{k_1 \ell} - J_{k_2 \ell}$ measures how the influence of type $\ell$ differs between agents of types $k_1$ and $k_2$. These terms can be used to construct an economic measure of polarization, $\delta_{k_1k_2} = J_{k_1k_1} + J_{k_2k_2} - J_{k_1k_2} - J_{k_2k_1}$, which quantifies the two types' combined preference for conforming to their own type rather than to the other type.

\begin{singlespace}
	\begin{table}[!htbp]
	\caption{Commonly Studied Economic Networks and Interaction Effects}
	\label{tab:network-examples}
	\centering
	\footnotesize
	\begin{tabularx}{\textwidth}{>{\centering\arraybackslash}m{0.14\textwidth} >{\centering\arraybackslash}m{0.18\textwidth} >{\centering\arraybackslash}m{0.28\textwidth} >{\centering\arraybackslash}m{0.30\textwidth}}
		\toprule
		& Illustration $(K = 7)$ & Structural Representation & Economic Examples \\
		\midrule
		\multicolumn{4}{l}{\textit{Panel A: Network Structures}} \\
		\midrule
		Tree
		& \includegraphics[width=0.12\textwidth]{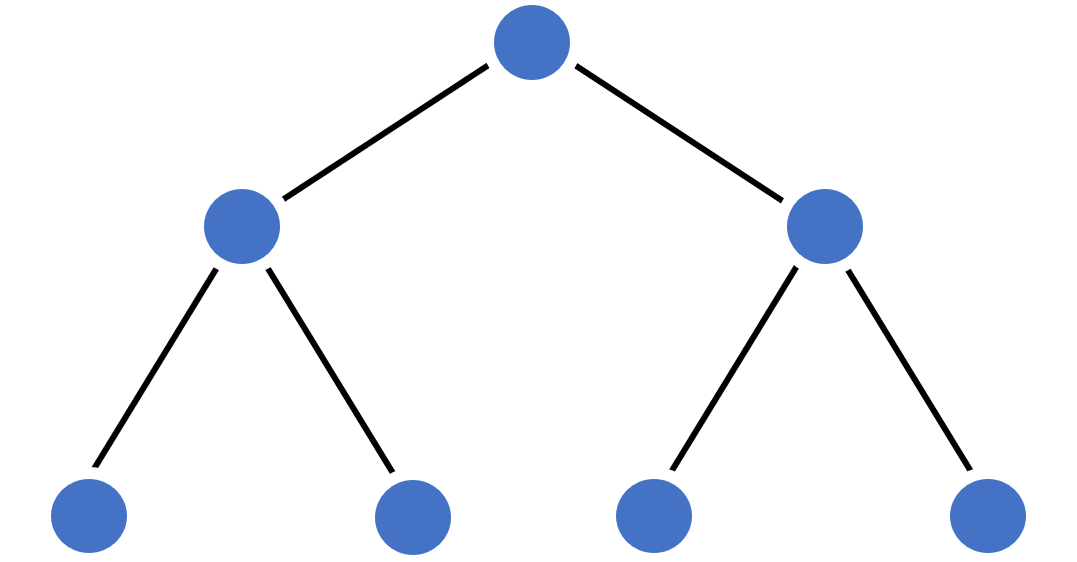}
		& \shortstack{
			$\sum\limits_{k<\ell}\hspace{-0.05cm}
			\mathbbm{1}\{|J_{k\ell}| \!+\! |J_{\ell k}| \!>\! 0\}
			\!=\! K \!-\! 1$ \\[-0.2cm]
			and all types connected
		}
		& hierarchy, vertical production, contagion, info diffusion\\
		\addlinespace
		Circle
		& \includegraphics[width=0.1\textwidth]{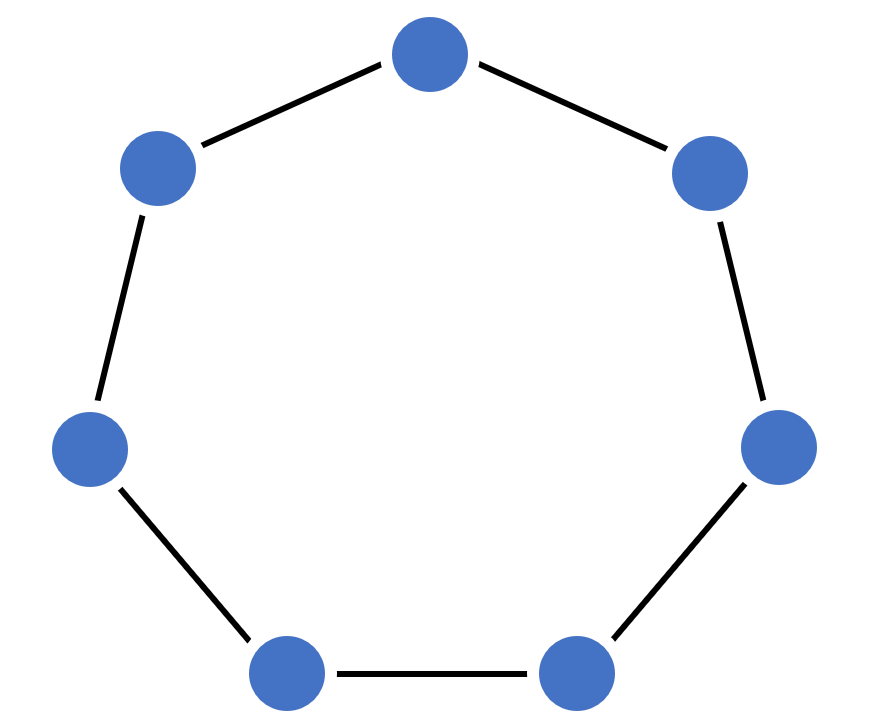}
		& $J_{k\ell} \neq 0$ for $\ell = k \pm 1$ (mod $K$)
		& \multicolumn{1}{>{\centering\arraybackslash}m{4.7cm}}{\shortstack{local interactions, \\ spatial competition}} \\
		\addlinespace
		Bipartite Network
		& \includegraphics[width=0.14\textwidth]{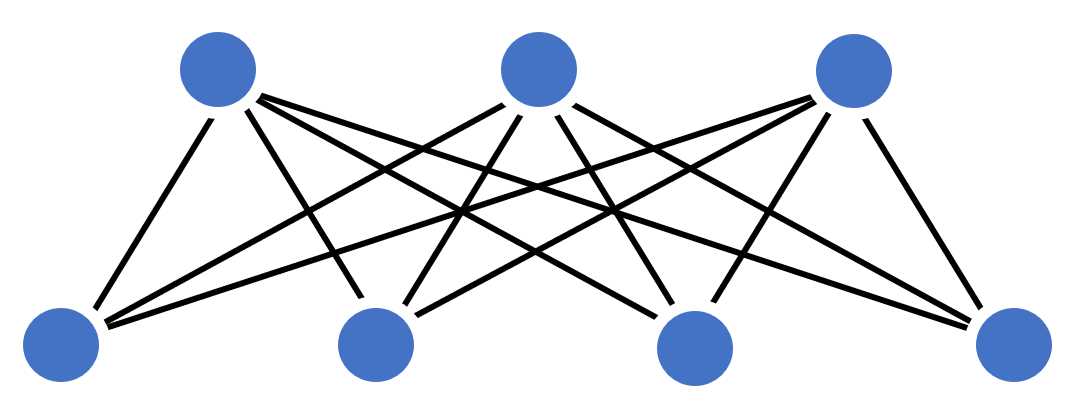}
		& $J_{k\ell} \neq 0$ if $k \in A$ and $\ell \in B$\newline for disjoint $A, B \subseteq \{1, \dots, K\}$
		& labor market, product market, college admission/matching  \\
		\addlinespace
		Complete Network
		& \includegraphics[width=0.09\textwidth]{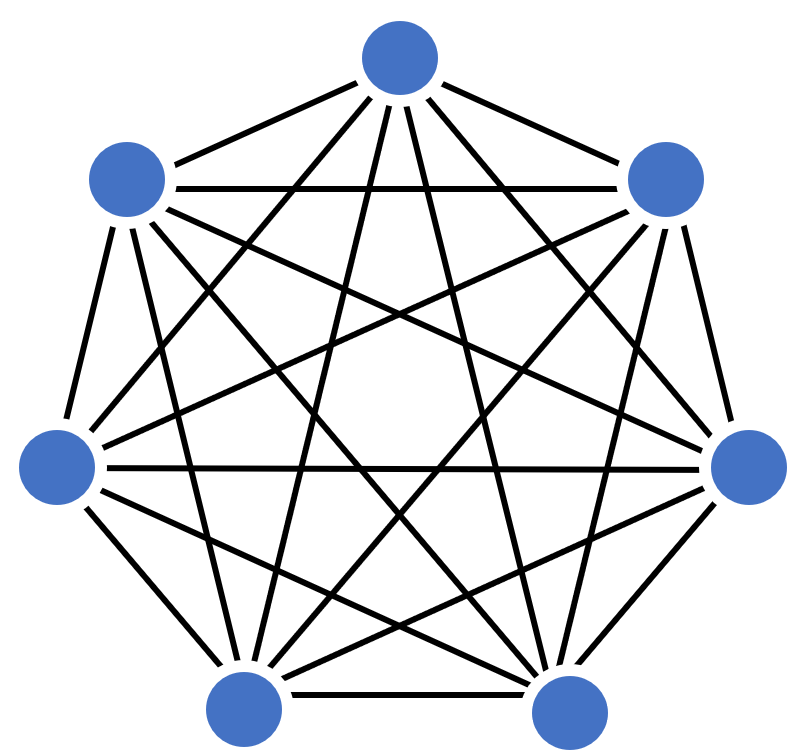}
		& $J_{k\ell} \neq 0$ for all $k,\,\ell$
		& classroom peer effects, public goods settings, oligopoly \\
		\midrule
		\multicolumn{4}{l}{\textit{Panel B: Modes of Social Influence}} \\
		\midrule
		\multicolumn{1}{>{\centering\arraybackslash}m{2.4cm}}{\shortstack{Strategic\\ Complementarity}} & \scalebox{1}{$\frac{\partial [U_i(1|k) - U_i(0|k)]}{\partial \mean_i(\bar\omega^\ell)} \geq 0$} & $J_{k\ell} \geq 0$ & conformity/imitation, social learning, technology adoption \\
		& & & \\
		\multicolumn{1}{>{\centering\arraybackslash}m{2.25cm}}{\shortstack{Strategic\\ Substitutability}} & \scalebox{1}{$\frac{\partial [U_i(1|k) - U_i(0|k)]}{\partial \mean_i(\bar\omega^\ell)} \leq 0$} & $J_{k\ell} \leq 0$ & free-riding, social distinction, segregation, polarization \\
		\bottomrule
	\end{tabularx}
	\begin{minipage}{\textwidth}
		\renewcommand{\baselinestretch}{0}\selectfont\setlength{\spaceskip}{0em}\footnotesize
		\emph{Notes.} This table lists examples of network structures and spillovers nested by the generalized interactions model.
	\end{minipage}
\end{table}
\end{singlespace}
\vspace{-0.5cm}

\subsection{Equilibrium under Noncooperative Decisionmaking}\label{s2.3}

\noindent
My analysis focuses on pure strategy Bayesian Nash equilibria in which agents act noncoo-\\peratively, that is, without coordination. An agent of type $k$ chooses $\omega_i = 1$ with probability:
\begin{equation}\label{eq:4}
	\pr(\omega_i = 1|k) = F_{\varepsilon|k}\Big(h_k + J_{kk} \mean_i(\bar\omega_{-i}^k) + \sum_{\ell \neq k} J_{k\ell} \mean_i(\bar\omega^\ell)\Big).
\end{equation}
Since $\omega_i$ takes values in the set $\{0, 1\}$, the expected action $\mean(\omega_i|k)$ also equals $\pr(\omega_i = 1|k)$.

Agents act with incomplete information. Specifically, an agent $i$ knows her own idiosyncratic taste $\varepsilon_i$ and the deterministic preference parameters ($h_k$, $\{J_{k\ell}\}_\ell$, $F_{\varepsilon|k}$) for each type, but not the idiosyncratic tastes $\varepsilon_j$ of other agents $j \neq i$. This information structure can be interpreted in two ways. First, it may be that agents do not see everyone else's choice, either because actions are taken simultaneously, such as during a test, or because the network is too large to track individual actions, such as in a school or village. Under this interpretation, $\varepsilon_i$ reflects the uncertainty that agents have about others' choices. A second interpretation of the information structure is that agents respond to social norms\textemdash the behavior of a ``typical'' agent of a given type\textemdash while ignoring idiosyncratic deviations from those norms. This interpretation is consistent with economic theories of social influence mechanisms \citep{akkr2000}, where norms are understood to reflect shared values of a social group.

An equilibrium is characterized by the expected average choices that are consistent with individually optimal decisions. I assume that agents hold consistent beliefs, so that they co-\\rrectly anticipate other agents' actions in expectation: $\mean_i(\omega_j|k) = \mean(\omega_j|k)$ for all $i \neq j$ and $k$. Letting $m^{k}$ denote $\mean(\bar \omega^k)$, an equilibrium $m = (m^{1}, \dots, m^{K})$ is defined as a solution to:
\begin{equation}\label{eq:5}
	m^{k} = F_{\varepsilon|k}\bigg(h_k + \sum_{\ell = 1}^K J_{k\ell} m^{\ell}\bigg), \quad\mbox{for } k = 1, \dots, K.
\end{equation}
Since $F_{\varepsilon|k}$ is continuous, Brouwer's fixed point theorem ensures that an equilibrium exists.

\subsection{Equilibrium Uniqueness and Local Stability}\label{s2.4}

\noindent
I now examine two key equilibrium properties of the model: uniqueness and local stability. Uniqueness asserts that, for a fixed set of parameters, there is one solution to the equilibrium system. Local stability concerns whether iteration on agents' best responses converges to an equilibrium. Formally, $m$ is \textit{locally stable} if it is a limiting solution to the dynamical system $m^k_t = F_{\varepsilon|k}(h_k + \sum_{\ell = 1}^K J_{k\ell} m^{\ell}_{t-1})$ for $k = 1, \scalebox{0.95}{$\dots$}, K$, where the starting value $m_0$ lies within some sufficiently small neighborhood of $m$. Meanwhile, $m$ is \textit{unstable} if starting values $m_0$ arbitrarily close to $m$ can produce an iterate $m_t$ outside some fixed neighborhood of $m$.

Both properties are fundamental to comparative statics. If a locally stable equilibrium is perturbed slightly, then behavioral outcomes would return to that equilibrium. Meanwhile, if an equilibrium is unstable, then nearby outcomes would diverge from it. In settings with multiple equilibria, a locally stable equilibrium is one that a researcher observes in practice, while an unstable equilibrium acts as a tipping point between stable equilibrium outcomes.

\subsubsection{Settings with a Unique Locally Stable Equilibrium}\label{s2.4.1}

\noindent
I begin by establishing conditions for the equilibrium to be unique and locally stable. Under the generalized interactions model, these conditions depend on both the network structure and the strength of social interactions relative to private preferences. I find that a unique and locally stable equilibrium exists when network ties are sparse, limiting behavioral feedback, or when interaction effects are weak, so private payoffs tend to dominate individual utility.

To obtain these conditions, I first derive the following property, which shows that, generi-\\cally, multiple equilibria always include at least one unstable equilibrium. Hence, to estab-\\lish uniqueness and local stability, it suffices to show that every equilibrium is locally stable.

\begin{property}
	For almost all distributions $\{F_{\varepsilon|k}\}_k$, the model has a finite, odd number of equilibria. If $d_s$ of these equilibria are locally stable, then at least $d_s-1$ must be unstable.
\end{property}

\noindent
To evaluate equilibrium stability, I examine the \textit{Jacobian matrix} of the equilibrium system:
\begin{equation}\label{eq:6}
	\mathbf{D}(m) = \boldsymbol{\beta}(m)\mathbf{J}.
\end{equation}
Here, $\boldsymbol{\beta}(m)$ is a diagonal matrix whose $k$th entry is $f_{\varepsilon|k}\big(h_k + \sum_{\ell = 1}^K J_{k\ell}m^{\ell}\big)$, the density of \\ type-$k$ agents at the decision threshold. Thus, the Jacobian weights each row of $\mathbf{J}$ by the responsiveness of the corresponding agent type. Fixing $\mathbf{J}$, choice probabilities respond more strongly to changes in others' behavior when a greater share of agents are near indifference.

Equilibrium stability is governed by the \textit{spectral radius} of the Jacobian matrix, defined as the largest of its eigenvalues in absolute value:\ $\rho(\scalebox{0.95}{$\mathbf{D}(m)$}) \!=\! \max\{|\lambda|: \mbox{$\lambda$ eigenvalue of } \scalebox{0.95}{$\mathbf{D}(m)$}\}$. This object measures the overall strength of equilibrium feedback. A value below one imp-\\lies the equilibrium is locally stable; a value above one implies the equilibrium is unstable.\footnote{The setting $\rho(\mathbf{D}(m)) = 1$ represents a knife-edge case that occurs with probability zero and has no particular economic relevance in the model. In such cases, equilibrium stability depends on higher-order derivatives of (\ref{eq:5}).}

\begin{lemma}
	If $\rho(\scalebox{0.95}{$\mathbf{D}(m)$}) < 1$, then $m$ is locally stable; if $\rho(\scalebox{0.95}{$\mathbf{D}(m)$}) > 1$, then $m$ is unstable. 
\end{lemma}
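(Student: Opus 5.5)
This is the standard linearization argument for discrete-time dynamical systems. Define the best-response map $G:[0,1]^K \to [0,1]^K$ by $G_k(x) = F_{\varepsilon|k}\big(h_k + \sum_{\ell} J_{k\ell}x^\ell\big)$. Since each $F_{\varepsilon|k}$ is smooth, $G$ is $C^1$ (indeed $C^2$). By the chain rule, its Jacobian at $x$ is $\boldsymbol{\beta}(x)\mathbf{J}$, which at an equilibrium $m=G(m)$ is exactly $\mathbf{D}(m)$ in (\ref{eq:6}). Writing $e_t = m_t - m$, Taylor's theorem gives
\begin{equation*}
e_t = \mathbf{D}(m)e_{t-1} + r(e_{t-1}), \qquad \|r(e)\| \le C\|e\|^2
\end{equation*}
on a neighborhood of $m$ (or $\|r(e)\| = o(\|e\|)$ if we only use $C^1$).

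\textbf{Stable case.} Suppose $\rho(\mathbf{D}(m))<1$ and fix $\epsilon>0$ with $\rho+2\epsilon<1$. By the standard lemma (via the Jordan form, rescaling the off-diagonal entries), there is a vector norm $\|\cdot\|_*$ whose induced matrix norm satisfies $\|\mathbf{D}(m)\|_* \le \rho+\epsilon$. Choose $\delta$ so that $\|r(e)\|_*\le \epsilon\|e\|_*$ whenever $\|e\|_*<\delta$. Then $\|e_t\|_*\le(\rho+2\epsilon)\|e_{t-1}\|_*$. By induction the iterates stay in the $\delta$-ball and converge geometrically to $0$, so $m$ is locally stable.

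\textbf{Unstable case.} Suppose $\rho>1$. This is the main obstacle, since the iterates must be shown to leave a fixed neighborhood and not merely fail to converge. I would split $\mathbb{R}^K$ (complexifying, then taking real parts, or working with the real Jordan form) into the invariant subspaces $E_u$ and $E_s$. Here $E_u$ is spanned by the generalized eigenvectors whose eigenvalues have modulus at least some $\mu>1$, and $E_s$ corresponds to eigenvalues of modulus below $\mu$. I choose $\mu$ strictly between $1$ and $\rho$ so that no eigenvalue has modulus exactly $\mu$.

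Using adapted norms, I arrange that $\|\mathbf{D}v\|\ge(\mu'+\epsilon')\|v\|$ on $E_u$ and $\|\mathbf{D}w\|\le(\mu'-\epsilon')\|w\|$ on $E_s$, for suitable constants with $\mu'>1$. I then consider the cone $\{\|e_s\|\le\|e_u\|\}$. For $\delta$ small, the nonlinear remainder is dominated, so the cone is forward invariant and $\|e_{u,t}\|\ge \mu''\|e_{u,t-1}\|$ with $\mu''>1$, as long as the iterates remain in the $\delta$-ball.

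Now start from any $m_0 = m+e_0$ with $e_0\in E_u$ arbitrarily small. The unstable component grows geometrically, so some iterate must exit the $\delta$-ball, which gives instability in the stated sense. One further detail is that $[0,1]^K$ is the state space. This is harmless because $G$ maps $\mathbb{R}^K$ into $(0,1)^K$, and the equilibrium $m$ lies in the interior $(0,1)^K$ since $F$ is strictly increasing. Hence small perturbations are admissible starting values.
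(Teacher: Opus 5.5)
Your proof is correct. The stable half matches the paper's. The paper also uses the Horn--Johnson construction of a norm with $\|\mathbf{D}(m)\|\le\rho(\mathbf{D}(m))+\epsilon<1$ and concludes that the best-response map is a local contraction. You are somewhat more careful: you take that norm to be induced by a vector norm, which the step $\|m_t-m\|\le\kappa\|m_{t-1}-m\|$ needs, and you control the $o(\|e\|)$ remainder explicitly.

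The unstable half is where the routes genuinely differ. The paper notes that $\rho(\mathbf{D}(m))>1$ forces $\|\mathbf{D}(m)\|>1$ in every matrix norm, then cites Henry (1981, Thm.\ 5.1.5), a linearized-instability theorem for $C^1$ maps, for the escape conclusion. The norm observation by itself does not produce escape from a neighborhood, so the citation carries the argument. You instead prove that result directly in $\mathbb{R}^K$. You split at a modulus $\mu\in(1,\rho)$ attained by no eigenvalue and use adapted norms on $E_u$ and $E_s$ with a spectral gap around $\mu$. You then show that the cone $\{\|e_s\|\le\|e_u\|\}$ is forward invariant, with geometric growth of $\|e_u\|$ while iterates remain in the $\delta$-ball.

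The paper's version is shorter and leans on a general theorem. Yours is self-contained and needs only linear algebra and a first-order Taylor expansion. It also identifies an explicit escaping set: every nonzero start in the cone, in particular in $E_u$. Finally, it checks that such starting values are admissible because $m\in(0,1)^K$, which the paper leaves implicit.
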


\noindent
Together, these results establish that the equilibrium is unique and locally stable whenever $\rho(\mathbf D(m))<1$ at all equilibria $m$. The following examples illustrate two ways this can occur.

\textit{Example 1 (One-Way Interactions).} Suppose social influence flows only in one direction, so that all feedback effects of the form $J_{k j_1} J_{j_1 j_2} \cdots J_{j_M k}$ equal zero. This structure arises in hierarchies or information networks, where influence flows from upstream to downstream agents but never returns to its source. In this case, behavior cannot become self-reinforcing. The model implies $\rho(\scalebox{0.95}{$\mathbf{D}(m)$}) = 0$ for all $m$, so the equilibrium is unique and locally stable.

\textit{Example 2 (Weak Interactions).} Suppose the interaction effects are weak relative to private payoffs.\ Specifically, for any type $k$, let $\sum_{\ell = 1}^K |J_{k\ell}| < S_k$, where $S_k$ is a threshold that depends on deterministic private payoffs and the distribution of idiosyncratic tastes.\footnote{In a model in which $f_{\varepsilon|k}$ is unimodal and $h_k = 0$ for all types $k$, the threshold $S_k$ simplifies to $S_k \!=\! 1 / f_{\varepsilon|k}(0)$.} This condition places a bound on the overall amount of social influence exerted on each type. It implies that $\rho(\mathbf D(m))<1$ at every $m$, so the equilibrium must be unique and locally stable.

\subsubsection{Settings with Multiple Locally Stable Equilibria}

\noindent
I next examine when the model has multiple locally stable equilibria. This question is well-understood in the classical model with uniformly positive interaction effects ($J_{k\ell} \geq 0$ for all \\[-0.75cm]

\noindent
$k$ and $\ell$). In this case, whenever social interactions are strong enough to generate an unstable equilibrium, the model contains multiple equilibria, at least two of which are locally stable.\footnote{The microeconomic foundations for models of complementarity are largely developed by Donald Topkis (see \cite{to1998} for a summary), \cite{vi1990}, and \cite{miro1990}. This work uses key results from lattice theory, e.g., Tarski's fixed point theorem, to study equilibrium behavior under uniform complementarities.}

This result does not generally extend to settings with negative interactions. Indeed, models with negative interactions may have a unique but unstable equilibrium, such that agents fail to converge on a single set of choices. To see why, suppose that $J_{kk} < 0$, so agents of type $k$ prefer to differentiate from their own type. When $-J_{kk}$ is sufficiently large, every equilibrium is unstable: if most of type $k$ selects the high action, its members would rather \\ switch to the low action, and vice versa. Similar instability can arise from conflicting cross-type interactions. For example, if $J_{k\ell} > 0$ and $J_{\ell k} < 0$, then type $k$ seeks to conform to type $\ell$, while type $\ell$ seeks to differentiate from type $k$. When these effects are sufficiently strong, each type's response can undermine the other's, ruling out any locally stable equilibrium.\footnote{This setting is strategically similar to a matching pennies game, which has no pure strategy Nash equilibrium. As another example, if $\mathbf{J}$ is symmetric and its eigenvalues have nonpositive real parts, then there is only one equ-\\ilibrium, which is locally stable when $\rho(\mathbf{D}(m)) \!<\! 1$ and unstable when $\rho(\mathbf{D}(m)) \!>\! 1$; see Appendix Lemma A.3.}

The contribution of my analysis is to pinpoint the source of global instability in models with negative interactions, and use this insight to classify which models can sustain locally stable equilibria, even with strong spillovers.\ The examples above suggest that global insta-\\bility occurs when agents strongly wish to deviate from other members of their own type.\ In that case, agents never settle on a choice, as they are always dissatisfied with the outcome.

To rule out global instability, it is crucial that social interactions do not lead to self-contr-\\adictory preferences. Specifically, after accounting for all feedback effects $\scalebox{0.95}{$J_{k j_1}J_{j_1 j_2}\cdots J_{j_M k}$}$, agents should not be strongly repelled by their own type. I formalize this condition below:

\begin{assumptionA}
	There is an invertible matrix $\mathbf{B}$ for which $\mathbf{B}\mathbf{D}(\cdot)\mathbf{B}^{-1} \geq 0$ entrywise.
\end{assumptionA}

\noindent
Assumption A states that,\hspace{-0.02cm} under a change of basis,\hspace{-0.02cm} $\mathbf{D}$ can be transformed into a nonnegative matrix. Through this transformation, I can reinterpret the underlying network as one where all interaction effects are positive, such that all agents conform to one another. This reinterpretation preserves the stability of equilibria, as determined by the spectral properties of $\mathbf{D}$. Leveraging this relationship, I can extend the stability properties of models with uniformly positive interactions to models that have negative interactions, yielding the following result:

\begin{property}
	Suppose Assumption A holds. Then, for almost all distributions $\{F_{\varepsilon|k}\}_k$, multiple locally stable equilibria exist if and only if the model has an unstable equilibrium.
\end{property}

\noindent
Property 2 is instrumental in the analysis. It shows that, under Assumption A, a locally sta-\\ble equilibrium generically exists, and multiple locally stable equilibria exist if interaction effects are strong enough to produce an unstable equilibrium.\ Assumption A holds in many economically relevant settings.\ I provide two examples below, with more in Appendix A.1. 

\textit{Example 1 (No Negative Feedback)}. Suppose all feedback effects are nonnegative, such that $J_{kj_1}J_{j_1j_2}\cdots J_{j_M k} \geq 0$ for every feedback path.\footnote{This condition represents a special case of Assumption A in which the change-of-basis matrix $\mathbf{B}$ is diagonal. \\ Specifically, for irreducible $\mathbf{J}$, it holds if and only if $\mathbf{B}\mathbf{D}(\cdot)\mathbf{B}^{-1}$ is nonnegative for some invertible, diagonal $\mathbf{B}$.} This condition holds in many familiar network structures. In a tree, it holds when all reciprocal interactions share the same sign: $J_{k\ell}, J_{\ell k} \geq 0$ or $J_{k\ell}, J_{\ell k} \leq 0$. In a circle, it holds when the network contains an even number of negative links. In a bipartite network, it holds when all interaction effects are either non\-negative or nonpositive. In a complete network, it holds when $\sgn(J_{km} J_{m\ell}) \!=\! \sgn(J_{k\ell})$ for every $k$, $\ell$, and $m$, so that complementarity and substitutability are transitive across types.

\textit{Example 2 (Dominant Own-type Conformity)}. Suppose $\mathbf{D}$ is symmetric and \textit{diagonally dominant}, so $D_{kk} \geq \sum_{\ell\neq k}|D_{k\ell}|$ for any $k$. In this setting, each type's response to its own average behavior is at least as strong as its combined absolute responses to the other types.

These examples show that Assumption A allows for negative interactions across a variety \\[-0.75cm]

\noindent
of network structures. It even permits negative feedback if the pressure to conform to one’s own type outweighs the influence of other types, as in Example 2. The key restriction is that \\[-0.75cm]

\noindent
the accumulation of feedback effects does not lead agents to be repelled by their own type.\footnote{Specifically, as shown in Appendix Lemma A.6, Assumption A requires the responsiveness-weighted sum of feedback effects $J_{kj_1}J_{j_1j_2}\cdots J_{j_Mk}$ to be nonnegative, although the individual feedback effects can be negative.}

\subsection{Comparisons of Social Welfare across Types}\label{s2.5}

\noindent
I conclude this section by characterizing social welfare across different equilibria.\hspace{-0.02cm} Note that, in this model, there is no strict Pareto ranking over equilibria because extreme realizations of the idiosyncratic taste $\varepsilon_i$ can dominate an agent's utility. Therefore, to measure welfare, \\ I analyze the expected utility over $\varepsilon_i$ for each agent type $k$ evaluated at any equilibrium $m$.
\begin{equation}\label{eq:7}
	\mathcal{W}_k(m) = \mean\bigg(\max_{\omega_i}\Big\{h_k\omega_i + \eta_k + \sum_{\ell = 1}^K J_{k\ell}\omega_im^{\ell} + \varepsilon_i\omega_i + \xi_i\Big\}\bigg).
\end{equation}
To facilitate welfare comparisons, I rescale choices so that $\omega_i \in \{-1, 1\}$. This change does not impact behavioral outcomes; yet, it implies that, without private incentives ($h_k = \varepsilon_i = 0$ for all $i, k$), individual utility is the same whether everyone selects the high action or everyone selects the low action. Therefore, this rescaling ensures there is no negative externality \\inherent to an equilibrium when all agents are fully ambivalent between the two choices.\footnote{Welfare analysis in network-based models is especially sensitive to how utility is specified, even if different specifications yield the same choices. The reason is that agents' payoffs depend on underlying drivers of spillover effects, such as conformity versus social learning, not just the effects themselves; see \cite{bhduka2023}.}

The next property characterizes how social welfare outcomes relate to specific patterns of individual behavior. It shows that, for each type, social welfare is highest at the equilibrium in which agents are most unified in their choices. Also, if agents are privately biased toward one specific action, social welfare tends to be highest when most agents choose that action.

\begin{property}
	Let $\mathcal{M}$ denote the set of equilibria, and define $m^* = \argmax_{m \in \mathcal{M}}\mathcal{W}_k(m)$. Then $m^* = \argmax_{m \in \mathcal{M}}|\mean(\bar{\omega}^k)|$. Moreover, there always exists a threshold $T_k \in \mathbb{R}$ where:
	$$\small m^* =
	\begin{cases}
		\argmax_{m \in \mathcal{M}} \mean(\bar{\omega}^k), & \text{if } h_k > T_k \\[4pt]
		\argmin_{m \in \mathcal{M}} \mean(\bar{\omega}^k), & \text{if } h_k < T_k.
	\end{cases}$$
\end{property}

\noindent
In the presence of multiple equilibria, a key consideration for policy is how social welfare rankings differ for different types of agents. Namely, is there a single favorable equilibrium that maximizes welfare for all types, or are there tradeoffs such that improving welfare for one type would always reduce it for another? This question matters for whether policy can resolve coordination failures by shifting an economy toward a more favorable equilibrium.

In the classical model where all the interaction effects are positive, the set of equilibria is well-ordered, forming a complete lattice. From a welfare perspective, this property implies that there is always one equilibrium that maximizes social welfare for all types, provided that all types are privately biased toward the same action.\footnote{Specifically, if $h_k > 0$ for all $k$, then $\argmax_{m \in \mathcal{M}} \mathcal{W}_k(m) = \argmax_{m \in \mathcal{M}} \mathcal{W}_\ell(m)$ for any types $k$ and $\ell$.}\ As the next result demonstrates, this relationship does not generally apply to environments with negative interaction effects.

\begin{property}
	Suppose $\mathbf{J}$ is irreducible with no negative feedback. For any types $k$ and $\ell$:\\[-0.55cm]
	\begin{enumerate}[label=(\roman*)]
		\item If $J_{k \ell} > 0$, then $\scalebox{0.95}{$\mean(\bar{\omega}^k)$}$ and $\scalebox{0.95}{$\mean(\bar{\omega}^\ell)$}$ are maximized (minimized) at the same equilibrium.
		\item If $J_{k \ell} < 0$, then $\scalebox{0.95}{$\mean(\bar{\omega}^k)$}$ is maximized at the same equilibrium where $\scalebox{0.95}{$\mean(\bar{\omega}^\ell)$}$ is minimized.
	\end{enumerate}
\end{property}

\noindent
Property 4 reveals how negative interactions can introduce tradeoffs in improving welfare. For example, consider two types of agents, $k$ and $\ell$, each deciding whether to take a medical treatment, such as a vaccine. Suppose both types, in isolation, would benefit from treatment ($h_k, h_\ell > 0$), but each one prefers to differentiate from the other type ($J_{k\ell}, J_{\ell k} < 0$). If these interaction effects are strong enough, multiple equilibria exist:\ in one, a majority of type $k$ is treated; in another, a majority of type $\ell$ is treated. Each type prefers the equilibrium where a majority of its members are treated, but both cannot achieve this outcome simultaneously.

In this setting, neither equilibrium outcome is broadly favorable. Hence, it is not obvious when a policy intervention can achieve net welfare gain. Any attempt to increase treatment uptake among one type imposes a negative externality by reducing uptake among the other.

\section{Identification and Estimation of Endogenous Interaction Effects}\label{s3}

\noindent
I now show how to recover features of the interaction matrix $\mathbf{J}$ using individual choice data. The analysis proceeds in two steps. First, I prove identification when the expected choices $\mean(\omega_i|k)$ are observed. This allows me to focus on separating endogenous interactions from unobserved network effects. Second, I extend the analysis to the case where $\mean(\omega_i|k)$ is unobserved, but can be approximated by the average choice $\bar\omega^k$ in a finite network. I show how to construct internal instruments to correct for the resulting measurement error bias. I then derive a linear IV estimator for the interaction effects, which performs well in simulations.

\subsection{Econometric Specification}\label{s3.1}

\noindent
Consider a setting with many networks $n = 1, \dots, N$, where agents interact only with others in their network. In practice, these networks could be schools, workplaces, neighborhoods, or villages. Suppose that the researcher has data on individual choices $\omega_i$, either for the entire population or for a random sample of agents drawn from each network. In addition, the researcher knows the network $n$ to which each agent belongs, as well as each agent's type $k$. \\[-0.55cm]

I allow agents' private utilities to vary systematically across networks, replacing the det-\\erministic private payoff $h_k$ with $h_k + \alpha_n$. Hence, for any agent $i$ of type $k$ in network $n$, private utility depends on individual-specific factors $\varepsilon_i$, type-specific factors $h_k$, and conte-\\xtual network effects $\alpha_n$, all of which are unobserved by the researcher. Agent $i$'s action is:
\begin{equation}\label{eq:8}
	\omega_i = \mathbbm{1}\bigg\{h_k + \alpha_n + J_{kk} \mean_i(\bar\omega_{n, -i}^k) + \sum_{\ell \neq k} J_{k\ell} \mean_i(\bar\omega_n^\ell) + \varepsilon_i \geq 0\bigg\},
\end{equation}
where $h_k$, $\alpha_n$, $\varepsilon_i$, and $\{J_{k\ell}\}_{k, \ell}$ are unknown parameters. The model may also be extended to include observed covariates or prices at the agent- or network-level;\hspace{-0.03cm} however, as these do \\ not meaningfully impact the results, I omit them for now and give details in Appendix B.1.

I maintain two assumptions about the error structure. First, I assume the random payoffs $\varepsilon_i$ and $\varepsilon_j$ are independent for any agents $i$ and $j$ within and across networks. Hence, there is no covariation in the error terms. Second, I assume that, given an agent's type $k$, the error term $\varepsilon_i$ is independent of $\alpha_n$. This condition rules out self-selection into networks based on unobserved idiosyncratic payoffs.\footnote{This restriction is stronger than needed for identification. As shown by \cite{ma1988} and also discussed by \cite{ho2009}, full conditional independence can be replaced by a weaker quantile independence assumption.} Both of these assumptions are standard in the literature on social interactions, and they can also be relaxed to allow for selection on observables.\footnote{Let $\pr(\varepsilon_i \leq z|W_n, \alpha_n, k) = F_{\varepsilon|W_n, k}(z)$, where $W_n$ is an observable network characteristic. Even if $F_{\varepsilon|W_n, k}$ does not equal $F_{\varepsilon|k}$, the identification arguments presented still follow by comparing networks with the same $W_n$.}

\begin{assumptionB1}
	(i) $\{\varepsilon_i\}_i$ are mutually independent; (ii) $\pr(\varepsilon_i \leq z|k, \alpha_n) = F_{\varepsilon|k}(z)$.
\end{assumptionB1}

\noindent
As before, all agents act with incomplete information and consistent beliefs. In equilibrium:
\begin{equation}\label{eq:9}
	m^{k}_n = F_{\varepsilon|k}\bigg(h_k + \alpha_n + \sum_{\ell = 1}^K J_{k\ell}m^{\ell}_n\bigg), \quad\mbox{for } k = 1, \dots, K \mbox{ and } n = 1, \dots, N.
\end{equation}
As shown in Section 2, sufficiently strong interaction effects can generate multiple equilibria. In such cases, I assume that agents correctly anticipate which equilibrium is realized in their network.\footnote{This assumption can be rationalized by noting that, whenever the realized equilibrium is locally stable, agents can compute it independently through fixed-point iteration on (\ref{eq:9}), as long as they begin with a common prior $m_0$.} Equilibrium multiplicity does not, by itself, interfere with identification: a \\ given parameter vector may admit multiple equilibria even while the observed choice dist-\\ribution uniquely determines the parameters of interest. My identification analysis uses the restrictions that the equilibrium conditions (\ref{eq:9}) impose, without ever directly solving for eq-\\uilibrium outcomes.\ In doing so, I demonstrate that the parameters of interest are identified and may be consistently estimated regardless of which equilibrium is realized in the data.\footnote{This contrasts with the two-step approaches used in the empirical industrial organization literature, where id-\\entification relies on computing equilibrium conditions. See \cite{bhduka2023}, Section 5.5, for discussion.}

In most cases, it is natural to assume that the equilibrium is locally stable (see Section 2.4 for conditions); otherwise, it is unlikely to be realized in practice. Yet, although stability can matter for interpretation, the identification analysis does not explicitly rely on this property.

\subsection{Identification with Known Expected Average Choices}\label{s3.2}

\noindent
Suppose that researchers observe the expected average choices $m^k_n$. Then, the main barrier to identification is the unobserved network effect $\alpha_n$, which obscures the effect of social in-\\[-0.75cm]

\noindent
teractions on group-level outcomes. This issue is noted by \cite{brdu2007}, Prop.\ 2, who show that network unobservables prevent point identification of interaction effects.

I propose a partial identification strategy to recover key features of $\mathbf{J}$ without placing any \\ added assumptions on $\alpha_n$. This method involves differencing-out $\alpha_n$ among different types of agents in the same network. The intuition behind this strategy is captured in the following lemma. I give the proof alongside the result so the approach can be more clearly interpreted.

\begin{lemma}[Sufficiency Claim]
	$\mean(\omega_i|k, \alpha_n, \{m^{\ell}_n\}_{\ell}) = \mean(\omega_i|k, \{m^{\ell}_n\}_{\ell})$ for all $n$ and $k$.
\end{lemma}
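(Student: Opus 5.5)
The idea is that conditioning on the equilibrium vector $\{m^\ell_n\}_\ell$ already pins down $\alpha_n$, so conditioning on $\alpha_n$ as well adds nothing. I will therefore show that, given $k$ and $\{m^\ell_n\}_\ell$, the network effect $\alpha_n$ is a deterministic function of these quantities. The conditional expectation is then the same whether or not $\alpha_n$ is included.

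The first step is to compute the left-hand side directly. By (\ref{eq:8}), consistent beliefs, and the fact that under incomplete information agent $i$'s belief about $\bar\omega^\ell_n$ (or $\bar\omega^k_{n,-i}$) equals $m^\ell_n$, the choice is $\omega_i=\mathbbm{1}\{h_k+\alpha_n+\sum_\ell J_{k\ell}m^\ell_n+\varepsilon_i\ge 0\}$. Assumption B.1(ii) says $\varepsilon_i$ is independent of $\alpha_n$ given $k$. The $m^\ell_n$ are equilibrium objects that depend only on $(\alpha_n,$ parameters$)$ and not on $\varepsilon_i$, so $\varepsilon_i$ stays independent of the conditioning set. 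Using symmetry of $F_{\varepsilon|k}$, this gives
\[
\mean(\omega_i|k,\alpha_n,\{m^\ell_n\}_\ell)=F_{\varepsilon|k}\Big(h_k+\alpha_n+\sum_{\ell=1}^K J_{k\ell}m^\ell_n\Big).
\]

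The second step is to apply the equilibrium condition (\ref{eq:9}). The right-hand side above is exactly $m^k_n$. So
\[
\mean(\omega_i|k,\alpha_n,\{m^\ell_n\}_\ell)=m^k_n,
\]
which is a function of $(k,\{m^\ell_n\}_\ell)$ alone. Equivalently, since $F_{\varepsilon|k}$ is strictly increasing and hence invertible, we can solve $\alpha_n=F_{\varepsilon|k}^{-1}(m^k_n)-h_k-\sum_\ell J_{k\ell}m^\ell_n$, so $\alpha_n$ is measurable with respect to $(k,\{m^\ell_n\}_\ell)$. The law of iterated expectations then gives $\mean(\omega_i|k,\{m^\ell_n\}_\ell)=\mean\big[\mean(\omega_i|k,\alpha_n,\{m^\ell_n\})\,\big|\,k,\{m^\ell_n\}\big]=m^k_n$, and the two sides agree.

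The step needing the most care is the independence claim in the first step, because $\{m^\ell_n\}$ is endogenous and, with multiple equilibria, the selected equilibrium could in principle depend on something correlated with $\varepsilon_i$. I will handle this with the maintained assumption that agents coordinate on an equilibrium determined at the network level, a function of $\alpha_n$ and possibly a selection variable independent of the individual tastes. Under that assumption the conditioning set is independent of $\varepsilon_i$ given $k$ by B.1, and the rest of the argument is routine.
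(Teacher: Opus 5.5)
Your argument is correct, but it takes a different and more degenerate route than the paper. You use the equilibrium condition of type $k$ itself, so the conditional expectation collapses immediately to $F_{\varepsilon|k}\big(h_k+\alpha_n+\sum_\ell J_{k\ell}m^\ell_n\big)=m^k_n$. This proves the literal equality, and it does so even when $K=1$.

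The paper instead fixes a \emph{different} type $\tilde k\neq k$ in the same network. It solves type $\tilde k$'s equilibrium equation for $\alpha_n$ and substitutes this into type $k$'s choice probability, obtaining
\[
\mean(\omega_i|k,\alpha_n,\{m^\ell_n\}_\ell)=F_{\varepsilon|k}\Big(h_k-h_{\tilde k}+\sum_{\ell=1}^K (J_{k\ell}-J_{\tilde k\ell})m^\ell_n+F^{-1}_{\varepsilon|\tilde k}(m^{\tilde k}_n)\Big).
\]
Your version is the special case $\tilde k=k$, where everything cancels into a tautology.

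The paper's choice buys real content. The sufficient function of $\{m^\ell_n\}_\ell$ depends on the parameters only through the differences $h_k-h_{\tilde k}$ and $J_{k\ell}-J_{\tilde k\ell}$, with $\alpha_n$ differenced out. This is exactly the identifying equation used in Theorem 1, together with the rank condition B.2. It also underlies the IV estimand based on $Y^{k_1}_n-Y^{k_2}_n$ in Theorem 3. Your representation $m^k_n$ carries no information about $\mathbf J$, so it establishes the lemma but not the role the lemma plays in the paper.

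On the other hand, you explicitly address why $\varepsilon_i$ stays independent of the conditioning set when an equilibrium must be selected. The paper's proof leaves that point implicit, relying on its maintained assumption about equilibrium selection.
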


\noindent
\emph{Proof.} Fix any $\tilde k \neq k$. Since $F_{\varepsilon|\tilde k}$ is strictly increasing, equation (\ref{eq:9}) can be inverted so that:\\[-0.55cm]
\begin{equation}\label{eq:10} \small
	\alpha_n = F_{\varepsilon|\tilde k}^{-1}\big(m^{\tilde k}_n\big) - h_{\tilde k} - \sum_{\ell = 1}^K J_{\tilde k\ell}m^{\ell}_n.
\end{equation}
Substituting this expression into the definition of $\mean(\omega_i|k, \alpha_n, \{m^{\ell}_n\}_\ell)$ yields the equation:\\[-0.55cm]
\begin{equation}\label{eq:11} \small
	\mean(\omega_i|k, \alpha_n, \{m^{\ell}_n\}_\ell) = F_{\varepsilon|k}\bigg(h_{k} - h_{\tilde k} + \sum_{\ell = 1}^K (J_{k\ell} - J_{\tilde k\ell})m^{\ell}_n + F_{\varepsilon|\tilde k}^{-1}\big(m^{\tilde k}_n\big)\bigg).
\end{equation}
The network effect cancels. Thus, $\mean(\omega_i|k, \alpha_n, \{m^{\ell}_n\}_\ell)$ is a deterministic function of $\{m^{\ell}_n\}_\ell$. \\\qed

\noindent
This lemma shows how observable differences among agents in the same network may be used to control for contextual effects. Consider two agents $i$ and $j$ of different types, $k_1$ and $k_2$, who both belong to network $n$. Because they reside in the same network, all contextual factors affecting $i$ also apply to $j$. Any difference in $i$ and $j$'s choices is due to differences in type-specific preferences,\hspace{-0.01cm} $(\scalebox{0.95}{$h_{k_1}, \{J_{{k_1}\ell}\}_\ell$})$ and $(\scalebox{0.95}{$h_{k_2}, \{J_{{k_2}\ell}\}_\ell$})$,\hspace{-0.01cm} or idiosyncratic tastes,\hspace{-0.01cm} $\varepsilon_i$ and $\varepsilon_j$. \\[-0.55cm]

This framework offers a natural panel structure, allowing me to control for network-level \\ factors by comparing the outcomes of different agents in the same network. This approach is conceptually similar to the method of differencing-out incidental parameters in panel data models; see, for example, \cite{ch1980}.\footnote{\cite{bo2012} proposes a systematic approach to obtain moment restrictions that are free of fixed effects via functional differencing. Additionally, see \cite{boda2023} and \cite{da2025} for recent advances.} By removing $\alpha_n$, I achieve identification of the differences $h_{k_1} - h_{k_2}$ and $\{J_{k_1\ell} - J_{k_2\ell}\}_{\ell}$, which capture how agents differ in their private incentives and responses to others' behavior. At the same time, this approach leaves the levels of $h_k$ and $J_{k\ell}$, as well as the network effects themselves, unidentified, which, as \\ \cite{ch1984} discusses, may limit the range of probability statements one can make.

I present two versions of the identification result. First, I establish semiparametric identi-\\fication when the error distributions $F_{\varepsilon\mid k}$ are known. Second, I establish nonparametric ide-\\ntification when $F_{\varepsilon\mid k}$ are unknown. While the latter result is more flexible, it requires more variation in the data. In both versions, identification is achieved without $\alpha_n$ being observed.

\vspace{0.3cm}
\begin{flushleft}
	\underline{\textit{Conditions for Semiparametric Identification}}
\end{flushleft}

\noindent
Before stating the semiparametric identification result, I present the following assumption.

\begin{assumptionB2}
   	There is sufficient variation across networks for $\scalebox{0.95}{$\operatorname{rank}(\var(m_n)) \!=\! K$}$.
\end{assumptionB2}

\noindent
This rank condition guarantees that there is enough variation in $\alpha_n$ within the population to uncover the nonlinear relationship between $m_n$ and $\alpha_n$. This nonlinearity is a core feature of discrete choice models, and it marks a key distinction from linear simultaneous equations models, where identification is typically impeded by the \textit{reflection problem} \citep{ma1993}. In linear models, $m_n$ and $\alpha_n$ are linearly dependent. Thus, even if $\alpha_n$ is fully determined by observables, one cannot isolate social interactions without additional exclusion restrictions.

By contrast, the reflection problem does not arise in discrete choice settings. As shown by \cite{brdu2001, brdu2007}, discrete choices generate a nonlinear relationship between \\[-0.75cm]

\noindent
$m_n$ and $\alpha_n$,\hspace{-0.03cm} which acts as a built-in exclusion restriction.\hspace{-0.03cm} This nonlinearity allows me to iso-\\[-0.75cm]

\noindent
late endogenous interactions from contextual network effects.\hspace{-0.03cm} Neither continuous nor unbo-\\unded support of $\alpha_n$ is necessary. It suffices that the covariance matrix of $m_n$ has full rank. 

\begin{theorem}
	Suppose Assumptions B.1 and B.2 hold, and assume $m^{k}_n$ is observed for all networks $n$ and agent types $k$. If the distribution functions $\{F_{\varepsilon|k}\}_k$ are identified, then:
	\begin{itemize}
		\item[(i)] Without further assumptions, $\{h_{k_1} - h_{k_2}\}_{k_1, k_2}$ and $\{J_{k_1\ell} - J_{k_2\ell}\}_{k_1, k_2, \ell}$ are identified.
		\item[(ii)] If $\alpha_n = W_n'd$ for some observed vector $W_n$, then $d$, $\{h_k\}_k$, and $\{J_{k\ell}\}_{k, \ell}$ are identified.
	\end{itemize}
\end{theorem}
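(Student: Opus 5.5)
The plan is to turn the equilibrium conditions (\ref{eq:9}) into a linear system with observed regressors and then use Assumption B.2 to invert a second-moment matrix. Because $F_{\varepsilon|k}$ is strictly increasing and identified, the transformed outcome $Y^k_n \equiv F_{\varepsilon|k}^{-1}(m^k_n)$ is observed for every $n$ and $k$. Inverting (\ref{eq:9}) gives
\begin{equation*}
	Y^k_n = h_k + \alpha_n + \sum_{\ell=1}^K J_{k\ell} m^\ell_n, \quad k = 1, \dots, K.
\end{equation*}
This is a system of linear equations in the observed $m_n$. The only unobserved quantity is $\alpha_n$, and it enters every equation with the same unit coefficient.

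For part (i), I will follow the differencing idea in the proof of the Sufficiency Claim (Lemma 2). Fix two types $k_1, k_2$ and subtract their equations, so that $\alpha_n$ cancels:
\begin{equation*}
	Y^{k_1}_n - Y^{k_2}_n = (h_{k_1} - h_{k_2}) + \sum_{\ell=1}^K (J_{k_1\ell} - J_{k_2\ell}) m^\ell_n .
\end{equation*}
This holds exactly, with no error term, for every network $n$. Take covariances of both sides with $m_n$. This gives $\cov(m_n, Y^{k_1}_n - Y^{k_2}_n) = \var(m_n)\,\Delta J$, where $\Delta J$ is the vector $(J_{k_1\ell} - J_{k_2\ell})_\ell$. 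By Assumption B.2, $\var(m_n)$ is invertible, so $\Delta J = \var(m_n)^{-1}\cov(m_n, Y^{k_1}_n - Y^{k_2}_n)$ is identified. The intercept difference then follows from taking means: $h_{k_1} - h_{k_2} = \mean(Y^{k_1}_n - Y^{k_2}_n) - \mean(m_n)'\Delta J$. Everything on the right-hand side is a moment of observables.

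For part (ii), substitute $\alpha_n = W_n'd$ directly into the undifferenced equations, so each type gives $Y^k_n = h_k + W_n'd + m_n'J_{k\cdot}$. The unknowns for type $k$ are $(h_k, d, J_{k\cdot})$, and they are recovered by a population least-squares projection of $Y^k_n$ on $(1, W_n, m_n)$. This requires the second-moment matrix of $(1, W_n, m_n)$ to be nonsingular. Assumption B.2 alone does not guarantee this, so I will need a complementary condition, namely that the variance of $(W_n, m_n)$ has full rank.

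I expect this rank condition to be the main obstacle. One might worry that $m_n$ could be an affine function of $W_n$, which would bring back the reflection problem. I would argue against this using the nonlinearity of $F_{\varepsilon|k}$: the map from $W_n'd$ to $m_n$ is nonlinear, so $m_n$ is not collinear with $W_n$ given sufficient variation in $W_n$. To state the theorem carefully, though, I would still impose full rank of $\var(W_n, m_n)$ explicitly, as the natural extension of B.2, and note that it plays the role of the built-in exclusion restriction described above.

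Once $d$ is known, $\alpha_n$ is known, and each type-$k$ equation is a linear regression in $m_n$ with offset $W_n'd$. By B.2 this yields $J_{k\cdot}$ and $h_k$ separately for each $k$. As a consistency check, these levels are compatible with the differences identified in part (i).
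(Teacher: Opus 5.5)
Your proposal is correct and is essentially the paper's argument. The paper also eliminates $\alpha_n$ by inverting the type-$k_2$ equilibrium condition, then uses Assumption B.2 to rule out two distinct parameter vectors whose difference is an affine function of $m_n$ vanishing for every $n$; this is the observational-equivalence form of your explicit formula $\Delta J = \var(m_n)^{-1}\cov(m_n, Y_n^{k_1}-Y_n^{k_2})$. You are also right to require full rank of $\var[(W_n', m_n')']$ in part (ii), since B.2 alone can fail there (e.g., $K=1$ with binary $W_n$), and this is exactly how the paper's online appendix says B.2 should be read when $W_n$ is present.
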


\noindent
This theorem has two parts. Part \textit{(i)} draws on Lemma 2 to show that the model is partially identified even if $\alpha_n$ is unobserved. In this case, the differences $h_{k_1}-h_{k_2}$ and $\{J_{k_1\ell}-J_{k_2\ell}\}_\ell$ are identified for all $(k_1, k_2)$.\footnote{If the error distributions are known, then $J_{k\ell}$ is identified up-to-location by normalizing the own-type effects $J_{kk}$ to one. Similarly, private payoffs $h_{k}$ are identified up-to-location by setting $h_{k^*} = 0$ for a reference type $k^*$.}\ This result is new to the literature and is the theorem's main \\ contribution. Part \textit{(ii)} covers a special case, studied by \cite{brdu2001} and others, \\ where $\alpha_n$ is a fixed, linear function of observables.\hspace{-0.03cm} In that case, the model is point identified.

\vspace{0.3cm}
\begin{flushleft}
	\underline{\textit{Conditions for Nonparametric Identification}}
\end{flushleft}

\noindent
For nonparametric identification, I require an extra assumption: there must be an exogenous individual-level covariate with continuous, unbounded support. Variation in this covariate traces out the error distributions $F_{\varepsilon\mid k}$, which I then use to recover the remaining parameters.

To set ideas, I first modify the choice equation to include exogenous covariates, $X_i \in \mathbb{R}^r$. Let $\omega_i = \mathbbm{1}\{X_i'c + h_k + \alpha_n + \sum_{\ell = 1}^K J_{k\ell}m^{\ell}_n + \varepsilon_i \geq 0\}$, with $\pr(\varepsilon_i \leq z|X_i, k, \alpha_n) = F_{\varepsilon|k}(z)$ and $\pr(X_i \leq x|k, \alpha_n) = \pr(X_i \leq x|k)$. Suppose that $X_i$ satisfies the following assumption:

\begin{assumptionB3}
	For every $k$, $\mbox{supp}(X|k)$ is not contained in a proper affine subspace of $\mathbb{R}^r$; also, there is a component $x_j$ of $X$\textemdash with a nonzero coefficient $c_j$\textemdash such that, for almost all $x_{-j|k}$, the distribution of $x_{j|k}$ given $x_{-j|k}$ has positive density everywhere on $\mathbb{R}$.
\end{assumptionB3}

\noindent
This assumption, adapted from \cite{ma1988}, ensures that there is sufficient variation in at least one individual-level covariate to recover the error distributions $F_{\varepsilon|k}$ up-to-scale, even when the functional forms are unknown. With this assumption, I obtain the following result: \\[-0.55cm]

\begin{theorem}
	Suppose Assumptions B.1, B.2, and B.3 hold, and assume $m^{k}_n$ is observed for all networks $n$ and agent types $k$. Then $\{F_{\varepsilon|k}\}_{k}$ and $c$ are identified up-to-scale. Also:
	\begin{itemize}
		\item[(i)] Assuming nothing else, $(\scalebox{0.95}{$\{h_{k_1} - h_{k_2}\}_{k_1, k_2}, \{J_{k_1\ell} - J_{k_2\ell}\}_{k_1, k_2, \ell}$})$ is identified up-to-scale.
		\item[(ii)] If $\alpha_n = W_n'd$ for some observed $W_n$, then $(\scalebox{0.95}{$d, \{h_k\}_k, \{J_{k\ell}\}_{k, \ell}$})$ is identified up-to-scale.
	\end{itemize}
\end{theorem}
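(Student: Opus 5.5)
The plan is to identify the error distributions and $c$ in a first stage by adapting Manski's (1988) argument type by type, and then to repeat the argument of Theorem 1 using these recovered objects.

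\textbf{First stage.} Fix a type $k$ and a network $n$. Conditional on $(k,\alpha_n,\{m^\ell_n\}_\ell)$, the choice model is a single-index binary response
\[
\pr(\omega_i=1\mid X_i,k,n)=F_{\varepsilon|k}\big(X_i'c+\mu_{kn}\big),
\qquad
\mu_{kn}=h_k+\alpha_n+\sum_{\ell=1}^K J_{k\ell}m^\ell_n .
\]
The index $\mu_{kn}$ is a network-by-type constant. Conditioning on $n$ is legitimate because, under Assumption B.1 and the stated independence of $X_i$ from $\alpha_n$ given $k$, the pair $(X_i,\varepsilon_i)$ is independent of network-level quantities given $k$.

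Within such a cell, Assumption B.3 is exactly Manski's condition: full-rank support of $X$, together with a component $x_j$ with $c_j\neq 0$ and everywhere-positive conditional density. Since $F_{\varepsilon|k}$ is strictly increasing and symmetric about zero, it has median zero. Hence $\operatorname{sgn}(\pr(\omega_i=1\mid X_i)-1/2)=\operatorname{sgn}(X_i'c+\mu_{kn})$, and the standard argument identifies $(c,\mu_{kn})$ up to scale.

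Normalize $|c_j|=1$; its sign is identified. The large-support regressor $x_j$ then sweeps the index over all of $\mathbb{R}$, so $F_{\varepsilon|k}$ is traced out pointwise on the corresponding scale. This is done for every $k$, using a common scale across types because $c$ is shared. The result is that $\{F_{\varepsilon|k}\}_k$ and $c$ are identified up to a common scale factor.

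\textbf{Second stage.} With $F_{\varepsilon|k}$ and $c$ known up to scale, we are in the setting of Theorem 1. The sufficiency argument of Lemma 2 goes through with $X_i'c$ added. Equivalently, $F^{-1}_{\varepsilon|k}$ applied to the within-cell choice probability at $X_i=x$, minus $x'c$, equals $\mu_{kn}$ up to scale; the first stage already delivers $\mu_{kn}$ directly.

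Differencing across types $k$ and $\tilde k$ in the same network gives
\[
\mu_{kn}-\mu_{\tilde k n}=(h_k-h_{\tilde k})+\sum_{\ell=1}^K (J_{k\ell}-J_{\tilde k\ell})\,m^\ell_n ,
\]
which is a linear equation in the observed $m_n$ with $\alpha_n$ removed. By Assumption B.2, $\var(m_n)$ has full rank, so a population regression of $\mu_{kn}-\mu_{\tilde k n}$ on $(1,m_n)$ identifies $h_k-h_{\tilde k}$ and $\{J_{k\ell}-J_{\tilde k\ell}\}_\ell$ uniquely, up to the common scale. This proves (i).

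For (ii), substitute $\alpha_n=W_n'd$, so that
\[
\mu_{kn}=h_k+W_n'd+\sum_{\ell=1}^K J_{k\ell}m^\ell_n .
\]
Regressing $\mu_{kn}$ on $(1,W_n,m_n)$ type by type recovers $h_k$, $d$, and $\{J_{k\ell}\}_\ell$ up to scale. This mirrors part (ii) of Theorem 1, and I would invoke that result directly, including whatever rank condition on $(W_n,m_n)$ its proof uses.

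\textbf{Main obstacle.} The delicate step is the first stage. One must make sure conditioning on a network is valid, since $\mu_{kn}$ depends on endogenous $m_n$. One must also check that the scale normalization is common across types, so that differences $J_{k\ell}-J_{\tilde k\ell}$ are comparable and not distorted by type-specific scales. This is why the common $c$ and the single normalization $|c_j|=1$ are essential, and I would state the "up-to-scale" conclusion relative to that one normalization.
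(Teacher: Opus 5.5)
Your proposal is correct and follows essentially the paper's route. It applies Manski's (1988) median-restriction/large-support argument within each type--network cell, using the common $c$ (normalized on the large-support component) to fix a single scale for $c$ and all $F_{\varepsilon|k}$. It then reruns the Theorem 1 differencing step with the rank condition in Assumption B.2. The only difference is cosmetic: you difference the recovered cell indices $\mu_{kn}$ directly, whereas the paper says the remainder follows as in Theorem 1, which inverts the covariate-integrated choice function; your version is a slightly cleaner way to handle the integral over $X$.
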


\subsection{Identification with Unknown Expected Average Choices}\label{s3.3}

\noindent
In the data, the researcher observes individual choices, but not the equilibrium expectations $m_n^k$ that enter agents' decisions.\ In particular, agents respond to the expected average beh-\\avior of each type, which need not equal the average choice $\bar\omega_n^k$ realized in a finite network.

This distinction between expected and realized behavior rests on substantive assumptions \\[-0.75cm]

\noindent
about the information structure. As discussed in Section 2.3, agents know their own private \\ idiosyncratic tastes, but do not observe those of their peers. When making decisions, they respond to their expectations of each type's average behavior, with beliefs that are assumed to be consistent in equilibrium. This framework is motivated by economic settings in which agents make choices with private information, without being able to perfectly anticipate ev-\\eryone else's contemporaneous actions. It is standard in the literature following \cite{brdu2001}, and it simplifies the identification analysis by yielding an equilibrium characterized in terms of choice probabilities.\footnote{See \cite{pa2013} for a careful review of how information assumptions affect identification of these models.} Nevertheless, this assumption need not have a principled basis in a given setting; I discuss its suitability for each application in Section 4.

For researchers, the assumed information structure creates a measurement problem:\ ag-\\ents respond to expectations that can only be measured imperfectly from observed choices. In particular, with finite networks, the model implies that $\bar\omega_n^k = m_n^k + u_n^k$, where $u_n^k$ denotes measurement error.\ If the researcher substitutes $\bar\omega_n^k$ for $m_n^k$ without accounting for this error, the preceding identification arguments no longer directly apply. To see why, rewrite (\ref{eq:8}) as:
\begin{equation}\label{eq:12}
	\omega_i = \mathbbm{1}\bigg\{h_k + \alpha_n + \sum_{\ell=1}^K J_{k\ell}\bar\omega_n^\ell + \tilde\varepsilon_i \geq 0\bigg\},
	\quad\mbox{where}\quad \tilde\varepsilon_i = \varepsilon_i - \sum_{\ell=1}^K J_{k\ell}u_n^\ell.
\end{equation}
In this equation, the observed average choices $\{\bar\omega^\ell_n\}_\ell$ are endogenous because they are correlated with the error term $\tilde\varepsilon_i$. Even if agent $i$'s choice $\omega_i$ were excluded when constructing $\bar\omega^k_n$, it would still be that $\cov(\bar\omega^\ell_n, \tilde\varepsilon_i) \!=\! -J_{k\ell} \!\times\! \var(u_n^\ell) \!\neq\! 0$ for any type $\ell$.\ Consequently, tre-\\[-0.75cm]

\noindent
ating realized averages as equilibrium expectations can bias estimates of interaction effects.

The remainder of this subsection shows how to correct for measurement error under the incomplete information structure.\ In particular, I show how to recover the parameters of in-\\terest from choice data when networks are finite and agents' expectations are unobserved.

\enlargethispage{0.3\baselineskip}
\begin{flushleft}
	\underline{\textit{Constructing Instruments to Correct for Measurement Error}}
\end{flushleft}

\noindent
I propose a way to correct for measurement error using internal instruments. This approach involves two steps.\hspace{-0.03cm} First, in each network $n$ and type $k$, I randomly partition agents into two \\[-0.75cm]

\noindent
subsamples, $a$ and $b$.\ Second, I compute the average actions, $\smash{\bar\omega^k_{n, a}}$ and $\smash{\bar\omega^k_{n, b}}$, in each subsam-\\ple and estimate the model via IV, treating $\smash{\bar\omega^k_{n, a}}$ as the regressor and $\smash{\bar\omega^k_{n, b}}$ as the instrument.

By construction, $\bar\omega^k_{n, a}$ and $\bar\omega^k_{n, b}$ are noisy proxies for $m^{k}_n$ with sampling errors that are ind-\\[-0.75cm]

\noindent
ependent.\ This independence follows from the assumed information structure, under which agents respond to expected rather than realized choices. Therefore, either proxy can instru-\\ment for the other, yielding consistent estimates as the number of networks grows, for fixed \\ network sizes.\hspace{-0.03cm} Before stating the identification result,\hspace{-0.03cm} I will first motivate it with an example. 

\noindent
\textit{Example (Brock-Durlauf Model).} Consider the model studied by \cite{brdu2001}:
\begin{equation}\label{eq:13}
	m_n = \tanh\big(h + W_n'd + Jm_n\big), \quad\mbox{for } n = 1, \dots, N,
\end{equation}
where $W_n$ is an observed vector and $h$, $d$, and $J$ are unknown parameters.\footnote{The authors allow for individual-level covariates $X_i$, which I exclude in this example for notational simplicity.}\hspace{-0.05cm} When taking the model to data, it is common practice to replace equilibrium expectations $m_n$ with observed averages $\bar\omega_n$, thus treating a noisy finite-network proxy as if it were observed without error. 

Without accounting for measurement error, standard estimators, such as maximum likeli-\\hood or OLS, deliver biased estimates. To see how, consider an OLS estimand based on the observed means $\bar\omega_n$. Defining $u_n = \bar\omega_n - m_n$ and $v_n = \tanh^{-1}(\bar\omega_n) - \tanh^{-1}(m_n)$, I write:
\begin{equation}\label{eq:14}
	\tanh^{-1}(\bar\omega_n) = h + W_n'd + J\bar\omega_n + \tilde\xi_n, \quad\mbox{where } \tilde\xi_n = - Ju_n + v_n.
\end{equation}
To ease notation, let $\tilde m_n$, $\tilde{\bar\omega}_n$, and $\tilde Y_n$ denote the residuals from a least squares regression of $m_n$, $\bar\omega_n$, and $\tanh^{-1}(\bar\omega_n)$, respectively, on the vector $(1, W_n')$. The OLS estimand for $J$ is:
\begin{equation}\label{eq:15}
	J^{\mbox{\scriptsize OLS}} = \frac{\cov(\tilde{\bar\omega}_n, \tilde Y_n)}{\var(\tilde{\bar\omega}_n)} = J \times \frac{\var(\tilde m_n)}{\var(\tilde m_n) + \var(u_n)} + \frac{\cov(\tilde{\bar\omega}_n, v_n)}{\var(\tilde m_n) + \var(u_n)}.
\end{equation}
This estimand suffers from two sources of bias, since both the regressor and the outcome are \\[-0.75cm]

\noindent
measured with error.\ In finite networks, OLS therefore fails to recover the interaction effect. 

As illustrated in Figure 1, the bias can be substantial even for very large networks. When setting $J = 0.5$, OLS overstates the interaction effect by a factor of five in networks with 100 agents and by a factor of two in networks with 500 agents. The bias only becomes small when network sizes reach 5,000 agents\textemdash a scale rarely attainable in empirical applications.

To overcome this issue, I propose IV estimation: splitting each network into two parts, $a$ and $b$, and using the average choice in part $b$ as an instrument for the average choice in part $a$. By construction, the instrument $\bar\omega_{n, b}$ is uncorrelated with measurement error in $\bar\omega_{n, a}$, and instrument relevance holds, as $\cov(\bar\omega_{n, a}, \bar\omega_{n, b}) \!=\! \var(m_n) \!\neq\! 0$.  By standard arguments, $J$ is identified by: (1) regressing $\bar\omega_{n, a}$ on $(1, W_n', \bar\omega_{n, b})$ in the first stage to obtain fitted values $\widehat{\bar\omega}_{n,a}$ and (2) regressing the outcome $Y_n$ on $(1, W_n', \widehat{\bar\omega}_{n,a})$ in the second stage. To guarantee that the estimator jointly identifies $J$, $d$, and $h$, I address a further complication: owing to \\ the nonlinearity of $\tanh^{-1}$, the measurement error in $\tanh^{-1}(\bar\omega_n)$ is nonzero in expectation. This issue can be corrected by defining $Y_n$ as a linear combination of $\tanh^{-1}(\bar\omega_{n, a})$ across subsamples, exploiting the known dependence between $\mean(v_{n, a})$ and the subsample size. As demonstrated below, the procedure jointly recovers $J$, $d$, and $h$ even with finite networks.

\vspace{-1.5cm}
\begin{figure}[!htbp]
	\begin{center}
		\includegraphics[width=0.75\linewidth]{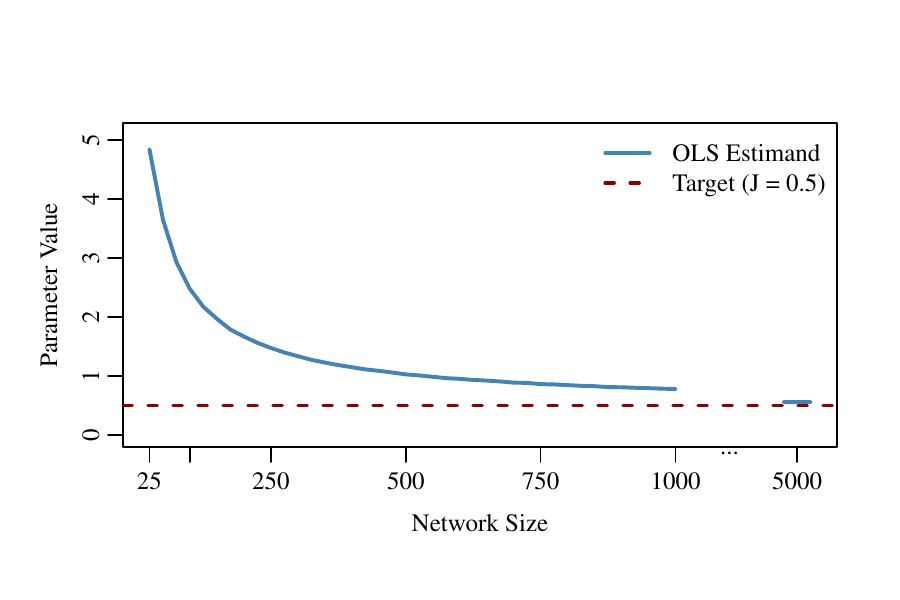}\vspace{-1cm}
		\caption{OLS estimand for $J$ across network sizes ($J = 0.5$, $h = 1$, $d = 1$); same design as in Table 2(A).}
		\label{fig:ols-bias}
	\end{center}
\end{figure}

\begin{flushleft}
	\underline{\textit{An IV Estimand for Endogenous Interaction Effects}}
\end{flushleft}

\noindent
To formalize the identification result, first define $\bar\omega^k_{n, a}$ and $\bar\omega^k_{n, b}$ to be the average choices in \\ each randomly selected subset of network $n$ and type $k$. Throughout the analysis, I define $\bar\omega^k_{n, a}$ as the endogenous regressor and $\bar\omega^k_{n, b}$ as the excluded instrument. Define the outcomes:
\begin{align}\label{eq:16}
	Y_n^k = \Omega_n^kF_{\varepsilon|k}^{-1}(\bar\omega^{k}_{n, a}) + (1-\Omega_n^k)F_{\varepsilon|k}^{-1}(\bar\omega^{k}_{n, a'}),
\end{align}
where $a'$ is a strict subset of $a$, and $\Omega_n^k$ is a known weight that depends on the sizes of $a$ and $a'$, denoted by $\scalebox{0.95}{$I_{n, a}^k$}$ and $\scalebox{0.95}{$I_{n, a'}^k$}$, respectively. To define this weight, I consider a sequence of partitioning rules $p$, where $\kappa(p,n)\in\{a,a'\}$ indicates which subset the rule selects in netw-\\ork $n$. For the first $P$ rules, define the matrix $\mathbf E_{k,P}$ by $\smash{[\mathbf E_{k,P}]_{i,j}=\mean[(I_{n,\kappa(j,n)}^k)^{1-i}]}$, and set \\ $\smash{\Omega_n^k \!=\! \lim_{P \to \infty}\sum_{p=1}^{P}[\mathbf E_{k,P}^{-1}]_{p,1}\mathbbm{1}\{\kappa(p,n)=a\}}$.\hspace{-0.05cm} The Appendix gives details on rules and sam-\\pling conditions under which the matrices are nonsingular and the resulting outcomes yield valid IV moments. While the formal identification argument takes the limit as $P \!\to\! \infty$, I find that in practice, even $P = 1$ or $P = 2$ suffices to deliver estimates with effectively zero bias. \\[-0.55cm]

\begin{theorem}
	Suppose Assumptions B.1 and B.2 hold and $\{F_{\varepsilon|k}\}_k$ are identified. Then:
	\begin{itemize}
		\item[\textit{(i)}] $\beta_{k_1, k_2} = [h_{k_1} - h_{k_2}, J_{k_1 1} - J_{k_2 1}, ..., J_{k_1 K} - J_{k_2 K}]'$ is identified for any $(k_1, k_2)$. Also:
		\begin{equation}\label{eq:17}
			\beta_{k_1, k_2} =  \mean(Z_n X_n')^{-1}\mean(Z_n [Y_n^{k_1} - Y_n^{k_2}]), \quad\mbox{where } \small
			\begin{cases}
				X_n &\hspace{-0.3cm}= (1, \bar\omega^1_{n, a}, \dots, \bar\omega^K_{n, a})' \\
				Z_n &\hspace{-0.3cm}= (1, \bar\omega^1_{n, b}, \dots, \bar\omega^K_{n, b})'.
			\end{cases}
		\end{equation}
		\item[(ii)] If $\alpha_n = W_n'd$ for observed $W_n$, $\beta_k = [h_k, d', J_{k1}, ..., J_{kK}]'$ is identified for any $k$. Also:
		\begin{equation}\label{eq:18}
			\beta_{k} =  \mean(Z_n X_n')^{-1}\mean(Z_n Y_n^k), \quad\mbox{where } \small
			\begin{cases}
				X_n &\hspace{-0.3cm}= (1, W_n', \bar\omega^1_{n, a}, \dots, \bar\omega^K_{n, a})' \\
				Z_n &\hspace{-0.3cm}= (1, W_n', \bar\omega^1_{n, b}, \dots, \bar\omega^K_{n, b})'.
			\end{cases}
		\end{equation}
	\end{itemize}
\end{theorem}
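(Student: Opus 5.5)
The plan is to show that the IV moment conditions hold exactly: $\mean\big(Z_n(Y_n^{k_1}-Y_n^{k_2}-X_n'\beta_{k_1,k_2})\big)=0$, and similarly for part (ii). Identification then follows from invertibility of $\mean(Z_nX_n')$, which Assumption B.2 delivers.

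\emph{Step 1: the structural equation.} Start from the equilibrium condition (\ref{eq:9}) and invert it as in Lemma 2:
\[
F_{\varepsilon|k}^{-1}(m_n^k)=h_k+\alpha_n+\sum_\ell J_{k\ell}m_n^\ell .
\]
Differencing across $k_1,k_2$ removes $\alpha_n$, giving $F^{-1}_{\varepsilon|k_1}(m^{k_1}_n)-F^{-1}_{\varepsilon|k_2}(m^{k_2}_n)=\tilde X_n'\beta_{k_1,k_2}$, where $\tilde X_n=(1,m_n^1,\dots,m_n^K)'$. In part (ii), substitute $\alpha_n=W_n'd$ and do not difference.

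Next replace $m_n^\ell$ by the observed averages. Write $\bar\omega^\ell_{n,a}=m^\ell_n+u^\ell_{n,a}$. Conditional on $(\alpha_n,\text{network sizes})$, Assumption B.1 and the incomplete-information structure make the choices in network $n$ independent Bernoulli draws with success probabilities $m^\ell_n$. Hence $\mean(u_{n,a}\mid m_n,\text{sizes})=0$, and $u_{n,a}$ and $u_{n,b}$ are conditionally independent because $a$ and $b$ are disjoint random subsets. This gives $\mean(Z_n u_{n,a}^\ell)=0$, so $\mean(Z_nX_n')=\mean(\tilde Z_n\tilde X_n')$ with $\tilde Z_n=\tilde X_n$ up to the same substitution. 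That matrix is $\mean(\tilde X_n\tilde X_n')$, which is nonsingular precisely when $\operatorname{rank}\var(m_n)=K$ (augmented with $W_n$ in part (ii), where I need the usual no-collinearity condition).

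\emph{Step 2: the outcome side (main obstacle).} The difficulty is that $F^{-1}(\bar\omega^k_{n,a})\neq F^{-1}(m^k_n)$ plus mean-zero noise, because the inverse is nonlinear, and this is exactly what the weights $\Omega_n^k$ are meant to correct. I would expand $g(\bar\omega)=F^{-1}_{\varepsilon|k}(\bar\omega)$ around $m$, or more carefully compute its exact conditional expectation as a binomial average. The conditional bias $\mean[g(\bar\omega^k_{n,a})\mid m_n,I]-g(m_n^k)$ should admit an expansion $\sum_{j\ge1}c_j(m_n^k)\,I^{-j}$ in inverse powers of subsample size, with coefficients depending only on $m_n^k$.

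The construction of $\mathbf E_{k,P}$ solves a Vandermonde-type system in the moments $\mean[I^{1-i}]$. The combination $\sum_p[\mathbf E_{k,P}^{-1}]_{p,1}$ reproduces the constant term and annihilates the $I^{-1},\dots,I^{-(P-1)}$ terms, as in Richardson extrapolation/jackknife. As $P\to\infty$ this yields $\mean[Y^k_n\mid m_n]=g(m^k_n)$, so that $Y_n^k=g(m_n^k)+e_n^k$ with $\mean(e^k_n\mid m_n)=0$.

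The delicate points are twofold. First, $g(0)$ and $g(1)$ are infinite, so some boundary truncation or sampling condition (deferred to the Appendix conditions) is needed for the expectations to exist. Second, making the interchange of limit and expectation rigorous requires nonsingularity of $\mathbf E_{k,P}$ and convergence of the series. I would handle both under the Appendix's conditions on partitioning rules, taking sizes bounded away from small values and the needed moments finite.

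\emph{Step 3: exogeneity of $Z_n$ with respect to the outcome error.} The error $e^k_n$ depends only on the draws in $a$ and $a'\subset a$. The instrument depends on $m_n$ and draws in $b$, which are independent of $a$ given $m_n$. Therefore $\mean(Z_n e_n^k)=\mean\big(\mean(Z_n\mid m_n)\,\mean(e^k_n\mid m_n)\big)=0$.

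Combining Steps 1–3 gives
\[
\mean\big(Z_n[Y^{k_1}_n-Y^{k_2}_n]\big)=\mean(Z_nX_n')\beta_{k_1,k_2}+\mean(Z_n[e^{k_1}_n-e^{k_2}_n])-\mean(Z_nU_n'\beta)=\mean(Z_nX_n')\beta_{k_1,k_2},
\]
and inverting yields (\ref{eq:17}). Part (ii) is identical with $W_n$ included among both the regressors and the instruments, yielding (\ref{eq:18}).
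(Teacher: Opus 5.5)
Your proposal is correct and follows essentially the same route as the paper. Split-sample independence reduces $\mean(Z_nX_n')$ and the instrument side to moments of $X_n^*=(1,W_n',m_n')'$. An inverse-size expansion of $\mean[F^{-1}_{\varepsilon|k}(\bar\omega^k_{n,a})\mid n]$ (the paper derives it via cumulants and Bell polynomials) is then cancelled term by term by the $\mathbf E_{k,P}$ weights, and the $P\to\infty$ limit is deferred to the Appendix partitioning rules. One condition you should state explicitly, as the paper does, is that subgroup sizes are drawn independently of $(m_n,W_n)$. The weights are built from unconditional moments $\mean[(I^k_{n,a})^{-s}]$, so without that independence they would not cancel the size-dependent bias conditionally on $m_n$, and $\mean(Y_n^k\mid m_n,W_n)=F^{-1}_{\varepsilon|k}(m_n^k)$ would fail.
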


\noindent
Theorem 3 shows that the endogenous interaction effects are identified from data on finite networks. These effects can be estimated by linear IV, where $\bar\omega^k_{n, b}$ is the instrument for $\bar\omega^k_{n, a}$, and outcomes are defined in (\ref{eq:16}). Replacing population moments with sample counterparts yields estimators that are consistent as the number of networks grows, for a fixed network size distribution.\footnote{Consistency requires that the measurement errors are mean zero and uncorrelated, such that $\scalebox{0.95}{$\mean(u_{n, a}^k) = 0$}$ and $\scalebox{0.95}{$\mean(u_{n, a}^k u_{n, b}^\ell) = 0$}$ for all $k, \ell$. These moment conditions continue to hold even in the presence of multiple equilibria.} Appendix B.3 derives their asymptotic properties to support inference.

In Table 2, I assess the performance of the IV estimator using Monte Carlo simulations, varying both the network sizes and the number of networks. The results highlight two key properties of the estimator. First, performance improves for larger network sizes. This is because, with larger networks, $\bar\omega^k_{n,a}$ more closely approximates the expectation $m^k_n$, reducing measurement error. Second, holding network sizes fixed, performance improves as the number of networks increases. This pattern is consistent with the identification arguments underlying Theorem 3, which shows that IV can recover interaction effects from data on finite networks. Crucially, this same pattern does not hold for OLS. In particular, while OLS \\ remains biased as the number of networks tends to infinity, the IV estimator yields increasingly precise estimates. Moreover, the IV estimator's confidence intervals achieve coverage rates tightly centered around the nominal 95\% level, whereas OLS coverage is nearly zero.

While small network sizes do not preclude identification, one practical consideration in finite samples is instrument strength. For small network-type cells, the split-sample means, $\bar\omega^k_{n,a}$ and $\bar\omega^k_{n,b}$, become noisier proxies for the common expectation $m_n^k$, weakening the pop-\\[-0.75cm]

\noindent
ulation first-stage relationship between them. Thus, reliable estimation may require more networks to estimate this weaker first-stage relationship with sufficient precision when the number of agents per network and type is small. Appendix Table \ref*{oa-tab:small-networks} illustrates this tradeoff by extending the simulation design in Table 2 to very small networks with 4 to 10 agents per \\[-0.75cm]

\noindent
type, showing that the estimator can perform well if there are sufficiently many networks.

\begin{singlespace}
	\begin{table}[ht]
	\centering
	\caption{Performance of OLS and IV Estimators by Network Size and Number of Networks}
	\label{tab:monte-carlo}
	 \resizebox{\textwidth}{!}{
	\begin{tabular}{l l c c c c c c c c c c c c}
		\hline
		& & \multicolumn{6}{c}{Specification A ($K = 1$)} & \multicolumn{6}{c}{Specification B ($K = 2$)} \\
		\cmidrule(lr){3-8} \cmidrule(lr){9-14}
		Number of & Agents per
		& \multicolumn{2}{c}{$\widehat{h}$}
		& \multicolumn{2}{c}{$\widehat{d}$}
		& \multicolumn{2}{c}{$\widehat{J}$}
		& \multicolumn{2}{c}{$\widehat{h_1 - h_2}$}
		& \multicolumn{2}{c}{$\widehat{J_{11} - J_{21}}$}
		& \multicolumn{2}{c}{$\widehat{J_{22} - J_{12}}$} \\[-0.5ex]
		\cmidrule(lr){3-4}
		\cmidrule(lr){5-6}
		\cmidrule(lr){7-8}
		\cmidrule(lr){9-10}
		\cmidrule(lr){11-12}
		\cmidrule(lr){13-14}
		\multicolumn{1}{c}{Networks} &  \multicolumn{1}{c}{Network}
		& OLS & IV & OLS & IV & OLS & IV
		& OLS & IV & OLS & IV & OLS & IV \\
		\hline
		\hspace{0.1cm}N = 250 & \hspace{0.5cm}30 & 8.62 & 4.72 & 0.43 & 0.27 & 16.66 & 8.70 & 1.60 & 1.27 & 7.61 & 4.45 & 3.12 & 1.54 \\
		&  & {\footnotesize (0.20)} & {\footnotesize (2.50)} & {\footnotesize (0.05)} & {\footnotesize (0.59)} & {\footnotesize (0.27)} & {\footnotesize (3.40)} & {\footnotesize (0.17)} & {\footnotesize (1.16)} & {\footnotesize (0.28)} & {\footnotesize (2.16)} & {\footnotesize (0.17)} & {\footnotesize (1.23)} \\
		&  & {\footnotesize [0.00]} & {\footnotesize [0.97]} & {\footnotesize [0.00]} & {\footnotesize [0.97]} & {\footnotesize [0.00]} & {\footnotesize [0.97]} & {\footnotesize [0.00]} & {\footnotesize [0.98]} & {\footnotesize [0.00]} & {\footnotesize [0.98]} & {\footnotesize [0.00]} & {\footnotesize [0.97]} \\
		& \hspace{0.5cm}100 & 2.00 & 0.76 & 0.09 & 0.05 & 3.91 & 1.36 & 0.36 & 0.16 & 1.39 & 0.56 & 0.37 & 0.24 \\
		&  & {\footnotesize (0.22)} & {\footnotesize (0.95)} & {\footnotesize (0.06)} & {\footnotesize (0.22)} & {\footnotesize (0.30)} & {\footnotesize (1.29)} & {\footnotesize (0.14)} & {\footnotesize (0.49)} & {\footnotesize (0.25)} & {\footnotesize (0.89)} & {\footnotesize (0.15)} & {\footnotesize (0.51)} \\
		&  & {\footnotesize [0.00]} & {\footnotesize [0.98]} & {\footnotesize [0.00]} & {\footnotesize [0.95]} & {\footnotesize [0.00]} & {\footnotesize [0.98]} & {\footnotesize [0.00]} & {\footnotesize [0.99]} & {\footnotesize [0.00]} & {\footnotesize [0.98]} & {\footnotesize [0.04]} & {\footnotesize [0.96]} \\
		& \hspace{0.5cm}200 & 0.72 & 0.25 & 0.04 & 0.02 & 1.39 & 0.43 & 0.11 & 0.06 & 0.41 & 0.20 & 0.11 & 0.09 \\
		&  & {\footnotesize (0.17)} & {\footnotesize (0.51)} & {\footnotesize (0.04)} & {\footnotesize (0.12)} & {\footnotesize (0.23)} & {\footnotesize (0.70)} & {\footnotesize (0.10)} & {\footnotesize (0.28)} & {\footnotesize (0.18)} & {\footnotesize (0.51)} & {\footnotesize (0.11)} & {\footnotesize (0.30)} \\
		&  & {\footnotesize [0.00]} & {\footnotesize [0.96]} & {\footnotesize [0.03]} & {\footnotesize [0.93]} & {\footnotesize [0.00]} & {\footnotesize [0.97]} & {\footnotesize [0.05]} & {\footnotesize [0.97]} & {\footnotesize [0.03]} & {\footnotesize [0.98]} & {\footnotesize [0.18]} & {\footnotesize [0.95]} \\
		&  &  &  &  &  &  &  &  &  &  &  &  &  \\
		\hspace{0.1cm}N = 1,000 & \hspace{0.5cm}30 & 8.62 & 1.04 & 0.44 & 0.06 & 16.63 & 1.91 & 1.61 & 0.28 & 7.70 & 0.99 & 3.14 & 0.33 \\
		&  & {\footnotesize (0.10)} & {\footnotesize (1.10)} & {\footnotesize (0.03)} & {\footnotesize (0.26)} & {\footnotesize (0.14)} & {\footnotesize (1.49)} & {\footnotesize (0.08)} & {\footnotesize (0.55)} & {\footnotesize (0.14)} & {\footnotesize (1.02)} & {\footnotesize (0.09)} & {\footnotesize (0.58)} \\
		&  & {\footnotesize [0.00]} & {\footnotesize [0.98]} & {\footnotesize [0.00]} & {\footnotesize [0.97]} & {\footnotesize [0.00]} & {\footnotesize [0.98]} & {\footnotesize [0.00]} & {\footnotesize [0.97]} & {\footnotesize [0.00]} & {\footnotesize [0.97]} & {\footnotesize [0.00]} & {\footnotesize [0.96]} \\
		& \hspace{0.5cm}100 & 2.00 & 0.22 & 0.09 & 0.01 & 3.91 & 0.39 & 0.37 & 0.05 & 1.43 & 0.15 & 0.37 & 0.06 \\
		&  & {\footnotesize (0.11)} & {\footnotesize (0.47)} & {\footnotesize (0.03)} & {\footnotesize (0.11)} & {\footnotesize (0.15)} & {\footnotesize (0.64)} & {\footnotesize (0.07)} & {\footnotesize (0.24)} & {\footnotesize (0.12)} & {\footnotesize (0.45)} & {\footnotesize (0.07)} & {\footnotesize (0.26)} \\
		&  & {\footnotesize [0.00]} & {\footnotesize [0.95]} & {\footnotesize [0.00]} & {\footnotesize [0.93]} & {\footnotesize [0.00]} & {\footnotesize [0.95]} & {\footnotesize [0.00]} & {\footnotesize [0.97]} & {\footnotesize [0.00]} & {\footnotesize [0.97]} & {\footnotesize [0.00]} & {\footnotesize [0.96]} \\
		& \hspace{0.5cm}200 & 0.72 & 0.07 & 0.03 & 0.00 & 1.38 & 0.12 & 0.11 & 0.01 & 0.43 & 0.04 & 0.11 & 0.02 \\
		&  & {\footnotesize (0.09)} & {\footnotesize (0.26)} & {\footnotesize (0.02)} & {\footnotesize (0.06)} & {\footnotesize (0.12)} & {\footnotesize (0.35)} & {\footnotesize (0.05)} & {\footnotesize (0.14)} & {\footnotesize (0.09)} & {\footnotesize (0.25)} & {\footnotesize (0.05)} & {\footnotesize (0.15)} \\
		&  & {\footnotesize [0.00]} & {\footnotesize [0.95]} & {\footnotesize [0.00]} & {\footnotesize [0.93]} & {\footnotesize [0.00]} & {\footnotesize [0.95]} & {\footnotesize [0.00]} & {\footnotesize [0.98]} & {\footnotesize [0.00]} & {\footnotesize [0.98]} & {\footnotesize [0.00]} & {\footnotesize [0.96]} \\
		&  &  &  &  &  &  &  &  &  &  &  &  &  \\
		\hspace{0.1cm}N = 5,000 & \hspace{0.5cm}30 & 8.64 & 0.19 & 0.44 & 0.01 & 16.67 & 0.34 & 1.61 & 0.07 & 7.71 & 0.24 & 3.15 & 0.08 \\
		&  & {\footnotesize (0.04)} & {\footnotesize (0.48)} & {\footnotesize (0.01)} & {\footnotesize (0.11)} & {\footnotesize (0.06)} & {\footnotesize (0.65)} & {\footnotesize (0.04)} & {\footnotesize (0.25)} & {\footnotesize (0.06)} & {\footnotesize (0.46)} & {\footnotesize (0.04)} & {\footnotesize (0.26)} \\
		&  & {\footnotesize [0.00]} & {\footnotesize [0.97]} & {\footnotesize [0.00]} & {\footnotesize [0.97]} & {\footnotesize [0.00]} & {\footnotesize [0.97]} & {\footnotesize [0.00]} & {\footnotesize [0.94]} & {\footnotesize [0.00]} & {\footnotesize [0.94]} & {\footnotesize [0.00]} & {\footnotesize [0.94]} \\
		& \hspace{0.5cm}100 & 2.02 & 0.05 & 0.09 & 0.00 & 3.94 & 0.09 & 0.37 & 0.01 & 1.44 & 0.03 & 0.37 & 0.02 \\
		&  & {\footnotesize (0.05)} & {\footnotesize (0.21)} & {\footnotesize (0.01)} & {\footnotesize (0.05)} & {\footnotesize (0.07)} & {\footnotesize (0.29)} & {\footnotesize (0.03)} & {\footnotesize (0.11)} & {\footnotesize (0.06)} & {\footnotesize (0.20)} & {\footnotesize (0.03)} & {\footnotesize (0.11)} \\
		&  & {\footnotesize [0.00]} & {\footnotesize [0.94]} & {\footnotesize [0.00]} & {\footnotesize [0.92]} & {\footnotesize [0.00]} & {\footnotesize [0.94]} & {\footnotesize [0.00]} & {\footnotesize [0.97]} & {\footnotesize [0.00]} & {\footnotesize [0.97]} & {\footnotesize [0.00]} & {\footnotesize [0.94]} \\
		& \hspace{0.5cm}200 & 0.71 & 0.01 & 0.03 & 0.00 & 1.38 & 0.02 & 0.11 & 0.00 & 0.43 & 0.01 & 0.11 & 0.00 \\
		&  & {\footnotesize (0.04)} & {\footnotesize (0.12)} & {\footnotesize (0.01)} & {\footnotesize (0.03)} & {\footnotesize (0.05)} & {\footnotesize (0.16)} & {\footnotesize (0.02)} & {\footnotesize (0.06)} & {\footnotesize (0.04)} & {\footnotesize (0.11)} & {\footnotesize (0.02)} & {\footnotesize (0.07)} \\
		&  & {\footnotesize [0.00]} & {\footnotesize [0.96]} & {\footnotesize [0.00]} & {\footnotesize [0.93]} & {\footnotesize [0.00]} & {\footnotesize [0.96]} & {\footnotesize [0.00]} & {\footnotesize [0.97]} & {\footnotesize [0.00]} & {\footnotesize [0.97]} & {\footnotesize [0.00]} & {\footnotesize [0.95]} \\
		\hline
	\end{tabular}}
	\begin{minipage}{\textwidth}
		\renewcommand{\baselinestretch}{0}\selectfont\setlength{\spaceskip}{0em}\footnotesize
		\emph{Notes.} This table reports mean squared errors, average estimated standard errors (in parentheses), and 95\% confi- \\
		
		dence interval coverage rates (in brackets), based on $1{,}000$ simulation draws.\hspace{-0.03cm} Specification A estimates $\beta_{k}$,\hspace{-0.02cm} defined \\
		
		in equation \eqref{eq:18}, with $K \!=\! 1$ and $\smash{\alpha_n \!=\! W_n'd}$, using the same design as Figure 1. Specification B estimates $\beta_{k_1\!, k_2}$, \\
		
		defined in equation \eqref{eq:17}, with $K \!=\! 2$ and $\alpha_n$ treated as unobserved. $X_n$ and $Z_n$ are constructed from disjoint 2/3 \\
		 
		and 1/3 subsamples, respectively, with $P \!=\! 1$. Appendix C.1 describes the simulation design and parameter values.
	\end{minipage}
\end{table}

\end{singlespace}

\section{Empirical Applications}\label{s4}

\noindent
I now present two empirical applications of the model:\ peer effects in Tennessee primary sc-\\[-0.75cm]

\noindent
hools and spillovers in deworming treatment uptake in Kenya.\ In both settings, allowing for nonuniform\hspace{-0.0025cm} interactions\hspace{-0.0025cm} materially\hspace{-0.0025cm} changes\hspace{-0.0025cm} the\hspace{-0.0025cm} conclusions\hspace{-0.0025cm} about\hspace{-0.0025cm} the\hspace{-0.0025cm} role\hspace{-0.0025cm} of\hspace{-0.0025cm} peer\hspace{-0.0025cm} influence.

\subsection{Project STAR: Gender Differences in Peer Effects}\label{s4.1}\enlargethispage{0.2\baselineskip}

\enlargethispage{\baselineskip}\noindent
The first application studies the Project STAR class size reduction experiment, launched in \\ 1985 among kindergarteners at 79 public schools in Tennessee. In each school, students and teachers were randomly assigned to one of three classroom types: small (13–17 students), large (22–25), and large with a teacher's aide.\ At the end of the year, students took exams in math and reading.\hspace{-0.04cm} My analysis uses data from the first-year cohort,\hspace{-0.04cm} which covered 6,325 stu-\\[-0.75cm]

\noindent
dents across 325 classrooms.\ See \cite{wo1990} for a detailed summary of the program.

To learn about gender differences in peer effects, I study a model with two student types: male ($m$) and female ($f$). The Project STAR sample is well-suited to this decomposition, as boys and girls are both well-represented in all classrooms.\footnote{As the data lack classroom identifiers, I follow \cite{boca2001} and \cite{gr2008} by uniquely matching students to classes using class observables. This yields a sample of 6,248 students across 321 classrooms.} As outcomes, I use indicators \\ for whether a student scored in the top 50\% statewide in math and in reading.\footnote{Appendix Table A.2 shows that the estimates remain similar for alternative thresholds, such as 25\% and 75\%.} Because scores are realized only at the end-of-year exam, students can respond only to their classma-\\tes' expected performance,\hspace{-0.04cm} consistent with the model's incomplete information assumption.

Outcomes are given by $\omega_i = \mathbbm{1}\{e_i \geq \varepsilon_i\}$, where $e_i$ is a student's effort and $\varepsilon_i$ is a threshold \\ that depends on individual ability,\hspace{-0.03cm} with distribution $\scalebox{0.9}{$F_{\hspace{-0.05cm}\varepsilon|k}$}$.\hspace{-0.04cm} Students choose effort to maximize:
\begin{equation}\label{eq:19}
U_i(e_i | k, n) = \left(h_k + \alpha_n  + J_{kk} \, \mean_i(\bar{\omega}_{n, -i}^k)  + J_{k\ell} \, \mean_i(\bar{\omega}_n^\ell)\right) e_i - \frac{1}{2} e_i^2.
\end{equation}
The gender fixed effect $h_k$ captures systematic differences in cognitive development and pr-\\[-0.75cm]

\noindent
ior socialization. The classroom fixed effect $\alpha_n$ accounts for contextual factors such as tea-\\cher quality and class size. The gender-specific peer effects $J_{k\ell}$ measure how effort depends on the expected performance of male and female peers. Equation \eqref{eq:9} defines an equilibrium.

Under the experimental protocols, students within a school could not self-select into cla-\\ssrooms.\hspace{-00.05cm} Thus, Assumption B.1 holds conditional on school.\hspace{-00.05cm} To account for nonrandom as-\\[-0.75cm]

\noindent
signment into schools, I include school fixed effects in the estimation.\footnote{Following \cite{gr2008}, I also report estimates for a subsample of schools that had only one classroom of each type. This addresses the unlikely possibility of nonrandom assignment to classrooms within classroom types.} When implementing the estimators, I assume $\varepsilon_i$ has a logistic distribution, $F_{\varepsilon|k}(z) = [1 + \exp(-z)]^{-1}$, thus avoiding the practical challenge of nonparametric estimation with a relatively small sample.

Table 3 shows IV estimates of $h_m - h_f$, $J_{mm} - J_{fm}$, and $J_{ff} - J_{mf}$ using Project STAR data. These estimates measure differences by gender in private effort costs and peer effects, while controlling for all factors at the classroom-level.\hspace{-0.08cm} The estimated difference between $h_m$ \\[-0.75cm]

\noindent
and $h_f$  is tightly centered around zero, implying that boys and girls perform similarly in the absence of peer effects.\ Peer effects, however, vary significantly by gender, with $J_{mm} \!>\! J_{fm}$ and $J_{ff} \!>\! J_{mf}$, indicating that students conform more strongly to peers of the same gender. These differences are slightly larger for reading than for math, though the differences across outcomes are not statistically significant. I also cannot reject that $J_{mm} \!-\! J_{fm} = J_{ff} \!-\! J_{mf}$, which suggests that peer effect differences may be approximately symmetric across gender.

To assess how allowing for generalized interactions affects the results, I investigate, in Table \ref*{oa-tab:star-uniform-effects} of the Appendix, how the estimates change under the assumption of uniform peer effects ($J_{k\ell} = J$). This restriction delivers substantively different conclusions, such as estimating that girls have lower private effort costs than boys ($h_f > h_m$), though this difference becomes statistically insignificant after including school fixed effects.\ Moreover, restricting peer effects to be uniform has implications for the predicted impacts of classroom-level pol-\\[-0.75cm]

\noindent
icies. Since $J_{mm} \!>\! J_{fm}$ and $J_{ff} \!>\! J_{mf}$, a policy that boosts the achievement of boys would have outsized effects on other boys but limited effects on girls. The overall impact of a clas-\\sroom intervention can therefore depend substantially on which students it directly targets.

\enlargethispage{0.3\baselineskip}

To further validate the approach, I conduct three robustness analyses. First, in Appendix Table \ref*{oa-tab:star-misspecification}, I test whether the parameters $(h_f,\! h_m,\! J_{ff},\! J_{fm},\! J_{mf},\! J_{mm})$ depend on observed classroom characteristics, such as the poverty rate, minority share, location, or teacher education and experience. Such dependence would suggest the model is misspecified. I find no evidence that these parameters depend on classroom factors. Second, in Appendix Figure \ref*{oa-fig:star-partitions}, I assess the sensitivity of the IV estimates to alternative classroom partitioning rules, plotting histograms of estimates across different partitions. I find that the estimates are rela-\\tively stable across partitioning rules. Third, in Appendix Table \ref*{oa-tab:star-simulations}, I validate the IV estim-\\ator's finite-sample performance using simulations calibrated to Project STAR's number of classrooms, classroom sizes and gender composition, and the estimated interaction effects.

\begin{singlespace}
	\begin{table}[ht]
	\centering
	\caption{Project STAR: IV Estimates of Peer Effect Differences by Gender}
	\label{tab:star-estimates}
	\footnotesize
		\begin{tabular}{lccc@{\hskip 20pt}cc}
			\toprule
			& \multicolumn{2}{c}{Top 50\% Math Score} & & \multicolumn{2}{c}{Top 50\% Reading Score}  \\
			\cmidrule(lr){2-3} \cmidrule(lr){5-6}
			& (1) & (2) & & (1)  & (2) \\
			\midrule
			$h_m - h_f$ & \phantom{1}-0.112\hspace{0.2cm}(0.16) & \phantom{-}\phantom{1}0.082\hspace{0.2cm}(0.20) & & \phantom{1}-0.029\hspace{0.2cm}(0.18) & \phantom{-}\phantom{1}0.019\hspace{0.2cm}(0.29) \\ 
			$J_{mm} - J_{fm}$ & \phantom{-}\phantom{1}4.229\hspace{0.2cm}(0.86) & \phantom{-}\phantom{1}4.755\hspace{0.2cm}(0.76) & & \phantom{-}\phantom{1}5.339\hspace{0.2cm}(0.82) & \phantom{-}\phantom{1}4.975\hspace{0.2cm}(0.55) \\ 
			$J_{ff} - J_{mf}$ & \phantom{-}\phantom{1}4.127\hspace{0.2cm}(1.01) & \phantom{-}\phantom{1}4.789\hspace{0.2cm}(1.07) & & \phantom{-}\phantom{1}5.223\hspace{0.2cm}(1.09) & \phantom{-}\phantom{1}4.686\hspace{0.2cm}(0.81) \\ 
			\midrule
			Observations & 5,798 & 5,798 & & 5,718 & 5,718 \\
			School FE       & No  & Yes & & No  & Yes \\
			1st Stage $F$ ($\bar\omega_n^m$) & 29.40 & 30.06 & & 25.52 & 25.73 \\
			1st Stage $F$ ($\bar\omega_n^f$) & 19.77 & 19.72 & & 13.37 & 12.98 \\
			\bottomrule
	\end{tabular}
	\begin{minipage}{\textwidth}
	\renewcommand{\baselinestretch}{0}\selectfont\setlength{\spaceskip}{0em}\footnotesize
	\emph{Notes.} This table presents estimates of gender differences in peer effects, applying the estimand in \eqref{eq:17}. Standard errors use the asymptotic variance \hspace{-0.01cm}$Q\Omega Q'$\hspace{-0.01cm} derived in Appendix B.3,\hspace{-0.03cm} where $\scalebox{0.925}{$\Omega \!=\! \var\!(Z_n[Y_n^{k_1} \!-\! Y_n^{k_2} \!-\! X_n'\beta_{k_1,k_2}])$}$ and $\scalebox{0.925}{$Q \!=\! \mean(Z_nX_n')^{-1}$}$. $X_n$ and $Z_n$ are constructed using disjoint 2/3 and 1/3 sample splits, respectively, setting $P = 1$.
	\end{minipage}
\end{table}
\end{singlespace}

\subsection{Primary School Deworming: Treatment Uptake with Nonuniform Spillovers}\label{s4.2}

\noindent
The second application revisits \citeauthor{mikr2004}\textcolor{blue}{'s} (\citeyear{mikr2004}) seminal study of the Primary School Deworming Project, conducted in Busia, Kenya, from 1998 to 2001. In this study, 75 \\[-0.75cm]

\noindent
schools were randomly assigned to three equal-sized groups: Group 1 began receiving free deworming treatment in 1998, Group 2 in 1999, and Group 3 not until 2001. A key feature of the intervention was that individuals could opt out of treatment. Indeed, among the nearly 15,000 students eligible in 1998-1999, only about 75\% received treatment in a given year. This partial compliance provides a natural setting to analyze spillovers in treatment uptake.

Following the original paper, I focus on the years 1998 and 1999, and I restrict the sample to school-aged children (ages 6-18) eligible for treatment. Eligibility covered all students at treatment schools, except girls older than 13 and, in 1999, anyone without parental consent.

I model each school as a social network in which interactions are global, and I allow the interactions to vary across age and gender. Given that girls older than 13 were ineligible for treatment, I specify a model with three types of students: girls aged $\leq 13$, boys aged $\leq 13$, and boys aged $>13$. Since most schools were large (averaging over 400 students), I assume students cannot track everyone else's choice and therefore act with incomplete information.\\[-0.55cm]

In each year $t$, a student $i$ of type $k$ in school $n$ decides whether to accept treatment based on private benefits and social interaction effects. A student's treatment decision is given by:\\[-0.55cm]
\begin{equation}\label{eq:20}
	\omega_{i, t} = \mathbbm{1}\bigg\{h_{k, t} + \alpha_{n, t} + X_{i, t}'c + J_{kk} \mean_i(\bar\omega_{n, -i, t}^k) + \sum_{\ell \neq k} J_{k\ell} \mean_i(\bar\omega_{n, t}^\ell) + \varepsilon_{i, t} \geq 0\bigg\}.
\end{equation}
The time-varying type fixed effect $h_{k, t}$ captures systematic differences by age and gender in \\[-0.75cm]

\noindent
health risk and other determinants of treatment uptake. The time-varying school fixed effect \\ $\alpha_{n, t}$ accounts for any cross-school externalities, as specified by \cite{mikr2004}, as well as school-level differences in treatment provision, baseline infection rates, and other factors.\footnote{In the paper, cross-school spillovers operate through the number of students living near the school who attend other treatment schools. As this channel is school$\times$time-specific, it is fully accounted for by the fixed effect $\alpha_{n, t}$.} The model also includes student-level controls $X_{i, t}$ for prior treatment uptake and prior school attendance, as well as an idiosyncratic taste term $\varepsilon_{i, t}$, with distribution $F_{\varepsilon|k, t}$.\footnote{I control for prior uptake to isolate contemporaneous interactions from preexisting factors. In Appendix Table \\ A.7,\hspace{-0.04cm} I also report estimates restricted to first-time eligible students.\hspace{-0.05cm} I control for attendance to ensure spillovers op-\\erate through uptake, not attendance, given that absence on the treatment day was a key source of noncompliance. Note that only prior attendance is used, as current attendance might be endogenously affected by treatment uptake.}

The interaction effects $J_{k\ell}$ measure how uptake depends on the expected treatment rates of different types of peers. Since deworming medication was both a new technology and a public good (by reducing infection rates among untreated individuals), social interactions in this setting reflect a combination of peer effects, social learning, and free-riding motives.\footnote{\cite{mikr2007} also study peer effects in this setting; yet, their analysis differs in three respects. First, they study parents' consent decisions, while I study student uptake after parental consent is granted. Second, they examine local interactions among close social contacts in a small sample of parents, where spillovers operate via the \textit{exogenous} number of links to treatment schools. Instead, I study \textit{endogenous} interactions within the school, controlling for the number of nearby students attending treatment schools. Third, they focus on later years of the program (post-2000), after schools started charging deworming fees, while I study uptake before such fees existed.}

Under the experimental design, students could not self-select into schools based on treatment, and most schools had similar observable characteristics \citep{mikr2004}. Nevertheless, to address any potential self-selection, I condition on a variety of school-level factors such as geographic zone, treatment group, local infection rate, population density, and number of nearby students attending other treatment schools. As in the previous application, I assume that the idiosyncratic tastes, $\varepsilon_{i, t}$, are logistically distributed in each period.

\enlargethispage{\baselineskip}
Table 4 presents IV estimates of $\scalebox{0.95}{$h_{k_1, t} - h_{k_2, t}$}$ and  $\scalebox{0.95}{$J_{k_1\ell} - J_{k_2\ell}$}$ using data from the deworming experiment. These estimates measure age and gender differences in private payoffs and interaction effects, while controlling for all school-level determinants of treatment uptake. I report four specifications with different combinations of year, student, and school controls.

The results show no significant difference in private payoffs. Yet, I find strong differences in interaction effects, with $J_{kk} \!>\! J_{\ell k}$ for $k \neq \ell$,  indicating that students prefer to conform to \\ peers of the same type. The degree of own-type conformity, as measured by the parameter $\delta_{k_1k_2} = J_{k_1k_1} + J_{k_2k_2} - J_{k_1k_2} - J_{k_2k_1}$, is highest for girls and boys under 14 ($\scalebox{0.95}{$k_1, k_2 \in \{1, 2\}$}$), implying that gender-based homophily is particularly pronounced among younger students.

The finding that students prefer to conform to similar peers helps to clarify the underlying mechanisms driving spillovers in this setting. In particular, this pattern of peer influence is difficult to reconcile with a pure free-riding motive, which would predict uniform spillovers or spillovers proportional to the epidemiological externality imposed by each type. Rather, it is more consistent with an imitation motive, where peer effects depend on social identity.

This finding also sheds light on the distributional impacts of school-based policies. Wit-\\hin schools,\hspace{-0.03cm} treatment uptake often differs greatly by age and gender,\hspace{-0.03cm} while the magnitude \\[-0.75cm]

\noindent
and direction of these gaps vary substantially from one school to another (Appendix Figure \ref*{oa-fig:deworming-uptake-gaps}). This variation suggests that school-level factors contribute to disparities in treatment uptake. By isolating the contribution of nonuniform peer effects in schools, the generalized interactions framework uncovers how social interactions affect these disparities. My results show that peer effect differences strongly influence which students decide to take treatment. Accounting for these effects can help policymakers target treatment access more effectively.

As before, I conduct three robustness checks. First, in Appendix Table \ref*{oa-tab:deworming-misspecification}, I test whether the parameters $h_{k, t}$ and $J_{k\ell}$ depend on observed school characteristics, such as the treatment group or local infection rate. I find no evidence of such dependence. Second, in Appendix Figure \ref*{oa-fig:deworming-partitions}, I plot the IV estimates across alternative network partitioning rules, and I find that they are relatively stable. Third, in Appendix Table \ref*{oa-tab:deworming-simulations}, I validate the finite-sample per-\\[-0.75cm]

\noindent
formance of the estimator using simulations calibrated to the number of schools, school size \\[-0.75cm]

\noindent
and demographic composition, and the  estimated interaction effects in the deworming data.

\begin{singlespace}
	\begin{table}[ht]
	\centering
	\caption{Deworming in Schools: IV Estimates of Spillover Differences by Age and Gender}
	\label{tab:deworming-estimates}
	\scriptsize
	\begin{tabular}{l@{\hskip 25pt}cccc}
		\toprule
		& (1) & (2) & (3) & (4) \\
		\midrule
		\multicolumn{2}{l}{\underline{\textit{Girls $\leq$ 13 and Boys $\leq$ 13}}} & & & \\
		$h_1 - h_2$ & \phantom{-}\phantom{1}0.928\hspace{0.2cm}(0.69) & \phantom{-}\phantom{1}0.857\hspace{0.2cm}(0.79) & \phantom{-}\phantom{1}1.431\hspace{0.2cm}(1.67) & \phantom{-}\phantom{1}1.256\hspace{0.2cm}(1.99) \\ 
		$J_{11} - J_{21}$ & \phantom{-}\phantom{1}8.293\hspace{0.2cm}(2.53) & \phantom{-}\phantom{1}8.679\hspace{0.2cm}(2.74) & \phantom{-}11.428\hspace{0.2cm}(4.38) & \phantom{-}12.273\hspace{0.2cm}(4.97) \\ 
		$J_{12} - J_{22}$ & \phantom{1}-9.382\hspace{0.2cm}(2.93) & -10.029\hspace{0.2cm}(3.26) & -13.827\hspace{0.2cm}(4.25) & -15.598\hspace{0.2cm}(5.17) \\ 
		$J_{13} - J_{23}$ & \phantom{-}\phantom{1}0.071\hspace{0.2cm}(0.71) & \phantom{-}\phantom{1}0.397\hspace{0.2cm}(0.86) & \phantom{-}\phantom{1}0.378\hspace{0.2cm}(1.16) & \phantom{-}\phantom{1}0.600\hspace{0.2cm}(1.13) \\
		& & & & \\
		\multicolumn{2}{l}{\underline{\textit{Girls $\leq$ 13 and Boys $>$ 13}}} & & & \\
		$h_1 - h_3$ & \phantom{1}-0.309\hspace{0.2cm}(0.89) & \phantom{1}-0.555\hspace{0.2cm}(0.97) & \phantom{-}\phantom{1}0.873\hspace{0.2cm}(1.41) & \phantom{-}\phantom{1}1.266\hspace{0.2cm}(1.17) \\ 
		$J_{11} - J_{31}$ & \phantom{-}\phantom{1}5.551\hspace{0.2cm}(2.78) & \phantom{-}\phantom{1}6.892\hspace{0.2cm}(3.02) & \phantom{-}\phantom{1}7.401\hspace{0.2cm}(4.13) & \phantom{-}\phantom{1}7.824\hspace{0.2cm}(3.88) \\ 
		$J_{12} - J_{32}$ & \phantom{1}-0.132\hspace{0.2cm}(2.38) & \phantom{1}-2.382\hspace{0.2cm}(2.95) & \phantom{1}-5.055\hspace{0.2cm}(4.45) & \phantom{1}-6.311\hspace{0.2cm}(3.91) \\ 
		$J_{13} - J_{33}$ & \phantom{1}-4.844\hspace{0.2cm}(0.57) & \phantom{1}-3.707\hspace{0.2cm}(0.93) & \phantom{1}-2.633\hspace{0.2cm}(1.23) & \phantom{1}-2.480\hspace{0.2cm}(1.43) \\ 
		& & & & \\
		\multicolumn{2}{l}{\underline{\textit{Boys $\leq$ 13 and Boys $>$ 13}}} & & & \\
		$h_2 - h_3$ & \phantom{1}-1.237\hspace{0.2cm}(1.09) & \phantom{1}-1.413\hspace{0.2cm}(0.86) & \phantom{1}-0.559\hspace{0.2cm}(1.25) & \phantom{-}\phantom{1}0.010\hspace{0.2cm}(1.46) \\ 
		$J_{21} - J_{31}$ & \phantom{1}-2.743\hspace{0.2cm}(3.22) & \phantom{1}-1.787\hspace{0.2cm}(3.02) & \phantom{1}-4.027\hspace{0.2cm}(4.49) & \phantom{1}-4.449\hspace{0.2cm}(5.38) \\ 
		$J_{22} - J_{32}$ & \phantom{-}\phantom{1}9.250\hspace{0.2cm}(3.06) & \phantom{-}\phantom{1}7.647\hspace{0.2cm}(3.20) & \phantom{-}\phantom{1}8.772\hspace{0.2cm}(5.43) & \phantom{-}\phantom{1}9.287\hspace{0.2cm}(6.51) \\ 
		$J_{23} - J_{33}$ & \phantom{1}-4.914\hspace{0.2cm}(0.47) & \phantom{1}-4.104\hspace{0.2cm}(0.61) & \phantom{1}-3.012\hspace{0.2cm}(1.00) & \phantom{1}-3.080\hspace{0.2cm}(1.02) \\ 
		\midrule
		Year Fixed Effects & No & Yes & Yes & Yes \\
		Student Controls & No & No & Yes & Yes \\
		School Controls  & No & No & No & Yes \\
		1st Stage $F$ ($\bar\omega_n^1$) & 14.10 & 11.20 & 14.35 & 13.09 \\
		1st Stage $F$ ($\bar\omega_n^2$) & 12.05 & 9.97 & 11.72 & 10.40 \\
		1st Stage $F$ ($\bar\omega_n^3$) & 35.60 & 16.69 & 18.56 & 17.88 \\
		\bottomrule
	\end{tabular}
	\begin{minipage}{\textwidth}
		\renewcommand{\baselinestretch}{0}\selectfont\setlength{\spaceskip}{0em}\footnotesize
		\emph{Notes.}\ This table reports estimated differences in spillover effects in treatment uptake, using the estimand in \eqref{eq:17}. Student controls are prior eligibility, uptake, and attendance. School controls are local infection rate, pop.\ density, and  nearby pupils at other treatment schools. Zone fixed effects are included. Standard errors use the asymptotic variance $Q\Omega Q'$ derived in Appendix B.3, where $\smash{\scalebox{0.925}{$\Omega \!=\! \var\!(Z_n[Y_n^{k_1} \!-\! Y_n^{k_2} \!-\! X_n'\beta_{k_1,k_2}])$}}$ and $\smash{\scalebox{0.925}{$Q \!=\! \mean(Z_nX_n')^{-1}$}}$. For all $(k, \ell)$, $h_k \!-\! h_\ell$ denotes a time average of $h_{k,t} \!-\! h_{\ell,t}$. $X_n$ and $Z_n$ use disjoint 2/3 and 1/3 sample splits with $P = 1$. \\[-0.55cm]
	\end{minipage}
\end{table}
\end{singlespace}

\section{Conclusion}\label{s6}

\noindent
In this paper, I developed new empirical tools to study social interaction effects in discrete choice models in which individuals respond differently to different network members, conf-\\orming to some while distinguishing from others. I first analyzed how the signs and magni-\\tudes of endogenous interaction effects influence people's behavior in equilibrium, and the implications for welfare. I then showed how to estimate endogenous interactions from data on individual choices while addressing two identification challenges:\hspace{-0.05cm} unobserved contextual \\[-0.75cm]

\noindent
effects and unobserved agent expectations in finite networks. By resolving these challenges, the paper's framework can be readily applied to standard datasets with finite network sizes. \\[-0.75cm]

\noindent
I applied the results to data from two large-scale education experiments:\ Project STAR and the Primary School Deworming Project. In both cases, I found that nonuniform interactions play an important role in shaping people's choices.\hspace{-0.05cm} While this paper provides an initial an-\\[-0.75cm]

\noindent
alysis of nonuniform and mixed-sign interactions in discrete choice models, there is much scope for future work. One valuable extension would be to combine the model with endog-\\enous network formation, allowing individuals to choose their exposure to different peers.

\newpage


\setcounter{footnote}{0}

\section*{Appendix}

\begin{proof}[Proof of Property 1.]
	Define $\mathcal{Q}: [0, 1]^K \to [0, 1]^K$ and $\mathcal{H}: [0, 1]^K \to \mathbb{R}^K$ such that:
	\begingroup\small\begin{equation}
		\mathcal{Q}_k(m) = F_{\varepsilon|k}\big(h_k + \sum_{\ell = 1}^K J_{k\ell} m^\ell\big) \quad\mbox{and}\quad \mathcal{H}_k(m) = m^k - \mathcal{Q}_k(m), \quad\mbox{for } k = 1, \dots, K. \notag
	\end{equation}\endgroup
	Since no root of $\mathcal{H}$ lies on the boundary of $[0, 1]^K$, I restrict attention to the interior $(0, 1)^K$.\footnote{Since each $F_{\varepsilon|k}$ has positive density everywhere, $\mathcal{H}_k(m) < 0$ when $m^k = 0$ and $\mathcal{H}_k(m) > 0$ when $m^k = 1$.} Letting $\mathbf{D}_\mathcal{H}$ denote the Jacobian of $\mathcal{H}$, define $\mathcal{C} = \{m \in (0, 1)^K : \detr(\mathbf{D}_\mathcal{H}(m)) = 0\}$. By Sard's Theorem, $\mathcal{H}(\mathcal{C})$ has Lebesgue measure zero. Also, for any given $y = \mathcal{H}(m)$, almost all $\{F_{\varepsilon|k}\}_{k}$ ensure that $y \notin \mathcal{H}(\mathcal{C})$. In particular, for $y = \mathbf{0}_K$, it follows that $\mathbf{D}_\mathcal{H}(m)$ is almost always nonsingular at every root of $\mathcal{H}$. By the inverse function theorem, each root of $\mathcal{H}$ is isolated. Since the equilibrium set is compact and has only isolated points, it must be finite.
	
	To prove that the number of equilibria is odd, I apply the Poincaré-Hopf Index Theorem (\citeauthor{mi1965}, \citeyear{mi1965}, Ch. 6). To do so, first note that the set of equilibria $\mathcal{M}$ lies in $(0, 1)^K$, which is diffeomorphic to an open disk. Also, the vector field $\mathcal{H}$ is smooth and points outward on the boundary, since $\lim_{m^k \to 0} \mathcal{H}_k(m) < 0$ and $\lim_{m^k \to 1} \mathcal{H}_k(m) > 0$ for all $k$. Moreover, for almost all $\{F_{\varepsilon|k}\}_k$, $\mathcal{H}$ has finitely-many isolated roots $m$, each with $\detr(\mathbf{D}_\mathcal{H}(m)) \neq 0$. By the index theorem: $\sum_{m \in \mathcal{M}} \mbox{\footnotesize index}_{m}(\scalebox{0.95}{$\mathcal{H}$}) = 1$ where $\mbox{\footnotesize index}_{m}(\scalebox{0.95}{$\mathcal{H}$}) = 2 \mathbbm{1}\{\detr(\mathbf{D}_\mathcal{H}(m)) > 0\} - 1$.
	
	Finally, suppose there exist $d_s$ equilibria where $\rho(\scalebox{0.95}{$\mathbf{D}(m)$}) < 1$. At each of these equilibria:
	\begingroup\small\begin{equation}
		\detr\big(\mathbf{D}_\mathcal{H}(m)\big) = \detr\big(I - \mathbf{D}(m)\big) = \prod_{k=1}^K\big(1 - \lambda_k(m)\big) > 0, \notag
	\end{equation}\endgroup
	where $\{\lambda_k(m)\}_{k = 1}^K$ are the eigenvalues of $\mathbf{D}(m)$. By the index theorem, there must also be at least $d_s - 1$ equilibria $m$ at which $\det(\mathbf{D}_\mathcal{H}(m)) < 0$, each one satisfying $\rho(\scalebox{0.95}{$\mathbf{D}(m)$}) > 1$.

\end{proof}
\vspace{-0.75cm}
\begin{proof}[Proof of Lemma 1.]
	\cite{hojo1985}, Thm. 5.6.9-5.6.10, show that $\rho(\scalebox{0.95}{$\mathbf{D}(m)$})$ is the greatest lower bound for the set of all matrix norms of $\mathbf{D}(m)$. In particular, they show:\\[-0.9cm]
	\begin{itemize}
		\item[\textit{(a)}] $\rho(\mathbf{D}(m)) \leq ||\mathbf{D}(m)||$ for any matrix norm $||\cdot||$.
		\item[\textit{(b)}] For any $\epsilon > 0$, there is a matrix norm $||\cdot||$ where $||\mathbf{D}(m)|| \leq \rho(\mathbf{D}(m)) + \epsilon$. \\[-0.9cm]
	\end{itemize}
	Let $\rho(\scalebox{0.95}{$\mathbf{D}(m)$}) < 1$. Setting $\epsilon < 1 - \rho(\scalebox{0.95}{$\mathbf{D}(m)$})$, it follows from \textit{(b)} that $||\mathbf{D}(m)|| < 1$ for some matrix norm $||\cdot||$. The system (\ref{eq:5}) is a contraction map at $m$ under this norm, which means:
	\begingroup\small\begin{equation}
		||m_{t} - m|| = ||\mathcal{Q}(m_{t-1}) - \mathcal{Q}(m)|| \leq \kappa||m_{t-1} - m||, \notag
	\end{equation}\endgroup
	where $\kappa \in [0, 1)$ for any vector $m_{t-1}$ that lies in a sufficiently small neighborhood of $m$. Iterating on this inequality implies that $||m_{t} - m|| \leq \kappa^t||m_{0} - m||$, where $\lim_{t\to\infty}\kappa^t = 0$.
	
	Next, let $\rho(\scalebox{0.95}{$\mathbf{D}(m)$}) > 1$. By \textit{(a)}, $||\mathbf{D}(m)|| > 1$ for any matrix norm $||\cdot||$. By \cite{he1981}, Thm. 5.1.5, there exists a scalar $u > 0$ such that, for any $\delta > 0$, there is an initial iterate $m_0$ where $||m_{0} - m|| < \delta$ and where some future iterate $m_t$, $t \geq 1$, satisfies $||m_{t} - m|| \geq u$.
	
\end{proof}
\vspace{-0.75cm}
\begin{proof}[Proof of Property 2.]
	First, suppose $\mathbf{D}(\cdot)$ is a nonnegative matrix. Assume there is \\ an equilibrium $m \in (0, 1)^K$ with $\rho(\scalebox{0.95}{$\mathbf{D}(m)$}) > 1$. The Perron-Frobenius theorem gives a non-\\zero vector $x \geq 0$ where $\mathbf{D}(m)x = \rho(\mathbf{D}(m))x$. For sufficiently small $\delta > 0$, let $a = m - \delta x$ and $b = m + \delta x$, both in $(0, 1)^K$. Taking the first-order Taylor approximation about $m$ gives:
	\begingroup\small\begin{align}
		\mathcal{Q}(m \pm \delta x) = m \pm \delta\rho(\mathbf{D}(m))x + o(\delta). \notag
	\end{align}\endgroup
	Thus, $\mathcal{Q}(a) \leq a$ and $\mathcal{Q}(b) \geq b$, with strict inequalities in coordinates where $x_k > 0$ and equ-\\ality where $x_k = 0$. These inequalities imply that $\mathcal Q$ maps each of the intervals $\left[\mathbf0_K,a\right]$ and $\left[b,\mathbf1_K\right]$ into itself, as $\mathbf0_K \!<\! \mathcal Q(\mathbf0_K) \!\leq\! \mathcal Q(a) \!\leq\! a$ and $b \!\leq\! \mathcal Q(b) \!\leq\! \mathcal Q(\mathbf1_K) \!<\! \mathbf1_K$.\ Brouwer's fixed point theorem yields two equilibria $\underline m\leq a$ and $\overline m\geq b$, both distinct from $m$. If either is unstable, repeating the argument guarantees another equilibrium farther downward or upward, \\[-0.75cm]
	
	\noindent
	respectively. Property 1 implies the equilibrium set is generically finite and $\det(I-\mathbf D)\neq0$ at every equilibrium. Because $\mathbf D(m)$ is nonnegative, its spectral radius is an eigenvalue by \\ Perron-Frobenius. Thus, $\det(I-\mathbf D(m))\neq0$ rules out $\rho(\mathbf D(m))=1$. Both chains must th-\\erefore terminate at distinct equilibria with $\rho(\mathbf D)<1$, which are locally stable by Lemma 1. \\[-0.55cm]
	
	I next extend the result to settings satisfying Assumption A. Let $\mathbf{B}$ be an invertible matrix for which $\mathbf{B}\mathbf{D}(\cdot)\mathbf{B}^{-1}$ is nonnegative. For $\mathcal{I} = [0, 1]^K$, define $\smash{{\hat{\mathcal{Q}}}: \scalebox{0.95}{$\mathbf{B}\mathcal{I} \to \mathbf{B}\mathcal{I}$}}$ such that $\smash{{\hat{\mathcal{Q}}}(m) = \scalebox{0.95}{$\mathbf{B}\mathcal{Q}(\mathbf{B}^{-1}m)$}}$. Its Jacobian is $\smash{\mathbf{D}_{\hat{\mathcal{Q}}}(m) = \scalebox{0.95}{$\mathbf{B}\mathbf{D}(\mathbf{B}^{-1}m)\mathbf{B}^{-1}$}}$. Since an invertible linear transformation preserves interior points, $\operatorname{int}(\mathbf B\mathcal I)=\mathbf B\operatorname{int}(\mathcal I)$. Thus, Assumption A ensures that $\mathbf D_{\hat{\mathcal Q}}(\cdot)\geq0$ throughout $\operatorname{int}(\mathbf B\mathcal I)$. It follows that the mapping ${\hat{\mathcal{Q}}}$ is monotone on $\mbox{int}(\mathbf{B}\mathcal{I})$.
	
	Since ${\hat{\mathcal{Q}}}(\mathbf{B}m) \!=\! \mathbf{B}\mathcal{Q}(m) \!=\! \mathbf{B}m$, the vector $m$ is a fixed point of $\mathcal{Q}$ if and only if $\mathbf{B}m$ is a \\ fixed point of ${\hat{\mathcal{Q}}}$.\hspace{-0.05cm} The Jacobians $\mathbf D(m)$ and $\mathbf D_{\hat{\mathcal Q}}(\mathbf Bm)$ are similar, so the corresponding fixed \\ points share the same eigenvalues. Thus, while the ordering of $\mathcal{Q}$ and $\hat{\mathcal{Q}}$'s fixed points could \\ differ, their number and local stability properties are the same. Applying the preceding arg-\\ument to the monotone map $\hat{\mathcal Q}$ establishes that, under Assumption A, an unstable equilibri-\\um implies multiple locally stable equilibria.\hspace{-0.05cm} The converse follows directly from Property 1. \\[-0.55cm]
	
\end{proof}
\vspace{-0.75cm}
\begin{proof}[Proof of Property 3.]
	If actions take values in the set $\{-1, 1\}$, then $\mathcal{W}_k(m)$ equals:
	\begingroup\small\begin{equation}
		\mathcal{W}_k(m) = \mean\bigg(\Big|h_k + \sum_{\ell = 1}^K J_{k\ell}m^{\ell} + \varepsilon_i\Big|\bigg) + \eta_k + \mean(\xi_i|k). \notag
	\end{equation}\endgroup
	Since each $F_{\varepsilon|k}$ is symmetric about zero, $\mathcal{W}_k(m)$ is strictly increasing in $|h_k + \sum_{\ell=1}^K J_{k\ell} m^\ell|$.\\[-0.075cm] Also, in equilibrium $|\mean(\bar{\omega}^k)|$ is strictly increasing in $|h_k + \sum_{\ell = 1}^K J_{k\ell}m^{\ell}|$. Therefore, it must be that $\argmax_{m \in \mathcal{M}}\mathcal{W}_k(m) = \argmax_{m \in \mathcal{M}} |\mean(\bar{\omega}^k)|$. The second part of Property 3 follows directly from the observations that $\lim_{h_k\to\infty}\mean(\bar\omega^k) = 1$ and $\lim_{h_k\to-\infty}\mean(\bar\omega^k) = -1$.
	
\end{proof}
\vspace{-0.5cm}
\begin{proof}[Proof of Property 4.]
	Suppose $J_{kj_1}J_{j_1j_2}\cdots J_{j_M k} \geq 0$ for any types $k, j_1, j_2, \dots, j_M$. By Lemma A.4 of the Online Appendix, there exists a diagonal matrix $\mathbf{B} = \diag[\gamma_1^{k}, \dots, \gamma_K^{k}]$, where $\gamma^{k}_\ell = 2\mathbbm{1}\{J_{k j_1}J_{j_1 j_2}\dots J_{j_M \ell} \geq 0 \mbox{ for all types } j_1, j_2, \dots, j_M\} - 1$, for which $\mathbf{B}\mathbf{J}\mathbf{B}^{-1}$ is nonnegative. As shown in the proof of Property 2, ${\hat{\mathcal{Q}}}(m) = \mathbf{B}\mathcal{Q}(\mathbf{B}^{-1}m)$ is monotone on $\operatorname{int}(\mathbf{B}\mathcal{I})$. Since $(\mathbf{B}\mathcal{I}, \leq)$ is a complete lattice, Tarski’s fixed point theorem ensures that the set of fixed points of $\hat{\mathcal{Q}}$ is a complete lattice and, consequently, contains both a greatest and least fixed point. If $\mathbf{B}\overline{m}$ is the greatest fixed point of ${\hat{\mathcal{Q}}}(m)$, then $\overline{m} = \argmax_{m \in \mathcal{M}} \mean(\bar{\omega}^\ell)$ if $\gamma^{k}_\ell = 1$ and $\overline{m} = \argmin_{m \in \mathcal{M}} \mean(\bar{\omega}^\ell)$ if $\gamma^{k}_\ell = -1$. Alternatively, if $\mathbf{B}\underline{m}$ is the least fixed point \\[-0.75cm]
	
	\noindent
	of ${\hat{\mathcal{Q}}}(m)$, then $\underline{m} = \argmin_{m \in \mathcal{M}} \mean(\bar{\omega}^\ell)$ if $\gamma^{k}_\ell = 1$ and $\underline{m} = \argmax_{m \in \mathcal{M}} \mean(\bar{\omega}^\ell)$ if $\gamma^{k}_\ell = -1$.
	
\end{proof}
\vspace{-0.75cm}
\begin{proof}[Proof of Theorem 1.]
	Fix $k_1$ and $k_2$. Since $F_{\varepsilon|k_1}$ is strictly increasing, it follows that:
	\begingroup\footnotesize\begin{align}
		m^{k_1}_n &= F_{\varepsilon|k_1}\bigg(h_{k_1} - h_{k_2} + \sum_{\ell = 1}^K (J_{k_1\ell} - J_{k_2\ell})m_n^{\ell} + F_{\varepsilon|k_2}^{-1}(m_n^{k_2})\bigg) \notag \\
		&= F_{\varepsilon|k_1}\bigg(\overline{h_{k_1} - h_{k_2}} + \sum_{\ell = 1}^K (\overline{J_{k_1\ell} - J_{k_2\ell}})m_n^{\ell} + F_{\varepsilon|k_2}^{-1}(m_n^{k_2})\bigg) \notag
	\end{align}\endgroup
	if and only if $\scalebox{0.88}{$(h_{k_1} - h_{k_2}) - (\overline{h_{k_1} - h_{k_2}}) = \sum_{\ell = 1}^K \big[(\overline{J_{k_1\ell} - J_{k_2\ell}}) - (J_{k_1\ell} - J_{k_2\ell})\big]m_n^{\ell}$}$. This property holds for all $n$. The rank condition in Assumption B.2 implies that $\scalebox{0.88}{$h_{k_1} - h_{k_2} = \overline{h_{k_1} - h_{k_2}}$}$ and $\scalebox{0.88}{$J_{k_1\ell} - J_{k_2\ell} = \overline{J_{k_1\ell} - J_{k_2\ell}}$}$ for any type $\ell \in \{1, \dots, K\}$. Since $k_1$ and $k_2$ are chosen arbitrarily, this result extends to all $k_1, k_2 \in \{1, \dots, K\}$. The proof in the $\alpha_n = W_n'd$ case is analogous.
	
\end{proof}
\vspace{-0.5cm}
\begin{proof}[Proof of Theorem 2.]
	I demonstrate that $\{F_{\varepsilon|k}\}_{k}$ and $c$ are identified up-to-scale, and the remainder follows as in the proof of Theorem 1. Fix any type $k$. By Assumption B.3, there is some element $x_j$ of $X$ with continuous variation on $\mathbb{R}$. Without loss of generality, let $x_j = x_1$, and normalize the coefficient $c_1$ to one. Also, fix some network $n$, and define $\zeta_{n}^k = h_k + \alpha_n + \sum_{\ell = 1}^K J_{k\ell}m_{n}^{\ell}$. For any agent $i$ in type $k$ and network $n$ with covariates $X_i$, the expected choice $\mean(\omega_i|X_i, k, \alpha_n, \{m_n^{\ell}\}_\ell)$ equals $F_{\varepsilon|k}(\zeta_n^k + X_i'c)$. So, to recover $(c, F_{\varepsilon|k})$,\\[-0.1cm] I must show that $F_{\varepsilon|k}(\zeta_n^k + X_i'c) = \overline{F}_{\varepsilon|k}(\overline{\zeta_n^k} + X_i'\overline{c})$ implies $c = \overline{c}$ and $F_{\varepsilon|k} = \overline{F}_{\varepsilon|k}$ for any $X_i \in \mbox{supp}(X|k)$. This property is shown by Corollary 5 of Proposition 2 in \cite{ma1988}. Also, since this holds for all $k$, I thus conclude that $c$ and $\{F_{\varepsilon|k}\}_k$ are identified up-to-scale.
	
\end{proof}
\vspace{-0.75cm}
\begin{proof}[Proof of Theorem 3.]
	I prove part $(ii)$ only, as the proof of part $(i)$ is analogous. As a first step, note that for any pair of types $k$ and $\ell$: (1) $\scalebox{0.95}{$\mean(\bar\omega_{n,j}^k) = \mean[\mean(m_n^k + u_{n,j}^k | n)] = \mean(m_n^k)$}$ for each $j \in \{a, b\}$; (2) $\scalebox{0.95}{$\mean(\bar\omega^k_{n, a}\bar\omega^\ell_{n, b}) = \mean[ \mean((m_n^k + u_{n,a}^k)(m_n^\ell + u_{n,b}^\ell) | n) ] = \mean(m^{k}_n m^{\ell}_n)$}$; and (3) $\scalebox{0.95}{$\mean(\bar\omega_{n, b}^\ell F_{\varepsilon| k}^{-1}(\bar\omega_{n, a}^k)) = \mean[\mean((m_n^\ell + u^\ell_{n, b}) F_{\varepsilon| k}^{-1}(\bar\omega_{n, a}^k)|n)] = \mean(m_n^\ell F_{\varepsilon| k}^{-1}(\bar\omega_{n, a}^k))$}$. Hence, I can write:
	\begingroup\footnotesize\begin{align}
		\mean\big(Z_n X_n'\big)^{-1}\mean\big(Z_n F_{\varepsilon| k}^{-1}(\bar\omega_{n, a}^k)\big) &= \mean\big(X_n^* (X_n^*)'\big)^{-1}\mean\big(Z_n F_{\varepsilon| k}^{-1}(\bar\omega_{n, a}^k)\big) \quad\quad\quad\mbox{by (1) and (2)} \notag 
		\\&= \mean\big(X_n^* (X_n^*)'\big)^{-1}\mean\big(X_n^* F_{\varepsilon| k}^{-1}(\bar\omega_{n, a}^k)\big) \quad\quad\quad\hspace{-0.05cm}\mbox{by (3)} \notag
		\\&= \underbrace{\mean\big(X_n^* (X_n^*)'\big)^{-1}\mean\big(X_n^* F_{\varepsilon|k}^{-1}(m_n^k)\big)}_{=\beta_k} + \mean\big(X_n^* (X_n^*)'\big)^{-1}\mean\big(X_n^* v_{n, a}^k\big), \notag
	\end{align}\endgroup
	where $\scalebox{0.95}{$X_n^* = (1, W_n', m_n^1, \dots, m_n^K)'$}$. As a notational shorthand, define $\smash{q_k = F_{\varepsilon | k}^{-1}}$. Expanding $q_k(\bar\omega_{n,a}^k)$ about $m_n^k$ through order $2P-1$, for $P \geq 2$, while retaining the remainder, gives:\footnote{For notational simplicity, this proof suppresses the treatment of subsample proportions $\bar\omega_{n,a}^k$ equal to zero or one, at which the inverse choice function could be infinite. Online Appendix B.2 specifies an endpoint smoothing convention and verifies the conditions under which the bias expansion and limiting correction below remain valid.}
	\begingroup\footnotesize\begin{align}
		\mean(v_{n, a}^k | n) &= \mean\left(q_k(\bar\omega^k_{n, a}) - q_k(m^{k}_n) \middle| n\right) \notag
		\\&= \mean\bigg(\sum_{r = 0}^{2P - 1} \frac{q_k^{(r)}(m_n^k)}{r!} (\bar\omega^k_{n, a} - m^k_n)^r + \rho_{n,a,P}^k - q_k(m^{k}_n) \bigg| n\bigg) \notag
		\\[-0.2cm]&= q_k'(m_n^k)\mean\left(\bar\omega^k_{n, a} - m^k_n \middle| n\right) + \sum_{r=2}^{2P-1}\frac{q_k^{(r)}(m_n^k)}{r!}\mean\left((\bar\omega_{n,a}^k-m_n^k)^r\middle| n\right) + \mean(\rho_{n,a,P}^k \mid n)
		\notag
		\\[-0.15cm]&= \sum_{r=2}^{2P-1} \frac{q_k^{(r)}(m_n^k)}{r!} B_r\left(0, \frac{\kappa_2(\omega | k, n)}{I_{n, a}^k}, \frac{\kappa_3(\omega | k, n)}{(I_{n, a}^k)^2}, \dots, \frac{\kappa_r(\omega | k, n)}{(I_{n, a}^k)^{r - 1}}\right) + \mean(\rho_{n,a,P}^k \mid n), \notag
	\end{align}\endgroup
	where $B_r$ is the $r$th complete Bell polynomial, and $\{\kappa_i(\omega | k, n)\}_i$ are cumulants of $\omega_i$ given $k$ and $n$. Each $B_r(\scalebox{0.8}{$0, \frac{\kappa_2(\omega | k, n)}{I_{n, a}^k}, \frac{\kappa_3(\omega | k, n)}{(I_{n, a}^k)^2}, \dots, \frac{\kappa_r(\omega | k, n)}{(I_{n, a}^k)^{r - 1}}$})$ is therefore a polynomial in $\scalebox{0.85}{$(I_{n, a}^k)^{-1}$}$ with no\\[-0.05cm] constant term and coefficients depending only on cumulants of $\omega_i$ given $k$ and $n$. It follows \\ that $\smash{\scalebox{0.95}{$\mean(v_{n,a}^k \mid n) = \sum_{s=1}^{P-1}a_{n,s}^k / (I_{n,a}^k)^s + r_{n,a,P}^k$}}$, where each coefficient $a_{n, s}^k$ does not depend on the subgroup size $I_{n, a}^k$, and the remainder $\smash{r_{n,a,P}^k}$ collects all the other terms. Noting that \\ $\scalebox{0.95}{$\mean(X_n^* v_{n, a}^k) \!=\! \mean(X_n^* \mean(v_{n, a}^k | n))$}$ and using independence of the subgroup sizes from $X_n^*$, write:
	\begingroup\footnotesize\begin{align}
		\mean\big(Z_n X_n'\big)^{-1}\mean\big(Z_n q_{k}(\bar\omega_{n, a}^k)\big) 
		&= \beta_k + \mean\left(X_n^* (X_n^*)'\right)^{-1}\mean\bigg(X_n^* \bigg[\sum_{s = 1}^{P - 1} \frac{a_{n, s}^k}{(I_{n, a}^k)^s} + r_{n,a,P}^k\bigg]\bigg) \notag
		\\[-0.3cm]&= \beta_k + \sum_{s = 1}^{P - 1} C_{k, s} \mean\big((I_{n, a}^k)^{-s}\big) + \mean\big(X_n^*(X_n^*)'\big)^{-1}\mean(X_n^*r_{n,a,P}^k), \notag
	\end{align}\endgroup
	where $\scalebox{0.95}{$C_{k, s} = \mean\left(X_n^* (X_n^*)'\right)^{-1}\mean\left(X_n^* a_{n, s}^k\right)$}$ are constant vectors that do not depend on $\scalebox{0.95}{$\{I_{n, a}^k\}_n$}$. By the definition of $\mathbf E_{k,P}$, both $\sum_{p=1}^{P}[\mathbf E_{k,P}^{-1}]_{p,1} \!=\! 1$ and $\sum_{p=1}^{P}[\mathbf E_{k,P}^{-1}]_{p,1}\mean[(I_{n,\kappa(p,n)}^k)^{-s}] \!=\! 0$ for $s = 1, \dots, P-1$. Therefore, letting $R_{k,P,p} = \mean(X_n^*(X_n^*)')^{-1}\mean\![X_n^*r_{n,\kappa(p,n),P}^k]$, I obtain:
	\begingroup\footnotesize\begin{align}
		\mean(Z_nX_n')^{-1}\mean\!\left[Z_n\sum_{p=1}^{P}[\mathbf E_{k,P}^{-1}]_{p,1}q_k(\bar\omega_{n,\kappa(p,n)}^k)\right]
		&= \sum_{p=1}^{P}[\mathbf E_{k,P}^{-1}]_{p,1}\left[\beta_k+\sum_{s=1}^{P-1}C_{k,s}\mean\!\left[(I_{n,\kappa(p,n)}^k)^{-s}\right]+R_{k, P, p}\right] \notag\\
		&= \beta_k+\sum_{p=1}^{P}[\mathbf E_{k,P}^{-1}]_{p,1}R_{k, P, p}. \notag
	\end{align}\endgroup
	Hence, the $P$-rule correction eliminates the first $P-1$ inverse-size terms in the expansion. To establish exact identification, it remains to justify the limiting correction as $P \!\to\! \infty$. Online Appendix B.2 gives examples of sequences of partitioning rules for which $\mathbf E_{k,P}$ is nonsingular at all $P$ and the resulting $\smash{Y_n^k}$ is integrable with $\smash{\mean(Y_n^k|m_n,W_n) = q_k(m_n^k)}$, so:
	\begingroup\small\begin{align}
		\beta^{IV}_{k} = \mean\big(X_n^*(X_n^*)'\big)^{-1}\mean\big(X_n^* Y_n^k\big) \notag
		= \mean\big(X_n^*(X_n^*)'\big)^{-1}\mean\big(X_n^* q_k(m_n^k)\big) = \beta_k. \notag
	\end{align}\endgroup
\end{proof}

\onehalfspacing
\bibliographystyle{ecta-fullname}
\bibliography{References}

\begin{thebibliography}{37}
\newcommand{\enquote}[1]{``#1''}
\expandafter\ifx\csname natexlab\endcsname\relax\def\natexlab#1{#1}\fi

\bibitem[\protect\citeauthoryear{Akerlof and Kranton}{Akerlof and
  Kranton}{2000}]{akkr2000}
\textsc{Akerlof, George~A. and Rachel~E. Kranton} (2000): \enquote{Economics
  and Identity,} \emph{The Quarterly Journal of Economics}, 115 (3), 715--753.

\bibitem[\protect\citeauthoryear{Bernheim}{Bernheim}{1994}]{be1994}
\textsc{Bernheim, B.~Douglas} (1994): \enquote{A Theory of Conformity,}
  \emph{Journal of Political Economy}, 102, 841--877.

\bibitem[\protect\citeauthoryear{Bhattacharya, Dupas, and Kanaya}{Bhattacharya
  et~al.}{2024}]{bhduka2023}
\textsc{Bhattacharya, Debopam, Pascaline Dupas, and Shin Kanaya} (2024):
  \enquote{Demand and Welfare Analysis in Discrete Choice Models with Social
  Interactions,} \emph{The Review of Economic Studies}, 91 (2), 748--784.

\bibitem[\protect\citeauthoryear{Bonhomme}{Bonhomme}{2012}]{bo2012}
\textsc{Bonhomme, St\'{e}phane} (2012): \enquote{Functional Differencing,}
  \emph{Econometrica}, 80 (4), 1337--1385.

\bibitem[\protect\citeauthoryear{Bonhomme and Dano}{Bonhomme and
  Dano}{2023}]{boda2023}
\textsc{Bonhomme, St\'{e}phane and Kevin Dano} (2023): \enquote{Functional
  Differencing in Networks,} Working Paper.

\bibitem[\protect\citeauthoryear{Boozer and Cacciola}{Boozer and
  Cacciola}{2001}]{boca2001}
\textsc{Boozer, Michael~A. and Stephen~E. Cacciola} (2001): \enquote{Inside the
  `Black Box' of Project STAR: Estimation of Peer Effects Using Experimental
  Data,} Center Discussion Paper 832, Yale University, Economic Growth Center.

\bibitem[\protect\citeauthoryear{Boxell, Gentzkow, and Shapiro}{Boxell
  et~al.}{2024}]{bogesh2022}
\textsc{Boxell, Levi, Matthew Gentzkow, and Jesse~M. Shapiro} (2024):
  \enquote{Cross-Country Trends in Affective Polarization,} \emph{The Review of
  Economics and Statistics}, 106 (2), 557--565.

\bibitem[\protect\citeauthoryear{Bramoullé and Kranton}{Bramoullé and
  Kranton}{2007}]{brkr2007}
\textsc{Bramoullé, Yann and Rachel Kranton} (2007): \enquote{Public Goods in
  Networks,} \emph{Journal of Economic Theory}, 135 (1), 478--494.

\bibitem[\protect\citeauthoryear{Brock and Durlauf}{Brock and
  Durlauf}{2001}]{brdu2001}
\textsc{Brock, William and Steven Durlauf} (2001): \enquote{Discrete Choice
  with Social Interactions,} \emph{The Review of Economic Studies}, 68 (2),
  235--260.

\bibitem[\protect\citeauthoryear{Brock and Durlauf}{Brock and
  Durlauf}{2007}]{brdu2007}
---\hspace{-.1pt}---\hspace{-.1pt}--- (2007): \enquote{Identification of Binary
  Choice Models with Social Interactions,} \emph{Journal of Econometrics}, 140
  (1), 52--75.

\bibitem[\protect\citeauthoryear{Chamberlain}{Chamberlain}{1980}]{ch1980}
\textsc{Chamberlain, Gary} (1980): \enquote{Analysis of Covariance with
  Qualitative Data,} \emph{The Review of Economic Studies}, 47 (1), 225--238.

\bibitem[\protect\citeauthoryear{Chamberlain}{Chamberlain}{1984}]{ch1984}
---\hspace{-.1pt}---\hspace{-.1pt}--- (1984): \enquote{Panel Data,} in
  \emph{Handbook of Econometrics}, Elsevier, vol.~2, 1247--1318.

\bibitem[\protect\citeauthoryear{Dano}{Dano}{2025}]{da2025}
\textsc{Dano, Kevin} (2025): \enquote{Transition Probabilities and Moment
  Restrictions in Dynamic Fixed Effects Logit Models,} Working Paper.

\bibitem[\protect\citeauthoryear{de~Paula}{de~Paula}{2013}]{pa2013}
\textsc{de~Paula, \'{A}ureo} (2013): \enquote{Econometric Analysis of Games
  with Multiple Equilibria,} \emph{Annual Review of Economics}, 5, 107--131.

\bibitem[\protect\citeauthoryear{de~Paula}{de~Paula}{2017}]{pa2017}
---\hspace{-.1pt}---\hspace{-.1pt}--- (2017): \enquote{Econometrics of Network
  Models,} in \emph{Advances in Economics and Econometrics: Eleventh World
  Congress}, ed. by Bo~Honor{\'e}, Ariel Pakes, Monika Piazzesi, and Larry
  Samuelson, Cambridge: Cambridge University Press, vol.~1, 268--323.

\bibitem[\protect\citeauthoryear{Ellison}{Ellison}{1993}]{el1993}
\textsc{Ellison, Glenn} (1993): \enquote{Learning, Local Interaction, and
  Coordination,} \emph{Econometrica}, 61, 1047--1071.

\bibitem[\protect\citeauthoryear{Graham}{Graham}{2008}]{gr2008}
\textsc{Graham, Bryan~S.} (2008): \enquote{Identifying Social Interactions
  through Conditional Variance Restrictions,} \emph{Econometrica}, 76 (3),
  643--660.

\bibitem[\protect\citeauthoryear{Graham}{Graham}{2015}]{gr2015}
---\hspace{-.1pt}---\hspace{-.1pt}--- (2015): \enquote{Methods of
  Identification in Social Networks,} \emph{Annual Review of Economics}, 7,
  465--485.

\bibitem[\protect\citeauthoryear{Guerra and Mohnen}{Guerra and
  Mohnen}{2022}]{gumo2022}
\textsc{Guerra, José-Alberto and Myra Mohnen} (2022): \enquote{Multinomial
  Choice with Social Interactions: Occupations in Victorian London,} \emph{The
  Review of Economics and Statistics}, 104 (4), 736--747.

\bibitem[\protect\citeauthoryear{Henry}{Henry}{1981}]{he1981}
\textsc{Henry, Daniel} (1981): \emph{Geometric Theory of Semilinear Parabolic
  Equations}, vol. 840 of \emph{Lecture Notes in Mathematics}, Berlin:
  Springer-Verlag.

\bibitem[\protect\citeauthoryear{Horn and Johnson}{Horn and
  Johnson}{1985}]{hojo1985}
\textsc{Horn, Roger~A. and Charles~R. Johnson} (1985): \emph{Matrix Analysis},
  Cambridge University Press.

\bibitem[\protect\citeauthoryear{Horowitz}{Horowitz}{2009}]{ho2009}
\textsc{Horowitz, Joel~L.} (2009): \emph{Semiparametric and Nonparametric
  Methods in Econometrics}, Springer Series in Statistics, New York: Springer.

\bibitem[\protect\citeauthoryear{Hoxby}{Hoxby}{2000}]{ho2000}
\textsc{Hoxby, Caroline~M.} (2000): \enquote{The Effects of Class Size on
  Student Achievement: New Evidence from Population Variation,} \emph{The
  Quarterly Journal of Economics}, 115 (4), 1239--1285.

\bibitem[\protect\citeauthoryear{Jackson and Zenou}{Jackson and
  Zenou}{2015}]{jaze2015}
\textsc{Jackson, Matthew~O. and Yves Zenou} (2015): \enquote{Games on
  Networks,} in \emph{Handbook of Game Theory with Economic Applications}, ed.
  by H.~Peyton Young and Shmuel Zamir, Elsevier, vol.~4, 95--163.

\bibitem[\protect\citeauthoryear{Kline and Tamer}{Kline and
  Tamer}{2020}]{klta2020}
\textsc{Kline, Brendan and Elie Tamer} (2020): \enquote{Econometric Analysis of
  Models with Social Interactions,} in \emph{The Econometric Analysis of
  Network Data}, ed. by Bryan Graham and \'{A}ureo~de Paula, Academic Press,
  149--181.

\bibitem[\protect\citeauthoryear{Kremer and Miguel}{Kremer and
  Miguel}{2007}]{mikr2007}
\textsc{Kremer, Michael and Edward Miguel} (2007): \enquote{The Illusion of
  Sustainability,} \emph{The Quarterly Journal of Economics}, 122 (3),
  1007--1065.

\bibitem[\protect\citeauthoryear{Lee, Li, and Lin}{Lee
  et~al.}{2014}]{lelili2014}
\textsc{Lee, Lungfei, Ji~Li, and Xu~Lin} (2014): \enquote{Binary Choice Models
  with Social Network Under Heterogeneous Rational Expectations,} \emph{The
  Review of Economics and Statistics}, 96 (3), 402--417.

\bibitem[\protect\citeauthoryear{Manski}{Manski}{1988}]{ma1988}
\textsc{Manski, Charles~F.} (1988): \enquote{Identification of Binary Response
  Models,} \emph{Journal of the American Statistical Association}, 83 (403),
  729--738.

\bibitem[\protect\citeauthoryear{Manski}{Manski}{1993}]{ma1993}
---\hspace{-.1pt}---\hspace{-.1pt}--- (1993): \enquote{Identification of
  Endogenous Social Effects: The Reflection Problem,} \emph{The Review of
  Economic Studies}, 60 (3), 531--542.

\bibitem[\protect\citeauthoryear{Miguel and Kremer}{Miguel and
  Kremer}{2004}]{mikr2004}
\textsc{Miguel, Edward and Michael Kremer} (2004): \enquote{Worms: Identifying
  Impacts on Education and Health in the Presence of Treatment Externalities,}
  \emph{Econometrica}, 72 (1), 159--217.

\bibitem[\protect\citeauthoryear{Milgrom and Roberts}{Milgrom and
  Roberts}{1990}]{miro1990}
\textsc{Milgrom, Paul and John Roberts} (1990): \enquote{Rationalizability,
  Learning, and Equilibrium in Games with Strategic Complementarities,}
  \emph{Econometrica}, 58 (6), 1255--1277.

\bibitem[\protect\citeauthoryear{Milnor}{Milnor}{1965}]{mi1965}
\textsc{Milnor, John} (1965): \emph{Topology from the Differentiable
  Viewpoint}, The University Press of Virginia.

\bibitem[\protect\citeauthoryear{Salop}{Salop}{1979}]{sa1979}
\textsc{Salop, Steven} (1979): \enquote{Monopolistic Competition with Outside
  Goods,} \emph{Bell Journal of Economics}, 10 (1), 141--156.

\bibitem[\protect\citeauthoryear{Schelling}{Schelling}{1971}]{sc1971}
\textsc{Schelling, Thomas~C.} (1971): \enquote{Dynamic Models of Segregation,}
  \emph{The Journal of Mathematical Sociology}, 1 (2), 143--186.

\bibitem[\protect\citeauthoryear{Topkis}{Topkis}{1998}]{to1998}
\textsc{Topkis, Donald~M.} (1998): \emph{Supermodularity and Complementarity},
  Frontiers of Economic Research, Princeton University Press.

\bibitem[\protect\citeauthoryear{Vives}{Vives}{1990}]{vi1990}
\textsc{Vives, Xavier} (1990): \enquote{Nash Equilibrium with Strategic
  Complementarities,} \emph{Journal of Mathematical Economics}, 19 (3),
  305--321.

\bibitem[\protect\citeauthoryear{Word, Johnston, Bain, Fulton, Zaharias,
  Achilles, Lintz, Folger, and Breda}{Word et~al.}{1990}]{wo1990}
\textsc{Word, Elizabeth~R., John Johnston, Helen~Pate Bain, B.~DeWayne Fulton,
  Jayne~B. Zaharias, Charles~M. Achilles, Martha~Nannette Lintz, John Folger,
  and Carolyn Breda} (1990): \enquote{The State of Tennessee's Student/Teacher
  Achievement Ratio (STAR) Project,} Technical report, 1985--1990, Tennessee
  State Department of Education, Nashville, TN.

\end{thebibliography}


\begin{thebibliography}{4}
\newcommand{\enquote}[1]{``#1''}
\expandafter\ifx\csname natexlab\endcsname\relax\def\natexlab#1{#1}\fi

\bibitem[\protect\citeauthoryear{Bramoull\'{e}, Kranton, and
  D'Amours}{Bramoull\'{e} et~al.}{2014}]{brkrda2014}
\textsc{Bramoull\'{e}, Yann, Rachel Kranton, and Martin D'Amours} (2014):
  \enquote{Strategic Interaction and Networks,} \emph{American Economic
  Review}, 104 (3), 898--930.

\bibitem[\protect\citeauthoryear{Graham}{Graham}{2008}]{gr2008}
\textsc{Graham, Bryan~S.} (2008): \enquote{Identifying Social Interactions
  through Conditional Variance Restrictions,} \emph{Econometrica}, 76 (3),
  643--660.

\bibitem[\protect\citeauthoryear{Horn and Johnson}{Horn and
  Johnson}{1985}]{hojo1985}
\textsc{Horn, Roger~A. and Charles~R. Johnson} (1985): \emph{Matrix Analysis},
  Cambridge University Press.

\bibitem[\protect\citeauthoryear{Manski}{Manski}{1988}]{ma1988}
\textsc{Manski, Charles~F.} (1988): \enquote{Identification of Binary Response
  Models,} \emph{Journal of the American Statistical Association}, 83 (403),
  729--738.

\end{thebibliography}


\end{document}


\begin{frontmatter}
	
\title{Online Appendix: \\ Discrete Choice with Generalized Social Interactions}
\runtitle{Online Appendix: Discrete Choice with Generalized Social Interactions}
\begin{aug}
	\author[add1]{\fnms{Oscar}~\snm{Volpe}\ead[label=e1]{ovolpe@fas.harvard.edu}}
	\address[add1]{\orgdiv{Department of Economics}, \orgname{Harvard University}}
\end{aug}
\vspace{-0.5cm}
\begin{center}
	September 30, 2026
\end{center}
\vspace{0.05cm}
\begin{abstract}
This online appendix accompanies the paper ``Discrete Choice with Generaliz-\\ed Social Interactions''. It contains supplementary proofs, model extensions, additional details about identification and estimation, and robustness analyses for the empirical applications. All notation is consistent with the main text of the paper.
\end{abstract}
\end{frontmatter}

\begin{appendix}
\section{Equilibrium Analysis and Model Extensions}

\subsubsection*{A.1.\hspace{0.1cm} Proofs of Supplementary Results}
\enlargethispage{0.5\baselineskip}
\begin{lemma}
	If $\scalebox{0.95}{$J_{kj_1}J_{j_1j_2}\cdots J_{j_M k} = 0$}$ for every integer $M \geq 1$ and all types $\scalebox{0.95}{$k, j_1, \dots, j_M$}$, then $\scalebox{0.95}{$\rho(\mathbf{D}(m)) = 0$}$ for all $\scalebox{0.95}{$m \in [0, 1]^K$}$. Hence, there is a unique, locally stable equilibrium.
\end{lemma}
\vspace{-0.6cm}
\begin{proof}
	Taking $M = 1$ and $j_1 = k$ gives $J_{kk}^2 = 0$, so $J_{kk} = 0$. It follows that $\scalebox{0.975}{$(\mathbf{J}^M)_{kk} = 0$}$ for all types $k$ and integers $M \geq 1$. Since this property also applies to $\mathbf{D}(m)$, it must be that: \\[-0.5cm]
	\begin{equation}
		\sum_{k = 1}^K [\lambda_k(m)]^M = \mbox{tr}([\mathbf{D}(m)]^M) = \sum_{k = 1}^K ([\mathbf{D}(m)]^M)_{kk} = 0, \notag
	\end{equation}
	for every $M \geq 1$, where $\{\lambda_k(m)\}_k$ denote the eigenvalues of $\mathbf{D}(m)$, counted with algebraic multiplicity.\hspace{-0.02cm} By Newton's identities, the vanishing of the first $K$ power sums guarantees that $\lambda_k(m) = 0$ for all $k$. Given that $\rho(\mathbf{D}(m)) = \max_k \{|\lambda_k(m)|\}$, it follows that $\rho(\mathbf{D}(m)) = 0$. The index theorem implies the equilibrium is unique. Local stability holds from Lemma 1. \\
\end{proof}

\begin{lemma}
	There exist positive thresholds $S_k$, which depend only on $f_{\varepsilon | k}$ and $h_k$, so \\ that, whenever $\scalebox{0.95}{$\sum_{\ell = 1}^K |J_{k\ell}| \!<\! S_k$}$ for every type $k$, the equilibrium is unique and locally stable.
\end{lemma}
\vspace{-0.6cm}
\begin{proof}
	By \cite{hojo1985}, Thm.\ 5.6.9, it follows that $\rho(\mathbf{D}(m)) \leq ||\mathbf{D}(m)||_\infty$, where $||\cdot||_\infty$ represents the $p$-norm defined by the maximum absolute row sum of $\mathbf{D}(m)$:
	\begin{equation} \small
		\|\mathbf{D}(m)\|_\infty = \max_k\left\{\sum_{\ell = 1}^K \big|[\mathbf{D}(m)]_{k\ell}\big|\right\} = \max_k\left\{f_{\varepsilon|k}\bigg(h_k + \sum_{\ell = 1}^K J_{k\ell}m^\ell\bigg)\sum_{\ell = 1}^K |J_{k\ell}|\right\}. \notag
	\end{equation}
	For any type $k$, choose $S_k \!>\! 0$ such that $\smash{S_k \max_{|z-h_k| \leq S_k} f_{\varepsilon|k}(z) \!=\! 1}$. Such a threshold exists because the left-hand side is continuous in $S_k$, tends to zero as $S_k \to 0$, and tends to infinity as $S_k \!\to\! \infty$.\ Its construction depends only on $h_k$ and $f_{\varepsilon|k}$.\ If $\sum_{\ell = 1}^K |J_{k\ell}| \!<\! S_k$, then for all $m$: \\[-0.5cm]
	\begin{equation} \small
		\left|\sum_{\ell = 1}^K J_{k\ell}m^\ell\right|
		\leq \sum_{\ell = 1}^K |J_{k\ell}| < S_k. \notag
	\end{equation}
	Therefore, $f_{\varepsilon|k}(h_k + \sum_{\ell = 1}^K J_{k\ell}m^\ell) \leq 1/S_k$. If the condition holds for every type $k$, then:
	\begin{equation} \small
		\sup_{m \in [0,1]^K}\|\mathbf{D}(m)\|_\infty
		\leq \max_k\left\{\frac{1}{S_k}\sum_{\ell = 1}^K |J_{k\ell}|\right\} < 1. \notag
	\end{equation}
	Thus, the equilibrium mapping is a contraction, with a unique and locally stable fixed point. \\
\end{proof}

\begin{lemma}
	If $\mathbf{J}$ is symmetric and its eigenvalues all have nonpositive real parts, then there is a unique equilibrium, locally stable if $\rho(\mathbf{D}(m)) < 1$ and unstable if $\rho(\mathbf{D}(m)) > 1$.
\end{lemma}
\vspace{-0.6cm}
\begin{proof}
	Let $\mathbf{J}$ be symmetric with eigenvalues that all have nonpositive real parts. Since $\boldsymbol{\beta}(m)$ has nonnegative diagonal entries, $\boldsymbol{\beta}^{1/2}(m)\mathbf{J}\boldsymbol{\beta}^{1/2}(m)$ is symmetric negative semidefinite. Additionally, this matrix and $\mathbf{D}(m) = \boldsymbol{\beta}(m)\mathbf{J}$ both share the same characteristic polynomial, because $AB$ and $BA$ have the same characteristic polynomial, taking $A = \boldsymbol{\beta}^{1/2}(m)$ and $B = \boldsymbol{\beta}^{1/2}(m)\mathbf{J}$. Therefore, all eigenvalues of $\mathbf{D}(m)$ are real and nonpositive. It follows that: $\scalebox{0.925}{$0 < \prod_{k=1}^K\big(1 - \lambda_k(m)\big) = \detr\big(I - \mathbf{D}(m)\big) = \detr\big(\mathbf{D}_\mathcal{H}(m)\big)$}$ at any equilibrium $m$. By the index theorem, the model always has a unique equilibrium. By Lemma 1, this equilibrium will be locally stable whenever $\scalebox{0.925}{$\rho(\mathbf{D}(m)) < 1$}$, and it will be unstable whenever $\scalebox{0.925}{$\rho(\mathbf{D}(m)) > 1$}$. \\
\end{proof}

\begin{lemma}
	Suppose that the interaction matrix $\mathbf{J}$ is irreducible. Then the No Negative Feedback condition holds if and only if $\scalebox{0.95}{$\mathbf{B}\mathbf{D}(\cdot)\mathbf{B}^{-1}$}$ is nonnegative for a diagonal matrix $\scalebox{0.95}{$\mathbf{B}$}$. \\[-0.5cm]
\end{lemma}
\vspace{-0.6cm}
\begin{proof}\enlargethispage{\baselineskip}
	Because $\mathbf{B}$ and $\boldsymbol{\beta}(m)$ are diagonal, it follows that $\mathbf{B}\mathbf{D}(m)\mathbf{B}^{-1} = \boldsymbol{\beta}(m)\mathbf{B}\mathbf{J}\mathbf{B}^{-1}$. Hence, nonnegativity of $\mathbf{B}\mathbf{J}\mathbf{B}^{-1}$ is equivalent to nonnegativity of $\mathbf{B}\mathbf{D}(m)\mathbf{B}^{-1}$ at every $m$.
	
	``$\Leftarrow$'' To begin, suppose that $\mathbf{J}$ is similar to a nonnegative matrix $\mathbf{A}$ by way of a diagonal change-of-basis matrix $\mathbf{B}$. Specifically, there exists a diagonal $\mathbf{B}$ where $\mathbf{A} = \mathbf{B}\mathbf{J}\mathbf{B}^{-1} \geq \mathbf{0}$. Since $\mathbf{B}$ is diagonal, the elements of $\mathbf{A}$ can be expressed as $A_{k\ell} = B_{kk} J_{k\ell} / B_{\ell\ell}$, for all $k, \ell$. Therefore, for any choice of indices $k$ and $\ell_1, \dots, \ell_M$ in the set $\{1, \dots, K\}$, it must be that:
	\begin{align} \small
		0 \leq A_{k\ell_1}A_{\ell_1\ell_2} \cdots A_{\ell_Mk} \notag
		&= \Big(\frac{B_{kk}}{B_{\ell_1\ell_1}}J_{k\ell_1}\Big)\Big(\frac{B_{\ell_1\ell_1}}{B_{\ell_2\ell_2}}J_{\ell_1\ell_2}\Big) \cdots \Big(\frac{B_{\ell_M\ell_M}}{B_{kk}}J_{\ell_Mk}\Big) \notag
		\\&= J_{k\ell_1}J_{\ell_1\ell_2} \cdots J_{\ell_Mk}. \notag
	\end{align}
	
	``$\Rightarrow$'' Suppose that NNF holds. When $\mathbf{J}$ is irreducible, all nonzero products of interaction effects along paths from $k$ to $\ell$ share the same sign; otherwise, appending the same return path with a nonzero product would make one feedback effect negative. For every $k$, define $\{\gamma^k_\ell\}_{\ell=1}^K$ so that $\gamma^k_\ell = 1$ if $J_{k j_1}J_{j_1 j_2}\dots J_{j_M \ell} \geq 0$ for all integers $M \geq 1$ and types $j_1, \dots, j_M$, and $\gamma^k_\ell = -1$ otherwise. Next, fixing some reference type $k_0$, define the matrix $\mathbf{B}$ such that:
	\begin{equation} \small
		\mathbf{B} = \diag
		\begin{bmatrix}
			\gamma_1^{k_0} \\ \vdots \\ \gamma_K^{k_0}
		\end{bmatrix}. \notag
	\end{equation}
	Notice that $\mathbf{B}$ is involutory, i.e. $\mathbf{B}^{-1} = \mathbf{B}$. Therefore, for all $(k, \ell)$, NNF would ensure that:
	\begin{equation} \small
		[\mathbf{B}\mathbf{J}\mathbf{B}^{-1}]_{k, \ell} = [\mathbf{B}\mathbf{J}\mathbf{B}]_{k, \ell} = \gamma_k^{k_0}\gamma_\ell^{k_0} J_{k\ell} = \gamma_{k_0}^{k}\gamma_\ell^{k_0} J_{k\ell} = \gamma^k_\ell J_{k \ell} = |J_{k \ell}| \notag
	\end{equation}
	It follows that $\mathbf{B}\mathbf{J}\mathbf{B}^{-1}$ equals the absolute value of $\mathbf{J}$. Therefore, $\mathbf{B}\mathbf{J}\mathbf{B}^{-1}$ is nonnegative.
\end{proof}

\begin{lemma}
	No Negative Feedback holds if $\scalebox{0.95}{$\sgn(J_{k\ell}) \!=\! \sgn(J_{km}J_{m\ell})$}$ for all $k, \ell,$ and $m$.
\end{lemma}
\vspace{-0.6cm}
\begin{proof}
	To begin, suppose that $\sgn(J_{k\ell}) = \sgn(J_{km}J_{m\ell})$ for any types $k, \ell, m$. I prove by induction that $\sgn(J_{j_0j_1}J_{j_1j_2}\cdots J_{j_{M}j_0}) \geq 0$ for any arbitrary types $j_0, j_1, \dots, j_M$. First, note that $\sgn(J_{j_0j_2}) = \sgn(J_{j_0j_1}J_{j_1j_2})$. Next, let $\sgn(J_{j_0 j_{M}}) = \sgn(J_{j_0j_1}J_{j_1j_2}\cdots J_{j_{M-1}j_{M}})$ for some positive integer $M \geq 1$. By construction, the following equation must be satisfied:
	\begin{equation}
		\sgn(J_{j_0j_{M + 1}}) = \sgn(J_{j_0 j_M}J_{j_Mj_{M + 1}}) = \sgn(J_{j_0j_1}J_{j_1j_2}\cdots J_{j_{M - 1}j_{M}}J_{j_{M}j_{M + 1}}). \notag
	\end{equation}
	By induction, $\sgn(J_{j_0 \ell}) = \sgn(J_{j_0j_1}J_{j_1j_2}\cdots J_{j_{M}\ell})$ for any agent type $\ell$. By setting $\ell = j_0$, I obtain that $\sgn(J_{j_0j_1}J_{j_1j_2}\cdots J_{j_{M}j_0}) = \sgn(J_{j_0 j_0})$ where $\sgn(J_{j_0 j_0}) = \sgn(J_{j_0 j_0}J_{j_0 j_0}) \geq 0$. \\
\end{proof}

\begin{lemma}\label{lem:weighted-feedback}
	Assumption A implies that, for each integer $r \geq 1$ and vector $m \in [0,1]^K$:
	\begin{equation}
		\sum_{(k_0,\ldots,k_{r-1})\in\{1,\ldots,K\}^r}
		\Bigg[\prod_{q=0}^{r-1}\beta_{k_q}(m)\Bigg]
		J_{k_0k_1}J_{k_1k_2}\cdots J_{k_{r-1}k_0} \geq 0, \notag
	\end{equation}
	where $\beta_k(m) \!=\! [\boldsymbol{\beta}(m)]_{kk}$ and, in the case where $r = 1$, the interaction product equals $J_{k_0k_0}$.
\end{lemma}
\vspace{-0.6cm}
\begin{proof}
	Assumption A implies $\scalebox{0.95}{$\mathbf{A}(m) \!=\! \mathbf{B}\mathbf{D}(m)\mathbf{B}^{-1}$}$ is entrywise nonnegative, so $\scalebox{0.95}{$\mathbf{A}(m)^r$}$ \\ is entrywise nonnegative for every $r \geq 1$. As trace is invariant under similarity, $\scalebox{0.95}{$\mbox{tr}(\mathbf{D}(m)^r)$} = \scalebox{0.95}{$\mbox{tr}(\mathbf{A}(m)^r)$} \!\geq\! 0$. Expanding $\scalebox{0.95}{$\mbox{tr}(\mathbf{D}(m)^r)$}$ and substituting $\scalebox{0.95}{$[\mathbf{D}(m)]_{k\ell} \!=\! \beta_k(m)J_{k\ell}$}$ gives the result.
\end{proof}

\subsubsection*{A.2.\hspace{0.1cm} Extensions to Alternative Network-Based Models}

\noindent
The properties laid out in Section 2 are not specific to binary choice models. They also have implications for a much broader class of models where agents interact in a network. To see \\ how, consider a game with $K$ players. Each player $k$ chooses $a_k$ from a compact action sp-\\ace $A_k \subset \mathbb{R}$. Given a profile of actions $a \in A_1\times A_2\times\cdots\times A_K$, each player's best response is $a_k^* = q_k\big(\sum_{\ell = 1}^K J_{k\ell}a_\ell\big)$, where $q_k(\cdot)$ is a non-decreasing mapping from $\mathbb{R}$ to $A_k$. The game permits, but does not require, each player's response to directly depend on her own action.

This game encompasses a variety of economic models. For example, the action $a_k$ could represent a person's investment into a public good, where everyone benefits from how much their neighbors contribute.\footnote{\cite{brkrda2014} study a public goods game that is nested within this framework. In their paper, players choose from an interval $[0, 1]$ and the best responses are $a_k = \max\{0, 1 - \delta\sum_{\ell \neq k} g_{k\ell}a_\ell\}$, where $\delta \in [0, 1]$ and $g_{k\ell} \in \{0, 1\}$ indicates whether players $k$ and $\ell$ are linked. Note that this game involves pure strategic substitutes.} Alternatively, $a_k$ could represent the output produced by a firm that competes in an oligopoly, where every firm's action affects the market price. Both these models are well-studied, and they both involve negative interaction effects between agents.

Another interpretation of the game is that each player is a community of individuals. In \\ the binary choice model, a player is a social group whose members make one of two choices \\ subject to social influences. Agents act noncooperatively, and $a_k$ is the average choice in group $k$. The same logic applies when residents of a state or local institution $k$ make a collective choice $a_k$. For example, one could model policy spillovers across U.S. states, where voters support more liberal or conservative positions in response to laws adopted elsewhere. \\[-0.5cm]

I focus on pure strategy Nash equilibria, defined as the action profiles $a$ from which no player $k$ wishes to deviate. An equilibrium will exist if there are continuous best responses and compact, convex action spaces. Moreover, if the best responses are continuously differentiable, then the spectral radius of the Jacobian matrix provides a sufficient statistic for determining uniqueness. To understand how, let $a^*$ be the equilibrium at which $\rho(\mathbf{D}(a))$ is greatest. If $\rho(\mathbf{D}(a^*)) \!<\! 1$, then $a^*$ is a unique, locally stable equilibrium. Conversely, under Assumption A, if $\rho(\mathbf{D}(a^*)) \!>\! 1$, then there are generically multiple, locally stable equilibria. \\[-0.55cm]

The set of equilibria is generally not well-ordered. In particular, it need not form a lattice structure, which would guarantee greatest and least equilibria at which every player takes her highest and lowest equilibrium actions, respectively. Property 4 illustrates how negative interactions can place different players’ highest equilibrium actions at different equilibria, and it considers the implications for social welfare that depend on the underlying payoffs.

\subsubsection*{A.3.\hspace{0.1cm} Endogenous Network Formation with Generalized Interactions}

\noindent
In the paper, it is assumed that agents respond to the expected average action for each type. I now consider an alternative model, where utility depends on the expected composition of agents who choose an action; that is, agents care about $\pr(k = a|\omega_i)$ instead of $\mean(\omega_i|k = a)$.

This alternative model specification can be used to analyze how social identity influences network formation. For example, suppose agents must decide whether to enter a new environment, such as a school, and they care about the types of people they expect to encounter. This situation invariably leads to negative interaction effects, because preferring one group to be in the majority is equivalent to preferring another group to be in the minority. These preferences can be analyzed by using the same techniques outlined in Section 2 of the paper. \\[-0.5cm]

Consider a model with two social groups: $a$ and $b$.\footnote{I focus on the two-group case for simplicity, but this framework may be generalized to cases with many groups.} Let $\lambda_a$ denote the share of people in group $a$, and let $s_a(\omega_i)$ denote the probability of being in group $a$ given that one chooses $\omega_i$. The payoff from the choice $\omega_i$ depends on the composition of agents who make that choice. \\[-0.5cm]
\begin{equation}\label{eq:11}
	U_i(\omega_i|k) = v_k(\omega_i) + J_{k} s_a(\omega_i) + \epsilon_i(\omega_i), \quad\mbox{for } k \in \{a, b\}. \notag
\end{equation}
Under this framework, the parameter $J_k$ indicates how much people in group $k$ benefit from associating with members of group $a$. Both $v_k(\cdot)$ and $\epsilon_i(\cdot)$ are specified exactly as before.

In equilibrium, the expected composition of agents making a choice must be consistent with individually optimal decisionmaking. By Bayes' rule, every equilibrium must satisfy:
\begin{equation}\label{eq:12}
	s_a(\omega_i) = \frac{\lambda_a \pr(\omega_i|a)}{\lambda_a\pr(\omega_i|a) + (1-\lambda_a)\pr(\omega_i|b)}, \quad\mbox{for } \omega_i \in \{0, 1\}, \notag
\end{equation}
where $\pr(\omega_i = 1|k) = F_{\varepsilon|k}\big(h_k + J_k (s_a(1) - s_a(0))\big)$. By Brouwer's FPT, a solution exists.

Uniqueness and dynamic stability of equilibria depend on the Jacobian matrix $\mathbf{D} \in \mathbb{R}^{2\times 2}$ of the system above. Letting $\beta_k = f_{\varepsilon|k}\big(h_k + J_k (s_a(1) - s_a(0))\big)$, I can write this matrix as:
\begin{equation}\label{eq:13}
	\mathbf{D} = 
	\begin{bmatrix}
		D_{11} & -D_{11} \\
		-D_{22} & D_{22}
	\end{bmatrix}, \quad\mbox{where: }
	\begin{cases}
		D_{11} &\hspace{-0.15cm}=\hspace{0.15cm} s_a(0)[1 - s_a(0)]\left(\frac{J_a\beta_a}{\pr(0|a)} - \frac{J_b\beta_b}{\pr(0|b)}\right) \\
		D_{22} &\hspace{-0.15cm}=\hspace{0.15cm} s_a(1)[1 - s_a(1)]\left(\frac{J_a\beta_a}{\pr(1|a)} - \frac{J_b\beta_b}{\pr(1|b)}\right)
	\end{cases} \notag
\end{equation}
In this model, the amount of diversity in the network directly shapes how strongly network composition responds to changes in group behavior. Specifically, each row of the Jacobian matrix $\mathbf{D}$ is scaled by the product $s_a(\omega_i)[1-s_a(\omega_i)]$, which is largest when the two groups are equally represented, and approaches zero as the network becomes homogeneous. Thus, holding all other factors fixed, a change in one group’s probability of choosing the action $\omega_i=1$ will have a larger effect on network composition when the network is more diverse.

Suppose that all agents prefer to associate with their own group, so $J_a \geq 0$ and $J_b \leq 0$. In this case, the matrix $\mathbf{D}$ satisfies Assumption A. Hence, there is almost always a locally stable equilibrium\textemdash even if the spillover effects are strong enough to generate multiplicity.

If agents prefer to associate with members of the other group, so $J_a < 0$ or $J_b > 0$, then \\ Assumption A may fail,\hspace{-0.03cm} and the model may not have any locally stable equilibria.\hspace{-0.05cm} Consider, for example, a model of school choice in which all households prefer schools with higher-achieving peers. If these preferences are sufficiently weak, then a unique, stable equilibrium exists. However, if households of students with lower prior achievement respond strongly to peer composition, a locally stable equilibrium may not exist: small shifts toward schools with higher-achieving peers can alter their composition enough to trigger larger shifts back.

\section{Details on Identification and Estimation}

\subsubsection*{B.1.\hspace{0.1cm} Identification with Individual-Level Covariates}
\enlargethispage{1.6\baselineskip}
\noindent
Suppose that the outcome equation is modified to allow for exogenous covariates $X_i \in \mathbb{R}^r$. Let $\omega_i = \mathbbm{1}\{X_i'c + h_k + \alpha_n + \sum_{\ell = 1}^K J_{k\ell}m^{\ell}_n + \varepsilon_i \geq 0\}$ where $\pr(\varepsilon_i \leq z|X_i, k, \alpha_n) = F_{\varepsilon|k}(z)$ and $\pr(X_i \leq x|k, \alpha_n) = F_{X|k}(x)$. Then $m_n^{k} = \int \mean(\omega_i|X_i, k, \alpha_n, \{m^{\ell}_n\}_{\ell = 1}^K) dF_{X|k}$, where:
\begin{equation} \small
	\mean(\omega_i|X_i, k, \alpha_n, \{m^{\ell}_n\}_{\ell = 1}^K) = F_{\varepsilon|k}\bigg(h_k + \alpha_n + X_i'c + \sum_{\ell = 1}^K J_{k\ell}m_n^{\ell}\bigg), \quad\mbox{for } k = 1, \dots, K. \notag
\end{equation}
In the presence of these covariates, Lemma 2 must be adapted. I do so in the following way.
\begin{lemma2}[Sufficiency Claim]
	For any agents $i$ and $j$ in groups $k_1$ and $k_2$, respectively, \\ and network $n$:
	$\scalebox{0.95}{$\mean(\omega_i|X_i, k_1, \alpha_n, \{m^{\ell}_n\}_{\ell}) = \mean(\omega_i|X_i, X_j, k_1, \{m^{\ell}_n\}_{\ell}, \mean(\omega_j|X_j, k_2, \alpha_n, \{m^{\ell}_n\}_{\ell}))$}$.
\end{lemma2}

\begin{proof}
	As $F_{\varepsilon|k_2}$ is strictly increasing, its inverse exists. By this property, I can write:
	\[
	\small\begin{aligned}
		\alpha_n = F_{\varepsilon|k_2}^{-1}\big(\mean(\omega_j|X_j, k_2, \alpha_n, \{m^{\ell}_n\}_{\ell})\big) - h_{k_2} -  X_j'c - \sum_{\ell = 1}^K J_{k_2\ell}m_n^{\ell}.
	\end{aligned}
	\]
	By plugging this expression for $\alpha_n$ into the definition of $\mean(\omega_i|X_i, k_1, \alpha_n, \{m^{\ell}_n\}_{\ell})$, I obtain:
	\[
	\footnotesize\begin{aligned}
		\mean(\omega_i|X_i, k_1, \alpha_n, \{m^{\ell}_n\}_{\ell}) = F_{\varepsilon|k_1}&\bigg(h_{k_1} - h_{k_2} + (X_i - X_j)'c \\[-0.2cm]&+ \sum_{\ell = 1}^K (J_{k_1\ell} - J_{k_2\ell})m^{\ell}_n + F_{\varepsilon|k_2}^{-1}\big(\mean(\omega_j|X_j, k_2, \alpha_n, \{m^{\ell}_n\}_{\ell})\big)\bigg).
	\end{aligned}
	\]\vspace{-0.65cm}
	
\end{proof}

\noindent
To prove Theorem 1 (semiparametric identification) with covariates $X_i$, I require that there is some $k$ such that $\mbox{supp}(X|k)$ is not contained in a proper affine subspace of $\mathbb{R}^r$. Note that this condition is weaker than Assumption B.3. Yet, it is sufficient to show that $c$, $h_{k_1} - h_{k_2}$ and $\{J_{k_1\ell} - J_{k_2\ell}\}_{\ell}$ are identified for all $k_1$ and $k_2$. Consider the following result and proof.\footnote{In these results and the main text, whenever network-level covariates $W_n$ are used, Assumption B.2 is unders-\\tood to require $\operatorname{rank}\!\left(\operatorname{Var}[(W_n',m_n')']\right) = \dim(W_n)+K$, so the covariance matrix of $(W_n',m_n')'$ has full rank.}

\begin{theorem}
Suppose Assumptions B.1 and B.2 hold, and assume $m^{k}_n$ is observed for all networks $n$ and agent types $k$. Also, suppose that there is some type $k$ where $\mbox{supp}(X|k)$ is not contained in a proper affine subspace of $\mathbb{R}^r$. If the functions $\{F_{\varepsilon|k}\}_k$ are known, then: \\[-0.5cm]
	\begin{itemize}
		\item[(i)] With no added assumptions, $c$, $\{h_{k_1} - h_{k_2}\}_{k_1, k_2}$ and $\{J_{k_1\ell} - J_{k_2\ell}\}_{k_1, k_2, \ell}$ are identified.
		\item[(ii)] If $\alpha_n = W_n'd$ for an observed vector $W_n$, then $c$, $d$, $\{h_k\}_k$, and $\{J_{k\ell}\}_{k, \ell}$ are identified.
	\end{itemize}
\end{theorem}

\begin{proof}
	Fix some network $n$ and define the term $\zeta_{n}^k = h_{k} + \alpha_{n} + \sum_{\ell = 1}^K J_{k\ell}m_{n}^{\ell}$. I write:
	\[
	\small\begin{aligned}
		\mean(\omega_i|X_i, k, \alpha_n, \{m^{\ell}_n\}_{\ell = 1}^K) = F_{\varepsilon|k}(\zeta_{n}^k + X_i'c)
	\end{aligned}
	\]
	To recover $c$, I must show that $F_{\varepsilon|k}(\zeta_{n}^k + X_i'c) = F_{\varepsilon|k}(\hat\zeta_{n}^k + X_i'\hat c)$ implies $c = \hat{c}$. Since $F_{\varepsilon|k}$ is known, this property holds by Proposition 5 of \cite{ma1988}. It follows that $c$ is identified.
	
	Having demonstrated that $c$ can be recovered, I now focus on the other parameters. Consider any two social groups $k_1$ and $k_2$, and then define the function $\phi: \mathbb{R} \!\to\! [0,\! 1]$ such that:
	\[
	\small\begin{aligned}
		\phi(\nu) = \int F_{\varepsilon|k_1}\big((X_i - X_j)'c + \nu\big)dF_{X|k_1},
	\end{aligned}
	\]
	where $X_j$ is chosen from $\mbox{supp}(X|k_2)$ and the integral is evaluated over the conditional support of $X_i$ given $k_1$. By definition, $\phi(\cdot)$ is nonlinear and monotonically increasing in $\nu$. 
	
	For any $n$, let $\scalebox{0.95}{$\nu_n = h_{k_1} - h_{k_2} + \sum_{\ell = 1}^K (J_{k_1\ell} - J_{k_2\ell})m_n^{\ell} + F_{\varepsilon|k_2}^{-1}(\mean(\omega_j|X_j, k_2, \alpha_n, \{m^{\ell}_n\}_{\ell = 1}^K))$}$. By Lemma 2, the expected average choice $m_n^{k_1}$ equals $\phi(\nu_n)$. Because $\phi(\cdot)$ is monotonic:
	\[
	\footnotesize\begin{aligned}
		m^{k_1}_n &= \int F_{\varepsilon|k_1}\bigg(h_{k_1} - h_{k_2} + (X_i - X_j)'c + \sum_{\ell = 1}^K (J_{k_1\ell} - J_{k_2\ell})m_n^{\ell} + F_{\varepsilon|k_2}^{-1}(\mean(\omega_j|X_j, k_2, \alpha_n, \{m^{\ell}_n\}_{\ell = 1}^K))\bigg)dF_{X|k_1} \\
		&= \int F_{\varepsilon|k_1}\bigg(\overline{h_{k_1} - h_{k_2}} + (X_i - X_j)'c + \sum_{\ell = 1}^K \overline{(J_{k_1\ell} - J_{k_2\ell})}m_n^{\ell} + F_{\varepsilon|k_2}^{-1}(\mean(\omega_j|X_j, k_2, \alpha_n, \{m^{\ell}_n\}_{\ell = 1}^K))\bigg)dF_{X|k_1}
	\end{aligned}
	\]
	is satisfied if and only if $\scalebox{0.95}{$\sum_{\ell = 1}^K \big[(\overline{J_{k_1\ell} - J_{k_2\ell}}) - (J_{k_1\ell} - J_{k_2\ell})\big]m_n^\ell = (h_{k_1} - h_{k_2}) - (\overline{h_{k_1} - h_{k_2}})$}$ for all networks in the population. By the rank condition in Assumption B.2, it follows that:
	\[
	\small\begin{aligned}
		h_{k_1} - h_{k_2} = \overline{h_{k_1} - h_{k_2}} \quad\mbox{and}\quad J_{k_1\ell} - J_{k_2\ell} = \overline{J_{k_1\ell} - J_{k_2\ell}},
	\end{aligned}
	\]
	for all $\ell \in \{1, \dots, K\}$. As $k_1$ and $k_2$ are chosen arbitrarily, this holds for all type pairs. The proof in the case where $\alpha_n = W_n'd$, corresponding to part \textit{(ii)} of the theorem, is analogous. \\
\end{proof}

\noindent
For Theorem 2, no modifications to the proof are needed, since the result already incorpor-\\ates individual-level covariates.\hspace{-0.02cm} Theorem 3 also extends to finitely supported covariates.\hspace{-0.02cm} Let $Y_n^{k,x}$ be the corrected outcome among agents with type $k$ and covariates $X_i=x$, satisfying \\ $\mean(Y_n^{k,x}\mid\alpha_n,m_n) \!=\! h_k \!+\! \alpha_n \!+\! x'c \!+\! \sum_\ell J_{k\ell}m_n^\ell$. Comparing two covariate values in the same type and network gives $\mean(Y_n^{k,x} \!-\! Y_n^{k,x_0}) \!=\! (x-x_0)'c$. The affine-support condition above identifies $c$, since the differences $x \!-\! x_0$ span $\mathbb{R}^r$ for at least one type. Once $c$ is identified, define $G_k(t;c)=\int F_{\varepsilon|k}(t+x'c)\,dF_{X|k}(x)$. The IV construction uses $G_k^{-1}(\cdot;c)$ in place of $F_{\varepsilon|k}^{-1}$, subject to the corresponding regularity and bias-correction conditions for this inverse.

\subsubsection*{B.2.\hspace{0.1cm} Additional Details on the Measurement Error Correction}
\enlargethispage{0.6\baselineskip}
\noindent
This appendix subsection provides an example of a sequence of partitioning rules for which $\mathbf E_{k,P}$ is nonsingular at each $P$ and $\mean(Y_n^k| m_n,W_n) \!=\! q_k(m_n^k)$. Consider a sampling rule that, \\ for any network and type,\hspace{-0.03cm} selects a subset $a$ independently of $(m_n,\! W_n)$,\hspace{-0.03cm} leaving $b$ nonempty. \\ Suppose that $a$ has sizes in $\{2, 4, 8, ...\}$, with positive probability on each size.\footnote{The network size can depend on $(m_n, W_n)$. The independence condition applies only to the selected size of a subgroup $a$. Additionally, in this example, the probabilities of larger subgroup sizes must not decline too quickly:\hspace{-0.03cm} setting $\liminf_{j\to\infty\hspace{-0.03cm}}\pr(I_{n,a}^k \!=\! 2^{j+1})/\pr(I_{n,a}^k \!=\! 2^{j}) \!>\! 1/2$ is sufficient for the outcome $Y_n^k$ to have a finite variance.} Let $a'$ be a \\ random half of $a$. Set $\kappa(1,n) \!=\! a$ and, for $j \!\geq\! 1$, $\kappa(j \!+\! 1, n) \!=\! a'$ if $I_{n,a}^k \!=\! 2^{j+1}$ and $a$ otherwise.

To verify that this sequence of partitioning rules delivers identification as $P \to \infty$, first note that the matrix $\mathbf E_{k,P}$ is nonsingular at any $P$, since the partitioning rules create linearly independent changes in the inverse-size moments by halving subgroups of distinct sizes.\ As \\[-0.75cm]
 
\noindent
$P$ grows, the limiting weight whenever $I_{n,a}^k=2^{j+1}$ is equal to $\Omega_n^k = H_j / \pi_j$, where I define $H_j \!=\! \pr(I_{n,a}^k \!\leq\! 2^{j+1})$ and $\pi_j \!=\! \pr(I_{n,a}^k \!=\! 2^{j+1})$. To see why, observe that if $j=0$, every rule selects $a$, so $\Omega_n^k=1=H_0/\pi_0$. Next, if $j\geq1$, then letting $e_1$ denote $(1,0,\ldots,0)'$, I obtain: \\[-0.6cm]
\[
\small\begin{aligned}
	\Omega_n^k
	&=\lim_{P\to\infty}\sum_{p=1}^{P}
	[\mathbf E_{k,P}^{-1}e_1]_p
	\mathbbm{1}\{\kappa(p,n)=a\}\\[-0.1cm]
	&=\lim_{P\to\infty}\Bigg[\sum_{p=1}^{P}[\mathbf E_{k,P}^{-1}e_1]_p-[\mathbf E_{k,P}^{-1}e_1]_{j+1}\Bigg]\\[-0.1cm]
	&=\lim_{P\to\infty}\left(1-[\mathbf E_{k,P}^{-1}e_1]_{j+1}\right)\\[-0.1cm]
	&=\lim_{P\to\infty}\Bigg[
	1+\frac{H_{j-1}}{\pi_j}
	+O\!\Bigg(\frac{2^{-(P-j)(P-j+1)/2}}{\pi_j}\Bigg)
	\Bigg]
	= 1+\frac{H_{j-1}}{\pi_j} = \frac{H_j}{\pi_j},
\end{aligned}
\]
where the fourth equality follows from Lagrange interpolation.\hspace{-0.02cm} By construction, the weights \\ depend only on the specified subgroup sizes $I_{n,a}^k$, not on the underlying model parameters.

I next verify that $\mean(Y_n^k| m_n,W_n) \!=\! q_k(m_n^k)$. For this step, take $q_k$ and its derivative to be \\ bounded;\hspace{-0.05cm} treatment of endpoints is discussed below.\hspace{-0.05cm} Denote $\scalebox{0.95}{$\mean[q_k(\bar\omega_{n,a}^k) | m_n^k \!=\! m, I_{n,a}^k \!=\! 2^{j+1}]$}$ by $\mu_j(m)$. Then $\mu_j(m) \!\to\! q_k(m)$ as $j\to\infty$. Since $a'$ is a random half of $a$, when $I_{n,a}^k \!=\! 2^{j+1}$ with $j \geq 1$ its corresponding expectation is $\mu_{j-1}(m)$. Using the identity $\Omega_n^k=H_j/\pi_j$ gives:  \\[-0.55cm]
\[
\small\begin{aligned}
	\mean(Y_n^k\mid m_n,W_n)
	&=\sum_{j=0}^{\infty}\pi_j\,
	\mean\!\left[
	\Omega_n^k q_k(\bar\omega_{n,a}^k)
	+(1-\Omega_n^k)q_k(\bar\omega_{n,a'}^k)
	\,\middle|\,m_n,W_n,\ I_{n,a}^k=2^{j+1}
	\right]\\
	&=H_0\mu_0(m_n^k)
	+\sum_{j=1}^{\infty}\pi_j
	\left[
	\frac{H_j}{\pi_j}\mu_j(m_n^k)
	+\left(1-\frac{H_j}{\pi_j}\right)\mu_{j-1}(m_n^k)
	\right]\\
	&=\lim_{R\to\infty}\left[
	H_0\mu_0(m_n^k)
	+\sum_{j=1}^{R}
	\left(H_j\mu_j(m_n^k)-H_{j-1}\mu_{j-1}(m_n^k)\right)
	\right]\\
	&=\lim_{R\to\infty}H_R\mu_R(m_n^k)
	=q_k(m_n^k).
\end{aligned}
\]
The sum in the third line telescopes. As $H_R \to 1$, it follows that $\mean(Y_n^k| m_n,W_n) \!=\! q_k(m_n^k)$.

\textit{Endpoint convention.}\ A valid convention, when the equilibrium choice probabilities are uniformly bounded away from zero and one, is to choose $\delta\in(0,1/2)$ so that $m_n^k\in[\delta,1-\delta]$ almost surely, and evaluate empirical inverse responses using a bounded, smooth function $\widetilde q_k$ on $[0,1]$ that agrees with $q_k$ on $[\delta,1-\delta]$. This function would have a bounded derivative and preserve all derivatives at the true probabilities. Hence, the finite-order bias expansion \\ remains valid, and the telescoping argument still delivers $\mean(Y_n^k| m_n,W_n) \!=\! q_k(m_n^k)$ exactly.

\textit{Network sizes and practical implementation.} Exact identification under this construction uses $P\to\infty$ and therefore requires subgroup sizes, and hence the available network sizes, to have a large support. Importantly, however, every network in the data can remain finite: consistency only requires that the number of networks grow under a fixed sampling scheme, for a fixed distribution of network sizes. Moreover, in practice, very few partitioning rules can suffice for precise estimation. For example, Table \ref*{main-tab:monte-carlo} in the main text shows that $P = 1$ already yields accurate estimates over a variety of specifications with fixed finite networks.

\subsubsection*{B.3.\hspace{0.1cm} Asymptotic Properties of the IV Estimator}

\noindent
Consider an i.i.d. sample of networks, with outcomes $Y_n^k$ as defined in (\ref*{main-eq:16}). Maintain Assumptions B.1 and B.2. Additionally, assume that $Z_nY_n^k$ and $Z_nX_n'$ have finite second moments for all $k$.\hspace{-0.05cm} First, consider the estimand $\beta_{k}$ defined in Theorem 3\textit{(ii)}.\hspace{-0.05cm} Its estimator is: 
\[
\small\begin{aligned}
	\hat\beta_{k} =  \left(\frac{1}{N}\sum_{n=1}^N Z_n X_n'\right)^{-1}\left(\frac{1}{N}\sum_{n=1}^N Z_n Y_n^{k}\right).
\end{aligned}
\]
Note that $\hat\beta_{k} \overset{p}{\to} \beta_{k}$ as the number of networks $N$ grows large.\footnote{Specifically, consistency is satisfied because $\smash{\scalebox{0.925}{$\frac{1}{N}\sum_{n=1}^N Z_n X_n' \overset{p}{\to} \mean(Z_nX_n')$}}$ and $\smash{\scalebox{0.925}{$\frac{1}{N}\sum_{n=1}^N Z_n Y_n^{k} \overset{p}{\to} \mean(Z_n Y_n^{k})$}}$ by the (weak) law of large numbers. Consequently, it follows from the continuous mapping theorem that $\hat\beta_k \overset{p}{\to} \beta_k$.} Additionally, it follows that:
\[
\small\begin{aligned}
	\sqrt{N}(\hat{\beta}_k - \beta_k) = \left(\frac{1}{N}\sum_{n=1}^N Z_n X_n'\right)^{-1}\left(\frac{1}{\sqrt{N}}\sum_{n=1}^N Z_n [Y_n^k - X_n'\beta_k]\right) \overset{d}{\to} \mathcal{N}(0, Q\Omega Q'),
\end{aligned}
\]
where $\scalebox{0.925}{$Q = \mean(Z_nX_n')^{-1}$}$ and $\scalebox{0.925}{$\Omega = \var(Z_n [Y_n^k - X_n'\beta_k])$}$. By the sample analogue principle, I can construct a consistent estimator for the asymptotic variance of $\hat{\beta}_k$ from $\hat Q\hat \Omega \hat Q'$, where:
\[
\small\begin{aligned}
	\hat Q = \left(\frac{1}{N}\sum_{n=1}^N Z_n X_n'\right)^{-1} \quad\mbox{and}\quad \hat \Omega = \frac{1}{N}\sum_{n=1}^N Z_n[Y_n^k - X_n'\hat\beta_k]^2 Z_n'.
\end{aligned}
\]
Next, consider the estimator $\hat\beta_{k_1, k_2}$, corresponding to the estimand $\beta_{k_1, k_2}$ in Theorem 3\textit{(i)}:
\[
\small\begin{aligned}
	\hat\beta_{k_1, k_2} =  \left(\frac{1}{N}\sum_{n=1}^N Z_n X_n'\right)^{-1}\left(\frac{1}{N}\sum_{n=1}^N Z_n [Y_n^{k_1} - Y_n^{k_2}]\right).
\end{aligned}
\]
As in the previous case, $\hat\beta_{k_1, k_2} \overset{p}{\to} \beta_{k_1, k_2}$ as $N$ grows large. Moreover, the estimator satisfies:
\[
\small\begin{aligned}
	\sqrt{N}(\hat\beta_{k_1, k_2} - \beta_{k_1, k_2}) \overset{d}{\to} \mathcal{N}(0, Q\Omega Q'),
\end{aligned}
\]
where $\scalebox{0.95}{$Q = \mean(Z_nX_n')^{-1}$}$ and $\scalebox{0.95}{$\Omega = \var(Z_n [Y_n^{k_1} - Y_n^{k_2} - X_n'\beta_{k_1, k_2}])$}$. As before, by the sample analogue principle, I can construct a consistent estimator $\hat Q\hat \Omega \hat Q'$ for $Q\Omega Q'$, where I define:
\[
\small\begin{aligned}
	\hat Q = \left(\frac{1}{N}\sum_{n=1}^N Z_n X_n'\right)^{-1} \quad\mbox{and}\quad \hat \Omega = \frac{1}{N}\sum_{n=1}^N Z_n[Y_n^{k_1} - Y_n^{k_2} - X_n'\hat\beta_{k_1, k_2}]^2 Z_n'.
\end{aligned}
\]

\section{Supplement to the Monte Carlo Simulations}

\subsubsection*{C.1.\hspace{0.1cm} Details on the Simulation Design}

\noindent
For the first set of Monte Carlo simulations (Table \ref*{main-tab:monte-carlo}, Specification A), choices are simulated from a data generating process (DGP) with $K = 1$ and $\alpha_n = W_n'd$. Outcomes are given by:
\[
\small\begin{aligned}
	\omega_i = \mathbbm{1}\bigg\{h + W_n'd + Jm_n + \varepsilon_i \geq 0\bigg\}, \quad\mbox{for } n = 1, \dots, N,
\end{aligned}
\]
where $m_n \!=\! \scalebox{0.925}{$F_{\varepsilon|k}(h \!+\! W_n'd \!+\! Jm_n)$}$ and $W_n$ takes equally spaced values on the interval $[-2,\! 2]$.\\ I assign the parameter values: $J = 0.5$, $h = 1$, and $d = 1$. I also assume $\smash{\varepsilon_i \overset{\mbox{\tiny i.i.d.}}{\sim} \mbox{Logistic}(0, 1)}$. This same design is used to generate Figure \ref*{main-fig:ols-bias}, which plots the bias of the OLS estimand.\footnote{While the model in the example presented in Section \ref*{main-s3.3} of the main text, given in (\ref*{main-eq:13}), technically has choices scaled to take values in $\{-1, 1\}$, the simulations throughout adopt a logit specification where choices lie in $\{0, 1\}$.}

For the second set of Monte Carlo simulations (Table \ref*{main-tab:monte-carlo}, Specification B), choices are si-\\mulated from a DGP with $K = 2$ and $\alpha_n$ treated as unobservable. Outcomes are given by:
\[
\small\begin{aligned}
	\omega_i = \mathbbm{1}\bigg\{h_k + \alpha_n + J_{k1}m^{1}_n + J_{k2}m^{2}_n + \varepsilon_i \geq 0\bigg\}, \quad\mbox{for } n = 1, \dots, N,
\end{aligned}
\]
where $m^{k}_n \!=\! F_{\varepsilon|k}(h_k \!+\! \alpha_n \!+\! J_{k1}m^{1}_n \!+\! J_{k2}m^{2}_n)$ for $k \in \{1, 2\}$, and $\alpha_n$ takes equally spaced va-\\lues on the interval $[-2, 2]$. I set $h_1 = 1$ and $h_2 = 0$, and I define $\mathbf{J}$ such that: $J_{11} = J_{22} = 1.5$ and $J_{12} = J_{21} = -1$. As in the first DGP, I assume that $\varepsilon_i|k \overset{\mbox{\tiny i.i.d.}}{\sim} \mbox{Logistic}(0, 1)$ for $k \in \{1, 2\}$. For all cases with $K = 2$, each network contains equal numbers of agents of the two types.

For each network, I compute the equilibrium choice probabilities by fixed-point iteration on (\ref*{main-eq:9}), then simulate individual choices conditional on these probabilities. Under both specifications, the equilibrium mapping is a contraction: the standard logistic density is bounded by \(1/4\), and the maximum absolute row sum of \(\mathbf J\) is less than four. Thus, each network has a unique equilibrium, and fixed-point iteration converges from any initial probability vector. Note that the contraction property is not required for the validity of the estimation; yet, it is useful here for guaranteeing convergence of the numerical algorithm in these simulations.

\subsubsection*{C.2.\hspace{0.1cm} Performance with Very Small Networks}

\noindent
Table \ref{tab:small-networks} extends the simulation designs in Table \ref*{main-tab:monte-carlo} of the main text to very small networks, containing 4 to 10 agents per network-type cell. Under this specification,  I vary the number of networks from $2{,}000$ to $5{,}000$ and compare the estimators’ mean squared errors, average estimated standard errors, and 95\% confidence interval coverage rates. All other aspects of the simulation design are the same as in Table 2 and as described in Appendix C.1 above.

Holding network sizes fixed, increasing the number of networks substantially reduces the IV estimator's mean squared errors and standard errors, while those of OLS remain largely unchanged. The IV estimator also achieves coverage rates close to the nominal 95\% level across all designs, whereas OLS coverage is essentially zero throughout. These properties hold even in the smallest designs, where there are only 4 or 6 agents per network-type cell. The results in Table \ref{tab:small-networks} therefore illustrate that the paper's IV estimator can perform relia-\\bly with few agents per network, provided that the data contain sufficiently many networks.

\begin{singlespace}
	\begin{table}[ht!]
	\centering
	\caption{OLS and IV Performance by Network-Type Cell Size and Number of Networks}
	\label{tab:small-networks}
	\resizebox{\textwidth}{!}{
		\begin{tabular}{l l c c c c c c c c c c c c}
			\hline
			& & \multicolumn{6}{c}{Specification A ($K = 1$)} & \multicolumn{6}{c}{Specification B ($K = 2$)} \\
			\cmidrule(lr){3-8} \cmidrule(lr){9-14}
			Number of & Agents per
			& \multicolumn{2}{c}{$\widehat{h}$}
			& \multicolumn{2}{c}{$\widehat{d}$}
			& \multicolumn{2}{c}{$\widehat{J}$}
			& \multicolumn{2}{c}{$\widehat{h_1 - h_2}$}
			& \multicolumn{2}{c}{$\widehat{J_{11} - J_{21}}$}
			& \multicolumn{2}{c}{$\widehat{J_{22} - J_{12}}$} \\[-0.5ex]
			\cmidrule(lr){3-4}
			\cmidrule(lr){5-6}
			\cmidrule(lr){7-8}
			\cmidrule(lr){9-10}
			\cmidrule(lr){11-12}
			\cmidrule(lr){13-14}
			\multicolumn{1}{c}{Networks} &  \multicolumn{1}{c}{$(n, k)$ Cell}
			& OLS & IV & OLS & IV & OLS & IV
			& OLS & IV & OLS & IV & OLS & IV \\
			\hline
			\hspace{0.1cm}N = 2,000 & \hspace{0.5cm}4 & 16.53 & 6.39 & 0.81 & 0.35 & 33.19 & 11.97 & 2.68 & 2.38 & 19.57 & 8.12 & 11.68 & 2.57 \\
			&  & {\footnotesize (0.06)} & {\footnotesize (2.68)} & {\footnotesize (0.02)} & {\footnotesize (0.63)} & {\footnotesize (0.07)} & {\footnotesize (3.64)} & {\footnotesize (0.06)} & {\footnotesize (1.51)} & {\footnotesize (0.09)} & {\footnotesize (2.84)} & {\footnotesize (0.06)} & {\footnotesize (1.61)} \\
			&  & {\footnotesize [0.00]} & {\footnotesize [0.92]} & {\footnotesize [0.00]} & {\footnotesize [0.93]} & {\footnotesize [0.00]} & {\footnotesize [0.91]} & {\footnotesize [0.00]} & {\footnotesize [0.98]} & {\footnotesize [0.00]} & {\footnotesize [0.98]} & {\footnotesize [0.00]} & {\footnotesize [0.96]} \\
			& \hspace{0.5cm}6 & 14.18 & 2.94 & 0.72 & 0.16 & 27.77 & 5.42 & 2.57 & 0.66 & 15.47 & 2.19 & 8.16 & 0.69 \\
			&  & {\footnotesize (0.06)} & {\footnotesize (1.75)} & {\footnotesize (0.02)} & {\footnotesize (0.41)} & {\footnotesize (0.08)} & {\footnotesize (2.38)} & {\footnotesize (0.06)} & {\footnotesize (0.77)} & {\footnotesize (0.10)} & {\footnotesize (1.45)} & {\footnotesize (0.06)} & {\footnotesize (0.82)} \\
			&  & {\footnotesize [0.00]} & {\footnotesize [0.94]} & {\footnotesize [0.00]} & {\footnotesize [0.95]} & {\footnotesize [0.00]} & {\footnotesize [0.94]} & {\footnotesize [0.00]} & {\footnotesize [0.96]} & {\footnotesize [0.00]} & {\footnotesize [0.96]} & {\footnotesize [0.00]} & {\footnotesize [0.95]} \\
			& \hspace{0.5cm}8 & 12.52 & 2.10 & 0.65 & 0.12 & 24.23 & 3.89 & 2.29 & 0.61 & 12.70 & 2.10 & 6.32 & 0.65 \\
			&  & {\footnotesize (0.06)} & {\footnotesize (1.50)} & {\footnotesize (0.02)} & {\footnotesize (0.35)} & {\footnotesize (0.08)} & {\footnotesize (2.03)} & {\footnotesize (0.06)} & {\footnotesize (0.72)} & {\footnotesize (0.09)} & {\footnotesize (1.35)} & {\footnotesize (0.06)} & {\footnotesize (0.77)} \\
			&  & {\footnotesize [0.00]} & {\footnotesize [0.96]} & {\footnotesize [0.00]} & {\footnotesize [0.96]} & {\footnotesize [0.00]} & {\footnotesize [0.96]} & {\footnotesize [0.00]} & {\footnotesize [0.95]} & {\footnotesize [0.00]} & {\footnotesize [0.95]} & {\footnotesize [0.00]} & {\footnotesize [0.95]} \\
			& \hspace{0.5cm}10 & 11.20 & 1.23 & 0.59 & 0.07 & 21.54 & 2.28 & 2.03 & 0.27 & 10.74 & 0.92 & 5.13 & 0.30 \\
			&  & {\footnotesize (0.06)} & {\footnotesize (1.12)} & {\footnotesize (0.02)} & {\footnotesize (0.26)} & {\footnotesize (0.08)} & {\footnotesize (1.53)} & {\footnotesize (0.06)} & {\footnotesize (0.54)} & {\footnotesize (0.09)} & {\footnotesize (1.01)} & {\footnotesize (0.06)} & {\footnotesize (0.57)} \\
			&  & {\footnotesize [0.00]} & {\footnotesize [0.96]} & {\footnotesize [0.00]} & {\footnotesize [0.96]} & {\footnotesize [0.00]} & {\footnotesize [0.96]} & {\footnotesize [0.00]} & {\footnotesize [0.96]} & {\footnotesize [0.00]} & {\footnotesize [0.95]} & {\footnotesize [0.00]} & {\footnotesize [0.95]} \\
			&  &  &  &  &  &  &  &  &  &  &  &  &  \\
			\hspace{0.1cm}N = 5,000 & \hspace{0.5cm}4 & 16.51 & 4.12 & 0.81 & 0.23 & 33.16 & 7.58 & 2.67 & 0.88 & 19.53 & 2.82 & 11.65 & 0.89 \\
			&  & {\footnotesize (0.04)} & {\footnotesize (1.99)} & {\footnotesize (0.01)} & {\footnotesize (0.47)} & {\footnotesize (0.05)} & {\footnotesize (2.70)} & {\footnotesize (0.04)} & {\footnotesize (0.90)} & {\footnotesize (0.06)} & {\footnotesize (1.69)} & {\footnotesize (0.04)} & {\footnotesize (0.96)} \\
			&  & {\footnotesize [0.00]} & {\footnotesize [0.94]} & {\footnotesize [0.00]} & {\footnotesize [0.95]} & {\footnotesize [0.00]} & {\footnotesize [0.93]} & {\footnotesize [0.00]} & {\footnotesize [0.97]} & {\footnotesize [0.00]} & {\footnotesize [0.98]} & {\footnotesize [0.00]} & {\footnotesize [0.95]} \\
			& \hspace{0.5cm}6 & 14.17 & 1.00 & 0.72 & 0.05 & 27.75 & 1.85 & 2.55 & 0.30 & 15.47 & 0.88 & 8.19 & 0.26 \\
			&  & {\footnotesize (0.04)} & {\footnotesize (1.01)} & {\footnotesize (0.01)} & {\footnotesize (0.24)} & {\footnotesize (0.05)} & {\footnotesize (1.37)} & {\footnotesize (0.04)} & {\footnotesize (0.48)} & {\footnotesize (0.06)} & {\footnotesize (0.90)} & {\footnotesize (0.04)} & {\footnotesize (0.51)} \\
			&  & {\footnotesize [0.00]} & {\footnotesize [0.95]} & {\footnotesize [0.00]} & {\footnotesize [0.95]} & {\footnotesize [0.00]} & {\footnotesize [0.95]} & {\footnotesize [0.00]} & {\footnotesize [0.93]} & {\footnotesize [0.00]} & {\footnotesize [0.95]} & {\footnotesize [0.00]} & {\footnotesize [0.94]} \\
			& \hspace{0.5cm}8 & 12.54 & 0.67 & 0.65 & 0.04 & 24.25 & 1.23 & 2.30 & 0.20 & 12.74 & 0.72 & 6.33 & 0.23 \\
			&  & {\footnotesize (0.04)} & {\footnotesize (0.91)} & {\footnotesize (0.01)} & {\footnotesize (0.21)} & {\footnotesize (0.05)} & {\footnotesize (1.23)} & {\footnotesize (0.04)} & {\footnotesize (0.45)} & {\footnotesize (0.06)} & {\footnotesize (0.84)} & {\footnotesize (0.04)} & {\footnotesize (0.48)} \\
			&  & {\footnotesize [0.00]} & {\footnotesize [0.96]} & {\footnotesize [0.00]} & {\footnotesize [0.96]} & {\footnotesize [0.00]} & {\footnotesize [0.96]} & {\footnotesize [0.00]} & {\footnotesize [0.96]} & {\footnotesize [0.00]} & {\footnotesize [0.96]} & {\footnotesize [0.00]} & {\footnotesize [0.95]} \\
			& \hspace{0.5cm}10 & 11.23 & 0.40 & 0.59 & 0.02 & 21.59 & 0.73 & 2.03 & 0.11 & 10.74 & 0.39 & 5.11 & 0.12 \\
			&  & {\footnotesize (0.04)} & {\footnotesize (0.68)} & {\footnotesize (0.01)} & {\footnotesize (0.16)} & {\footnotesize (0.05)} & {\footnotesize (0.92)} & {\footnotesize (0.04)} & {\footnotesize (0.34)} & {\footnotesize (0.06)} & {\footnotesize (0.64)} & {\footnotesize (0.04)} & {\footnotesize (0.36)} \\
			&  & {\footnotesize [0.00]} & {\footnotesize [0.95]} & {\footnotesize [0.00]} & {\footnotesize [0.95]} & {\footnotesize [0.00]} & {\footnotesize [0.95]} & {\footnotesize [0.00]} & {\footnotesize [0.96]} & {\footnotesize [0.00]} & {\footnotesize [0.96]} & {\footnotesize [0.00]} & {\footnotesize [0.96]} \\
			\hline
	\end{tabular}}
	\begin{minipage}{\textwidth}
		\renewcommand{\baselinestretch}{0}\selectfont\setlength{\spaceskip}{0em}\footnotesize
		\emph{Notes.} This table extends the simulation design in Table 2 to networks with 4 to 10 agents per network-type cell. It reports the mean squared errors, average estimated standard errors (in parentheses), and coverage rates for 95\% confidence intervals (in brackets), based on $1{,}000$ simulation draws. All network-type cells within a given design have equal size.\hspace{-0.04cm} All other features of the simulation design are the same as in Table 2,\hspace{-0.03cm} as described in Appendix C.1. \\
	\end{minipage}
\end{table}

\end{singlespace}
\vspace{-0.1cm}

\newpage

\section{Robustness Analyses for Project STAR}

\subsubsection*{D.1.\hspace{0.1cm} Examining Alternative Outcome Variables}

\noindent
Table \ref{tab:star-alternative-outcomes} reports IV estimates using alternative binary outcomes that indicate whether students’ math and reading scores fall in the top 25\% or top 75\% of the statewide distribution. The estimates in this table are statistically indistinguishable from those reported in Table \ref*{main-tab:star-estimates} of the main text, which use 50\% as the cutoff for defining student outcomes. This similarity suggests that the findings are robust to the choice of cutoff used to define high achievement.

\vspace{-0.15cm}
\begin{singlespace}
	\begin{table}[ht!]
	\centering
	\caption{Project STAR: IV Estimates for Alternative Outcome Variables}
	\label{tab:star-alternative-outcomes}
	\footnotesize
		\begin{tabular}{lccc@{\hskip 20pt}cc}
			\toprule
			& \multicolumn{2}{c}{Math Score} & & \multicolumn{2}{c}{Reading Score}  \\
			\cmidrule(lr){2-3} \cmidrule(lr){5-6}
			& Top 25\% & Top 75\% & & Top 25\%  & Top 75\% \\
			\midrule
			$h_m - h_f$ & \phantom{1}-0.088\hspace{0.2cm}(0.56) & \phantom{-}\phantom{1}0.625\hspace{0.2cm}\phantom{1}(2.69) & & \phantom{-}\phantom{1}0.060\hspace{0.2cm}(0.15) & \phantom{-}\phantom{1}0.211\hspace{0.2cm}(0.47) \\ 
			$J_{mm} - J_{fm}$ & \phantom{-}\phantom{1}5.373\hspace{0.2cm}(2.73) & \phantom{-}\phantom{1}2.021\hspace{0.2cm}(11.26) & & \phantom{-}\phantom{1}5.157\hspace{0.2cm}(0.98) & \phantom{-}\phantom{1}4.738\hspace{0.2cm}(0.76) \\ 
			$J_{ff} - J_{mf}$ & \phantom{-}\phantom{1}5.446\hspace{0.2cm}(2.45) & \phantom{-}\phantom{1}2.351\hspace{0.2cm}\phantom{1}(9.51) & & \phantom{-}\phantom{1}5.293\hspace{0.2cm}(1.20) & \phantom{-}\phantom{1}4.988\hspace{0.2cm}(0.79) \\ 
			\midrule
			Observations & 5,798 & 5,798 & & 5,718 & 5,718 \\
			School FE       & Yes  & Yes & & Yes  & Yes \\
			1st Stage $F$ ($\bar\omega_n^m$) & 8.74 & 15.53 & & 24.53 & 15.50 \\
			1st Stage $F$ ($\bar\omega_n^f$) & 13.72 & 15.24 & & 17.32 & 6.69 \\
			\bottomrule
	\end{tabular}
	\begin{minipage}{\textwidth}
		\renewcommand{\baselinestretch}{0}\selectfont\setlength{\spaceskip}{0em}\footnotesize
		\emph{Notes.} This table shows estimates of gender differences in peer effects for two alternative binary outcomes:  scoring in the top 25\% and scoring in the top 75\%. All the other specifications are the same as in Table \ref*{main-tab:star-estimates} of the main text.
	\end{minipage}
\end{table}
\end{singlespace}
\vspace{-0.1cm}

\subsubsection*{D.2.\hspace{0.1cm} Estimates for a Model with Uniform Peer Effects}

\noindent
Table \ref{tab:star-uniform-effects} examines how the estimates change if peer effects are restricted to be uniform, $J_{k\ell}=J$. Under this restriction, the estimates imply that girls face lower private effort costs than boys ($h_f>h_m$), although this difference is no longer statistically significant after including school fixed effects. These results differ substantially from those of the generalized model, suggesting that allowing peer effects to differ across students can materially change the conclusions about the mechanisms behind gender differences in student achievement.

\begin{singlespace}
	\begin{table}[ht!]
	\centering
	\caption{Project STAR: Estimates for a Model with Uniform Peer Effects}
	\label{tab:star-uniform-effects}
	\footnotesize
		\begin{tabular}{lccc@{\hskip 20pt}cc}
			\toprule
			& \multicolumn{2}{c}{Top 50\% Math Score} & & \multicolumn{2}{c}{Top 50\% Reading Score}  \\
			\cmidrule(lr){2-3} \cmidrule(lr){5-6}
			& (1) & (2) & & (1)  & (2) \\
			\midrule
			$h_m - h_f$ & \phantom{1}-0.363\hspace{0.2cm}(0.09) & \phantom{1}-1.199\hspace{0.2cm}(1.73) & & \phantom{1}-0.314\hspace{0.2cm}(0.10) & \phantom{1}-1.951\hspace{0.2cm}(1.69) \\ 
			$J_{mm} - J_{fm}$ & \textemdash & \textemdash & & \textemdash & \textemdash \\
			$J_{ff} - J_{mf}$ & \textemdash & \textemdash & & \textemdash & \textemdash \\
			\midrule
			Observations & 5,798 & 5,760 & & 5,718 & 5,681 \\
			School FE       & No  & Yes & & No  & Yes \\
			\bottomrule
	\end{tabular}
	\begin{minipage}{\textwidth}
		\renewcommand{\baselinestretch}{0}\selectfont\setlength{\spaceskip}{0em}\footnotesize
		\emph{Notes.} This table presents estimates of the gender fixed-effect difference, $h_m-h_f$, for a model imposing uniform peer effects, $J_{k\ell} = J$ for all $k, \ell \in \{m, f\}$. All the other specifications are the same as in Table \ref*{main-tab:star-estimates} of the main text.
	\end{minipage}
\end{table}
\end{singlespace}

\subsubsection*{D.3.\hspace{0.1cm} Verifying Random Assignment to Classrooms}

\enlargethispage{0.8\baselineskip}
\noindent
Under the experimental protocols, students within a school were randomly assigned to one of three classroom types. Yet, some schools had multiple classrooms of each type, raising the possibility of nonrandom sorting across classrooms within a given type. Although \cite{gr2008} argues that such sorting is unlikely, I follow his robustness check by restricting the sample to schools with only one classroom of each type, eliminating the possibility of within-type sorting. The first stage is weak for this restricted sample, particularly for math scores, but the estimates remain statistically indistinguishable from the full-sample results.

\vspace{-0.15cm}
\begin{singlespace}
	\begin{table}[ht!]
	\centering
	\caption{Project STAR: IV Estimates for Schools with One Classroom of Each Type}
	\label{tab:star-restricted-schools}
	\footnotesize
		\begin{tabular}{lccc@{\hskip 20pt}cc}
			\toprule
			& \multicolumn{2}{c}{Top 50\% Math Score} & & \multicolumn{2}{c}{Top 50\% Reading Score}  \\
			\cmidrule(lr){2-3} \cmidrule(lr){5-6}
			& (1) & (2) & & (1)  & (2) \\
			\midrule
			$h_m - h_f$ & \phantom{-}\phantom{1}0.819\hspace{0.2cm}(3.90) & \phantom{1}-1.211\hspace{0.2cm}(9.40) & & \phantom{1}-0.089\hspace{0.2cm}(0.13) & \phantom{1}-0.109\hspace{0.2cm}(0.49) \\ 
			$J_{mm} - J_{fm}$ & \phantom{-}\phantom{1}7.851\hspace{0.2cm}(16.93) & \phantom{-}\phantom{1}5.235\hspace{0.2cm}(4.18) & & \phantom{-}\phantom{1}4.696\hspace{0.2cm}(0.53) & \phantom{-}\phantom{1}4.172\hspace{0.2cm}(0.76) \\ 
			$J_{ff} - J_{mf}$ & \phantom{-}\phantom{1}9.676\hspace{0.2cm}(25.19) & \phantom{-}\phantom{1}2.781\hspace{0.2cm}(13.27) & & \phantom{-}\phantom{1}4.539\hspace{0.2cm}(0.76) & \phantom{-}\phantom{1}3.933\hspace{0.2cm}(1.32) \\ 
			\midrule
			Observations & 1,603 & 1,603 & & 1,587 & 1,587 \\
			School FE       & No  & Yes & & No  & Yes \\
			1st Stage $F$ ($\bar\omega_n^m$) & 6.18 & 6.41 & & 18.99 & 19.63 \\
			1st Stage $F$ ($\bar\omega_n^f$) & 3.24 & 3.36 & & 5.94 & 5.91 \\
			\bottomrule
	\end{tabular}
	\begin{minipage}{\textwidth}
		\renewcommand{\baselinestretch}{0}\selectfont\setlength{\spaceskip}{0em}\footnotesize
		\emph{Notes.} This table presents estimates of gender differences in peer effects for an estimation sample that is restricted to schools with one classroom of each type. All the other specifications are the same as in Table \ref*{main-tab:star-estimates} of the main text.
	\end{minipage}
\end{table}
\end{singlespace}

\vspace{-0.5cm}
\subsubsection*{D.4.\hspace{0.1cm} Misspecification Tests}

\enlargethispage{1.1\baselineskip}
\noindent
I conduct hypothesis tests to assess whether the type-specific parameter differences depend on observed classroom characteristics. Specifically, for each of $J_{ff} \!-\! J_{mf}$, $J_{mm} \!-\! J_{fm}$, and $h_m \!-\! h_f$, I evaluate the null hypothesis of no difference along each of the following criteria:
\begin{enumerate}
	\item high poverty classrooms ($\geq50$\% FRPL) and low poverty classrooms ($<50$\% FRPL)
	\item high minority classrooms ($<75$\% white) and low minority classrooms ($\geq 75$\% white)
	\item more experienced teachers ($>10$ years) and less experienced teachers ($\leq10$ years)
	\item more educated teachers (graduate deg.) and less educated teachers (no graduate deg.)
	\item rural classrooms (in rural district) and urban classrooms  (in urban or suburban district)
\end{enumerate}

\noindent
The $p$-values from these hypothesis tests are shown in Table \ref{tab:star-misspecification}, with math scores in Panel A and reading scores in Panel B. For both outcome variables, I find no evidence to reject the null hypothesis that each parameter difference is equal across groups. These results support the cross-group restrictions on parameter differences, which are required for identification.

\vspace{-0.25cm}
\begin{singlespace}
	\begin{table}[ht!]
	\centering
	\caption{Misspecification Tests: Project STAR}
	\label{tab:star-misspecification}
	\small
	\begin{tabular}{lccccc}
		\toprule
		& Large Share & Large Share & Teacher Has & Teacher Has & Rural \\
		& Poverty & Minority & $>$10 Years Experience & Higher Degree & District \\
		\midrule
		\multicolumn{6}{l}{\underline{\textit{Panel A. Math Score Outcome}}} \\
		$h_m - h_f$ & 0.710 & 0.590 & 0.790 & 0.880 & 0.910 \\
		$J_{mm} - J_{fm}$ & 0.614 & 0.644 & 0.746 & 0.965 & 0.929 \\
		$J_{ff} - J_{mf}$ & 0.626 & 0.603 & 0.746 & 0.993 & 0.921 \\
		\midrule
		\multicolumn{6}{l}{\underline{\textit{Panel B. Reading Score Outcome}}} \\
		$h_m - h_f$ & 0.996 & 0.611 & 0.760 & 0.983 & 0.834 \\
		$J_{mm} - J_{fm}$ & 0.442 & 0.756 & 0.798 & 0.987 & 0.523 \\
		$J_{ff} - J_{mf}$ & 0.622 & 0.711 & 0.790 & 0.990 & 0.659 \\
		\bottomrule
	\end{tabular}
	\begin{minipage}{\textwidth}
		\renewcommand{\baselinestretch}{0}\selectfont\setlength{\spaceskip}{0em}\footnotesize
		\emph{Notes.} This table presents $p$-values from one-dimensional tests of whether $h_m-h_f$, $J_{ff}-J_{mf}$, and $J_{mm}-J_{fm}$ correlate with class covariates. Outcomes indicate scoring in the top 50\% statewide on the corresponding exam.
	\end{minipage}
\end{table}

\end{singlespace}

\subsubsection*{D.5.\hspace{0.1cm} Sensitivity of Estimates to the Partitioning Rule}

\noindent
When estimating the model, I construct internal instruments by randomly dividing students within each classroom and type into two subsamples, $a$ and $b$, and using average outcomes in subsample $b$ as instruments for the corresponding averages in subsample $a$. Under the model’s assumptions, these averages measure the same underlying choice probabilities with conditionally independent sampling errors. However, the estimates can nevertheless vary across partitions in a finite sample. To assess sensitivity to the partitioning rule, I re-estimate the model $M\!=\!1{,}000$ times, drawing a new random partition each time while holding the estimation specification fixed. I then plot histograms of the resulting parameter estimates.

Figure \ref{fig:star-partitions} reports the histograms of the parameter estimates for each outcome. The estimates appear approximately normally distributed, and variation across partitions preserves the qualitative patterns reported in Table \ref*{main-tab:star-estimates}. The baseline IV estimates also lie within the main mass of these distributions. Together, these results suggest that the main findings are not particularly sensitive to the random partition used to construct the internal instruments.

\subsubsection*{D.6.\hspace{0.1cm} Finite-Sample Performance of the IV Estimator}

\enlargethispage{2\baselineskip}
\noindent
Table \ref{tab:star-simulations} examines the IV estimator’s finite-sample performance in simulations that reproduce Project STAR’s number of classrooms, classroom sizes, and gender composition. I ca-\\[-0.75cm]

\noindent
librate the parameters to the estimates in Column (1) of Table 3, the specification without controls.\footnote{Since only differences in the parameters are identified, I set the remaining parameters, namely the levels of $h_k$ and $J_{k\ell}$ and the distribution of $\alpha_n$, to match the observed distribution of outcomes across classrooms in the data.} I use this specification throughout all simulations and apply the same partitioning rule and standard error calculation used to obtain the estimates in Table 3. To simulate data, I follow the procedure described in Appendix C.1: running a fixed-point iteration under the logistic model, and then drawing agents' choices conditional on the resulting probabilities.

Based on $2{,}000$ simulation draws, Table \ref{tab:star-simulations} reports root mean squared errors, average estimated standard errors, and 95\% confidence interval coverage rates. Panel A presents results for math scores, and Panel B presents results for reading scores. The table demonstrates that average estimated standard errors are close to the root mean squared errors, and all coverage rates are within 2.5 percentage points of the nominal 95\% level. These results support the reliability of inference in finite samples that resemble the empirical application.

\begin{figure}[h]
	\begin{subfigure}{.329\textwidth}
		\centering
		\includegraphics[width=\linewidth]{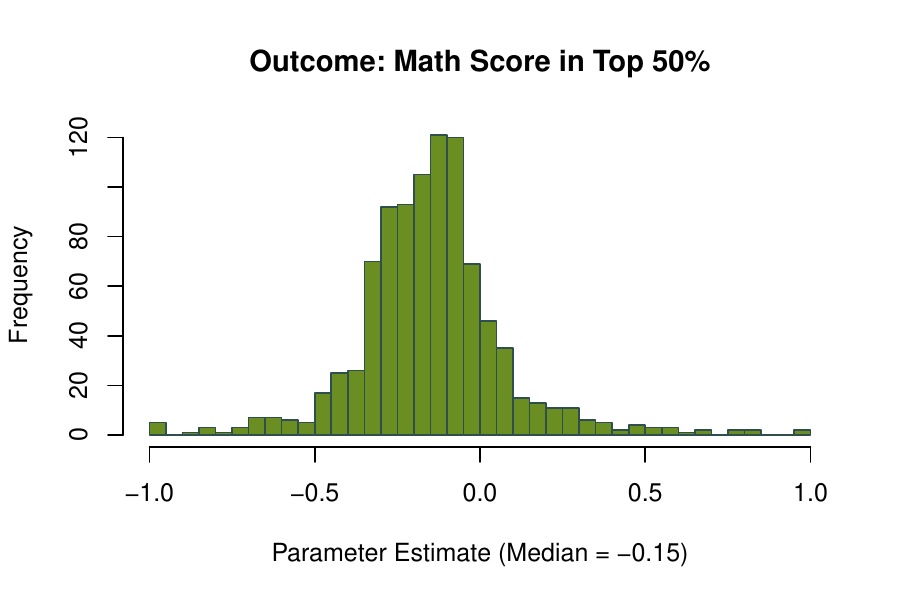}
		\caption{\scriptsize Parameter: $h_m - h_f$}
	\end{subfigure}
	\begin{subfigure}{.329\textwidth}
		\centering
		\includegraphics[width=\linewidth]{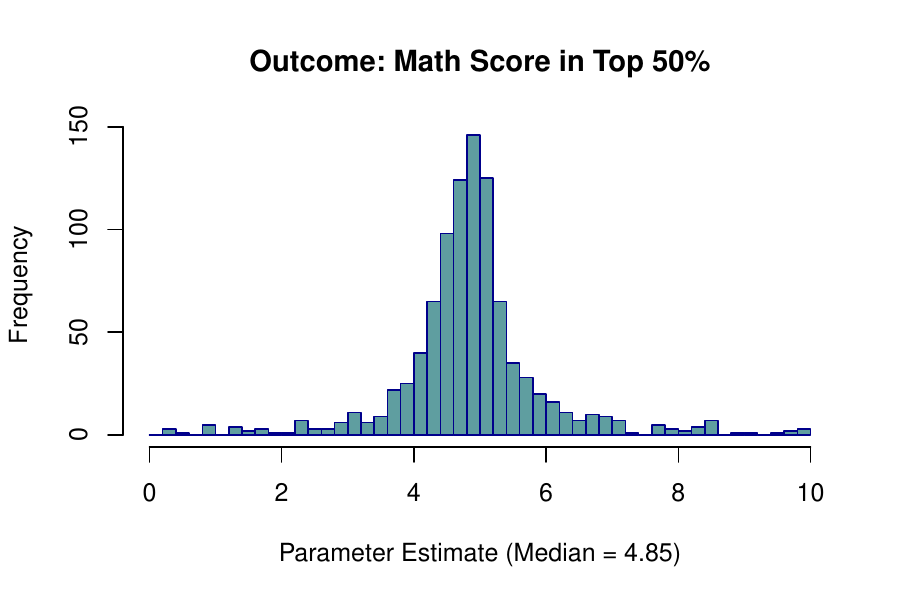}
		\caption{\scriptsize Parameter: $J_{mm} - J_{fm}$}
	\end{subfigure}
	\begin{subfigure}{.329\textwidth}
		\centering
		\includegraphics[width=\linewidth]{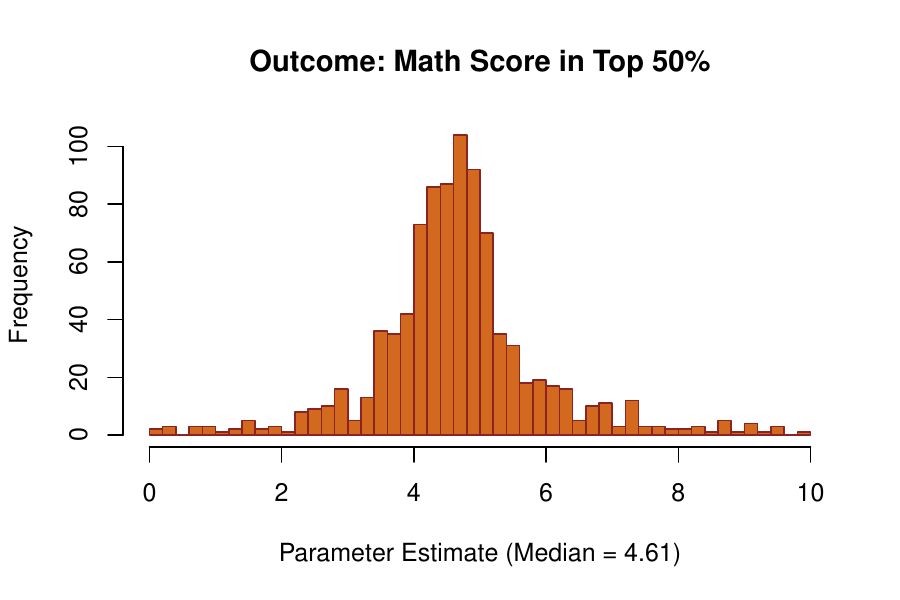}
		\caption{\scriptsize Parameter: $J_{ff} - J_{mf}$}
	\end{subfigure}
	\begin{subfigure}{.329\textwidth}
		\centering
		\includegraphics[width=\linewidth]{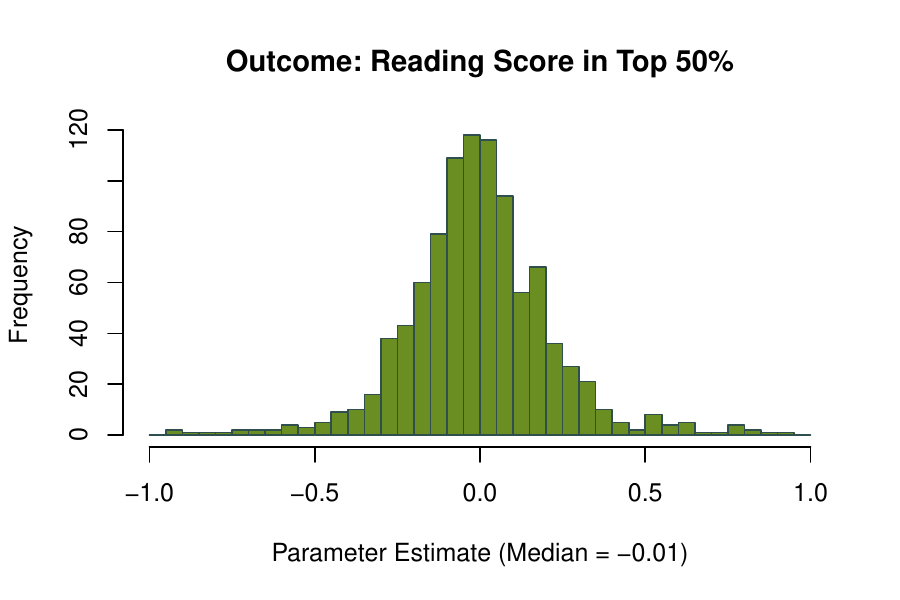}
		\caption{\scriptsize Parameter: $h_m - h_f$}
	\end{subfigure}
	\begin{subfigure}{.329\textwidth}
		\centering
		\includegraphics[width=\linewidth]{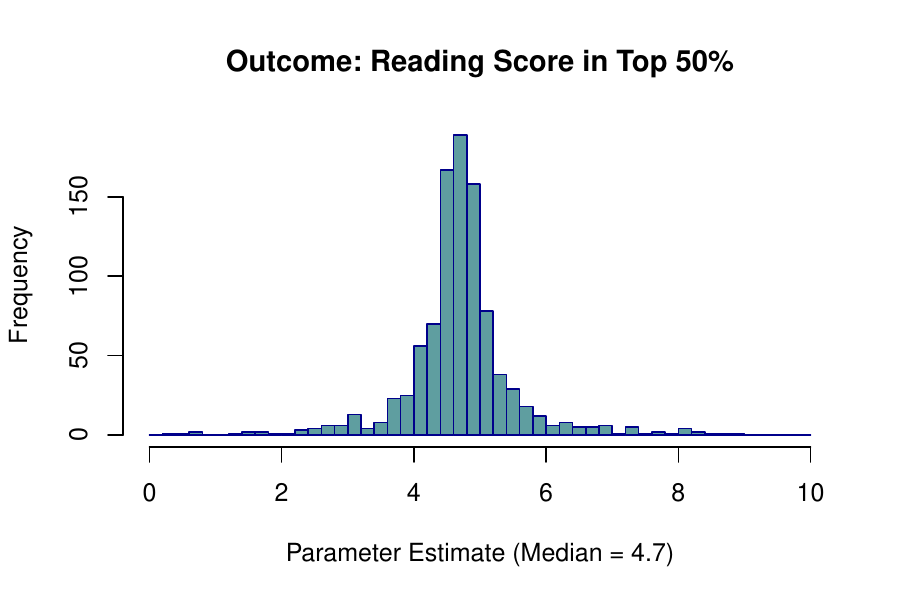}
		\caption{\scriptsize Parameter: $J_{mm} - J_{fm}$}
	\end{subfigure}
	\begin{subfigure}{.329\textwidth}
		\centering
		\includegraphics[width=\linewidth]{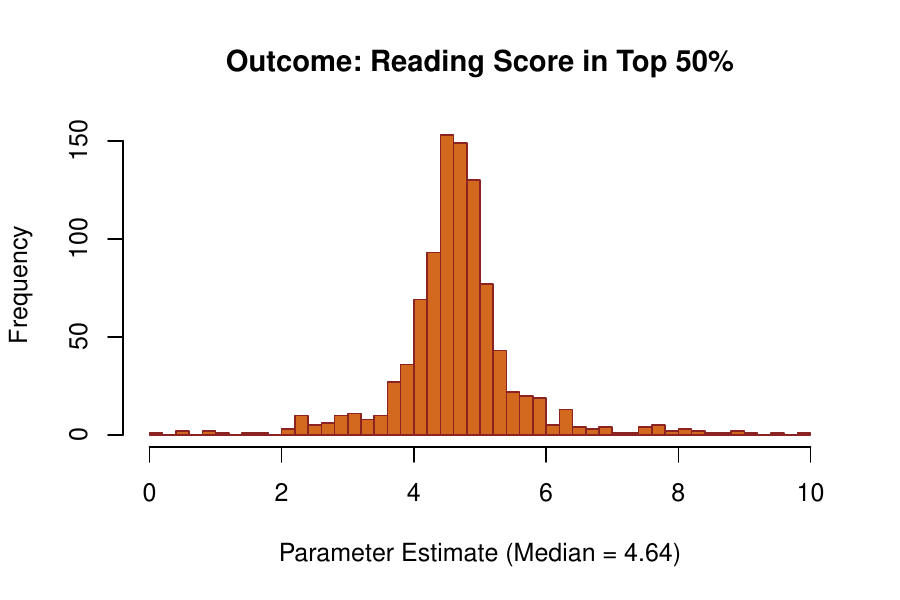}
		\caption{\scriptsize Parameter: $J_{ff} - J_{mf}$}
	\end{subfigure}
	\caption{\small Project STAR: Histograms of IV Estimates across $M = 1000$ Random Partitions}
	\label{fig:star-partitions}
\end{figure}

\enlargethispage{0.5\baselineskip}
\begin{singlespace}
	\begin{table}[ht!]
	\centering
	\caption{Project STAR: Finite-Sample Performance of the IV Estimator}
	\label{tab:star-simulations}
	\small
	\begin{tabular}{lcccc}
		\toprule
		 & Target Value & Root MSE & Avg.\ Standard Error & 95\% Coverage Rate \\
		\midrule
		\multicolumn{5}{l}{\underline{\textit{Panel A. Math Score Outcome}}} \\
		$h_m - h_f$ & -0.112 & 0.366 & 0.355 & 0.956 \\ 
		$J_{mm} - J_{fm}$ & 4.229 & 0.355 & 0.359 & 0.968 \\ 
		$J_{ff} - J_{mf}$ & 4.127 & 0.363 & 0.354 & 0.956 \\ 
		\midrule
		\multicolumn{5}{l}{\underline{\textit{Panel B. Reading Score Outcome}}} \\
		$h_m - h_f$ & -0.029 & 0.505 & 0.511 & 0.971 \\ 
		$J_{mm} - J_{fm}$ & 5.339 & 0.501 & 0.505 & 0.972 \\ 
		$J_{ff} - J_{mf}$ & 5.223 & 0.523 & 0.520 & 0.963 \\ 
		\bottomrule
	\end{tabular}
	\begin{minipage}{\textwidth}
		\renewcommand{\baselinestretch}{0}\selectfont\setlength{\spaceskip}{0em}\footnotesize
		\emph{Notes.} This table presents root mean squared errors, average estimated standard errors, and coverage rates for 95\% confidence intervals, based on $2{,}000$ simulation draws. The design exactly reproduces the number of classrooms and each classroom's size and gender composition in the Project STAR data. The simulations use the specification in Column (1) of Table \ref*{main-tab:star-estimates} (no school fixed effects), with target values set to the corresponding empirical estimates.
	\end{minipage}
\end{table}
\end{singlespace}

\section{Robustness Analyses for Primary School Deworming}

\subsubsection*{E.1.\hspace{0.1cm} Differences in Treatment Uptake within Schools}

\noindent
Figure \ref{fig:deworming-uptake-gaps} shows histograms of the within-school gaps in treatment uptake rates by student type. For any two types $k_1$ and $k_2$, the size and direction of these gaps vary greatly across schools, which suggests that factors at the school level shape disparities in treatment uptake.

\begin{figure}[!ht]
	\centering
	\begin{subfigure}{.62\textwidth}
		\centering
		\includegraphics[width=\linewidth]{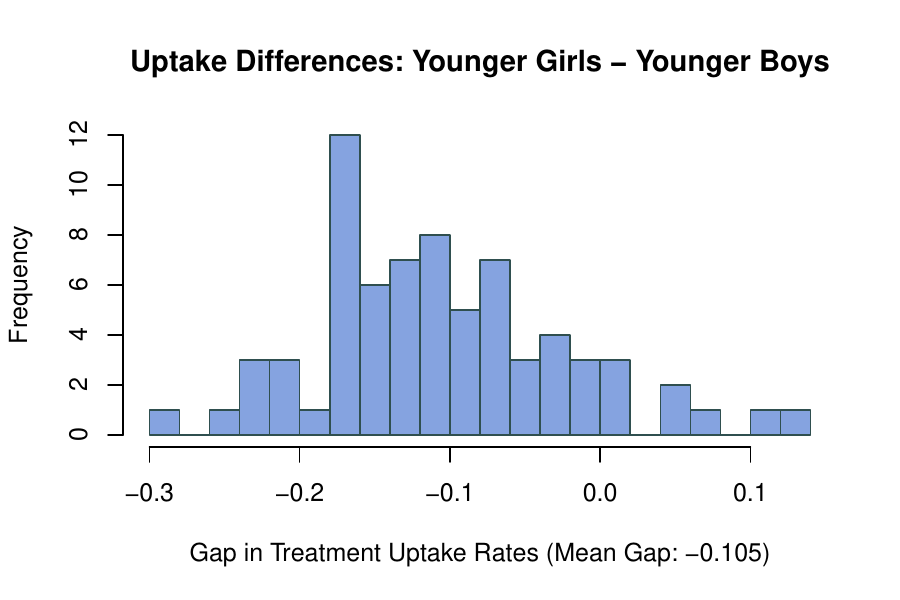}
	\end{subfigure}
	\begin{subfigure}{.62\textwidth}
		\centering
		\includegraphics[width=\linewidth]{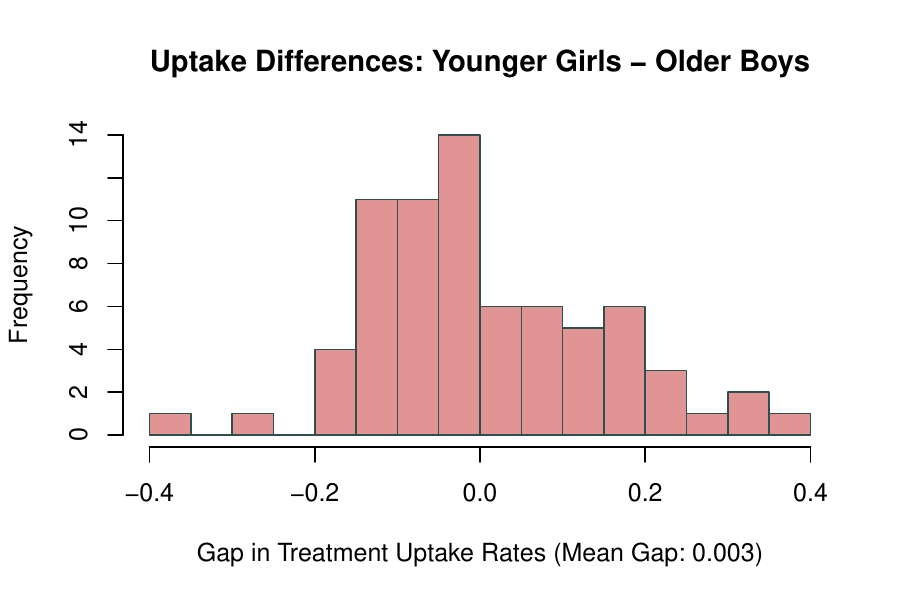}
	\end{subfigure}
	\begin{subfigure}{.62\textwidth}
		\centering
		\includegraphics[width=\linewidth]{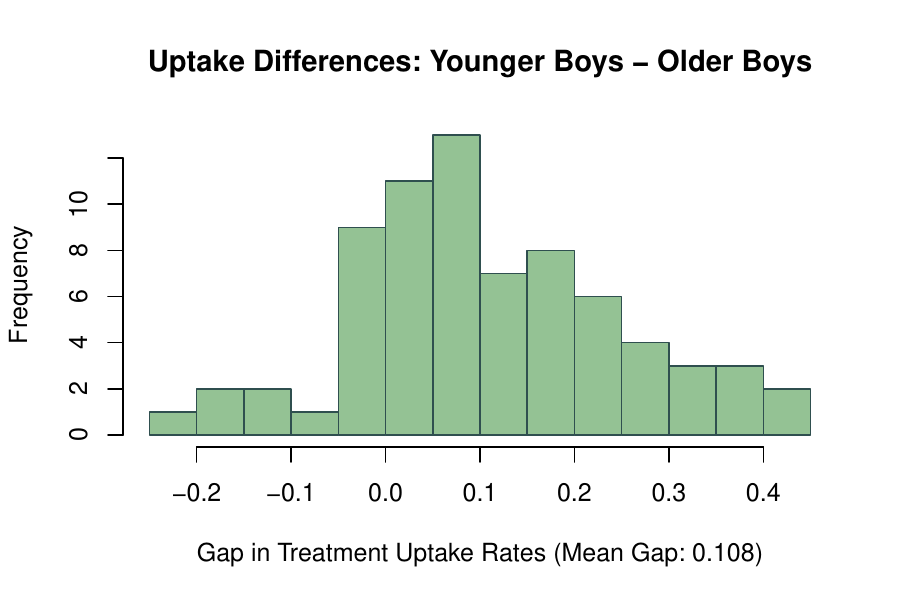}
	\end{subfigure}
	\caption{\small Within-School Differences in Treatment Uptake Rates Across Student Types}
	\label{fig:deworming-uptake-gaps}
\end{figure}

\subsubsection*{E.2.\hspace{0.1cm} IV Estimates for First-Time Eligible Students}

\noindent
To better isolate contemporaneous interactions from preexisting factors that can influence treatment decisions, Table \ref{tab:deworming-first-time} reports IV estimates for eligible students in schools receiving treatment for the first time. This sample includes Group 1 schools in 1998 and Group 2 schools in 1999. The resulting estimates are largely consistent with the full-sample results.

\begin{singlespace}
	\begin{table}[ht]
	\centering
	\caption{Deworming in Schools: IV Estimates for First-Time Eligible Students}
	\label{tab:deworming-first-time}
	\footnotesize
	\begin{tabular}{l@{\hskip 25pt}cccc}
		\toprule
		& (1) & (2) & (3) & (4) \\
		\midrule
		\multicolumn{2}{l}{\underline{\textit{Girls $\leq$ 13 and Boys $\leq$ 13}}} & & & \\
		$h_1 - h_2$ & \phantom{-}\phantom{1}0.338\hspace{0.2cm}(1.31) & \phantom{-}\phantom{1}0.227\hspace{0.2cm}(1.41) & \phantom{-}\phantom{1}0.779\hspace{0.2cm}(2.19) & \phantom{-}\phantom{1}1.184\hspace{0.2cm}(2.16) \\ 
		$J_{11} - J_{21}$ & \phantom{-}10.488\hspace{0.2cm}(3.29) & \phantom{-}10.777\hspace{0.2cm}(3.38) & \phantom{-}10.647\hspace{0.2cm}(2.22) & \phantom{-}10.973\hspace{0.2cm}(1.81) \\ 
		$J_{12} - J_{22}$ & -10.527\hspace{0.2cm}(2.57) & -11.105\hspace{0.2cm}(2.63) & -13.414\hspace{0.2cm}(2.77) & -14.942\hspace{0.2cm}(2.62) \\ 
		$J_{13} - J_{23}$ & \phantom{1}-0.402\hspace{0.2cm}(1.21) & \phantom{1}-0.001\hspace{0.2cm}(1.23) & \phantom{-}\phantom{1}1.369\hspace{0.2cm}(1.46) & \phantom{-}\phantom{1}1.452\hspace{0.2cm}(1.56) \\
		& & & & \\
		\multicolumn{2}{l}{\underline{\textit{Girls $\leq$ 13 and Boys $>$ 13}}} & & & \\
		$h_1 - h_3$ & \phantom{-}\phantom{1}0.494\hspace{0.2cm}(1.00) & \phantom{-}\phantom{1}0.062\hspace{0.2cm}(0.95) & \phantom{-}\phantom{1}1.079\hspace{0.2cm}(1.09) & \phantom{-}\phantom{1}2.193\hspace{0.2cm}(1.10) \\ 
		$J_{11} - J_{31}$ & \phantom{-}\phantom{1}5.215\hspace{0.2cm}(3.83) & \phantom{-}\phantom{1}6.338\hspace{0.2cm}(3.86) & \phantom{-}\phantom{1}5.645\hspace{0.2cm}(4.19) & \phantom{-}\phantom{1}5.226\hspace{0.2cm}(4.75) \\ 
		$J_{12} - J_{32}$ & \phantom{1}-1.251\hspace{0.2cm}(3.51) & \phantom{1}-3.497\hspace{0.2cm}(4.66) & \phantom{1}-5.055\hspace{0.2cm}(6.07) & \phantom{1}-5.898\hspace{0.2cm}(7.05) \\ 
		$J_{13} - J_{33}$ & \phantom{1}-4.206\hspace{0.2cm}(0.72) & \phantom{1}-2.647\hspace{0.2cm}(1.37) & \phantom{1}-1.248\hspace{0.2cm}(1.57) & \phantom{1}-1.086\hspace{0.2cm}(1.74) \\
		& & & & \\
		\multicolumn{2}{l}{\underline{\textit{Boys $\leq$ 13 and Boys $>$ 13}}} & & & \\
		$h_2 - h_3$ & \phantom{-}\phantom{1}0.156\hspace{0.2cm}(1.83) & \phantom{1}-0.165\hspace{0.2cm}(1.67) & \phantom{-}\phantom{1}0.300\hspace{0.2cm}(1.71) & \phantom{-}\phantom{1}1.009\hspace{0.2cm}(2.20) \\ 
		$J_{21} - J_{31}$ & \phantom{1}-5.273\hspace{0.2cm}(5.98) & \phantom{1}-4.439\hspace{0.2cm}(5.97) & \phantom{1}-5.003\hspace{0.2cm}(5.07) & \phantom{1}-5.747\hspace{0.2cm}(6.08) \\ 
		$J_{22} - J_{32}$ & \phantom{-}\phantom{1}9.276\hspace{0.2cm}(5.14) & \phantom{-}\phantom{1}7.608\hspace{0.2cm}(5.76) & \phantom{-}\phantom{1}8.359\hspace{0.2cm}(6.28) & \phantom{-}\phantom{1}9.043\hspace{0.2cm}(7.72) \\ 
		$J_{23} - J_{33}$ & \phantom{1}-3.804\hspace{0.2cm}(1.22) & \phantom{1}-2.647\hspace{0.2cm}(1.70) & \phantom{1}-2.617\hspace{0.2cm}(1.31) & \phantom{1}-2.538\hspace{0.2cm}(1.31) \\ 
		\midrule
		Year Fixed Effects & No & Yes & Yes & Yes \\
		Student Controls & No & No & Yes & Yes \\
		School Controls  & No & No & No & Yes \\
		1st Stage $F$ ($\bar\omega_n^1$) & 9.32 & 6.97 & 12.64 & 10.34 \\
		1st Stage $F$ ($\bar\omega_n^2$) & 10.02 & 8.05 & 12.08 & 9.74 \\
		1st Stage $F$ ($\bar\omega_n^3$) & 32.17 & 13.42 & 11.08 & 10.55 \\
		\bottomrule
	\end{tabular}
	\begin{minipage}{\textwidth}
		\renewcommand{\baselinestretch}{0}\selectfont\setlength{\spaceskip}{0em}\footnotesize
		\emph{Notes.}\ This table shows the estimated differences in spillover effects in treatment uptake for an estimation sample restricted to  eligible students in schools receiving treatment for the first time: Group 1 schools in 1998 and Group 2 schools in 1999. Apart from this sample restriction, all other specifications are the same as in Table \ref*{main-tab:deworming-estimates} of the paper. \\[-0.55cm]
	\end{minipage}
\end{table}

\end{singlespace}

\subsubsection*{E.3.\hspace{0.1cm} Misspecification Tests}

\noindent
For each type-specific parameter difference, $h_{k_1,t}-h_{k_2,t}$ and $J_{k_1\ell}-J_{k_2\ell}$, I test the null hypothesis of equality across groups that are defined by the following school characteristics:
\begin{enumerate}
	\item local infection rate (above vs.\ below median)
	\item treatment group (1998 group vs.\ 1999 group)
	\item population density within 3 km (above vs.\ below median)
	\item population density between 3 and 6 km (above vs.\ below median)
\end{enumerate}

\noindent
Table \ref{tab:deworming-misspecification} reports the $p$-values from the hypothesis tests. Overall, I find very little evidence that the parameter differences vary with observable school characteristics. These results are therefore consistent with the restrictions on parameter differences underlying identification.

\begin{singlespace}
	\begin{table}[ht!]
	\centering
	\caption{Misspecification Tests: Primary School Deworming} 
	\label{tab:deworming-misspecification}
	\footnotesize
	\begin{tabular}{lcccc}
		\toprule
		& High Local & Treatment & High Population & High Population  \\
		& Infection Rate & Group & Density ($<3$km) & Density (3-6km)  \\
		\midrule
		\multicolumn{1}{l}{\underline{\textit{Girls $\leq$ 13 and Boys $\leq$ 13}}} & & & & \\
		$h_1 - h_2$ & 0.954 & 0.914 & 0.053 & 0.748 \\
		$J_{11} - J_{21}$ & 0.304 & 0.722 & 0.583 & 0.231 \\
		$J_{12} - J_{22}$ & 0.120 & 0.772 & 0.381 & 0.122 \\
		$J_{13} - J_{23}$ & 0.983 & 0.735 & 0.852 & 0.650 \\
		& & & & \\
		\multicolumn{1}{l}{\underline{\textit{Girls $\leq$ 13 and Boys $>$ 13}}} & & & &\\
		$h_1 - h_3$ & 0.637 & 0.831 & 0.886 & 0.727 \\
		$J_{11} - J_{31}$ & 0.158 & 0.806 & 0.277 & 0.281 \\
		$J_{12} - J_{32}$ & 0.049 & 0.799 & 0.395 & 0.053 \\
		$J_{13} - J_{33}$ & 0.348 & 0.754 & 0.746 & 0.715 \\
		& & & & \\
		\multicolumn{1}{l}{\underline{\textit{Boys $\leq$ 13 and Boys $>$ 13}}} & & & & \\
		$h_2 - h_3$ & 0.343 & 0.751 & 0.367 & 0.999 \\
		$J_{21} - J_{31}$ & 0.580 & 0.934 & 0.682 & 0.886 \\
		$J_{22} - J_{32}$ & 0.866 & 0.866 & 0.993 & 0.709 \\
		$J_{23} - J_{33}$ & 0.278 & 0.835 & 0.795 & 0.433 \\
		\bottomrule
	\end{tabular}
	\begin{minipage}{\textwidth}
		\renewcommand{\baselinestretch}{0}\selectfont\setlength{\spaceskip}{0em}\footnotesize
		\emph{Notes.}\ This table presents $p$-values from one-dimensional tests of whether spillover effect differences and type fixed-effect differences correlate with observed school characteristics. The regression specification includes year fixed effects but does not add student controls. All other specifications are the same as in Table \ref*{main-tab:deworming-estimates} of the main text. \\
	\end{minipage}
\end{table}

\end{singlespace}

\subsubsection*{E.4.\hspace{0.1cm} Sensitivity of Estimates to the Partitioning Rule}

\noindent
As in the first empirical application, I evaluate the sensitivity of the IV estimates to the random partition used to construct the internal instruments. I re-estimate the model $M=1{,}000$ times, each time drawing a new random partition, and report histograms of the resulting estimates in Figure \ref{fig:deworming-partitions}. The baseline IV estimates in Table \ref*{main-tab:deworming-estimates} lie within the main mass of these distributions, and variation across partitions generally preserves the qualitative patterns in the baseline results. These findings suggest that the main conclusions in Section 4.2 of the main text are not especially sensitive to the partition that is used to construct instruments.

\clearpage
\begin{figure}[h]
	\begin{subfigure}{.329\textwidth}
		\centering
		\includegraphics[width=\linewidth]{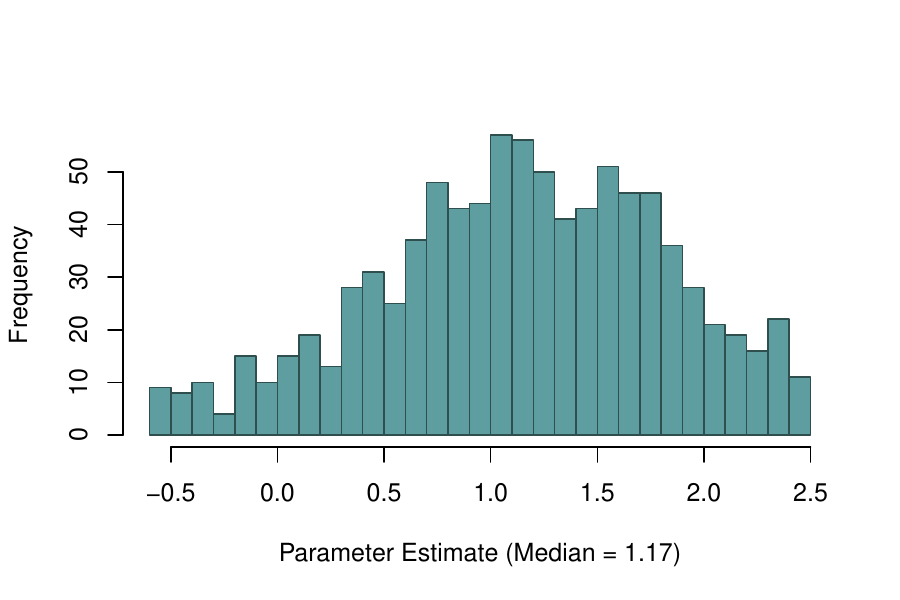} \vspace{-0.6cm}
		\caption{\scriptsize Parameter: $h_1 - h_2$}
	\end{subfigure}
	\begin{subfigure}{.329\textwidth}
		\centering
		\includegraphics[width=\linewidth]{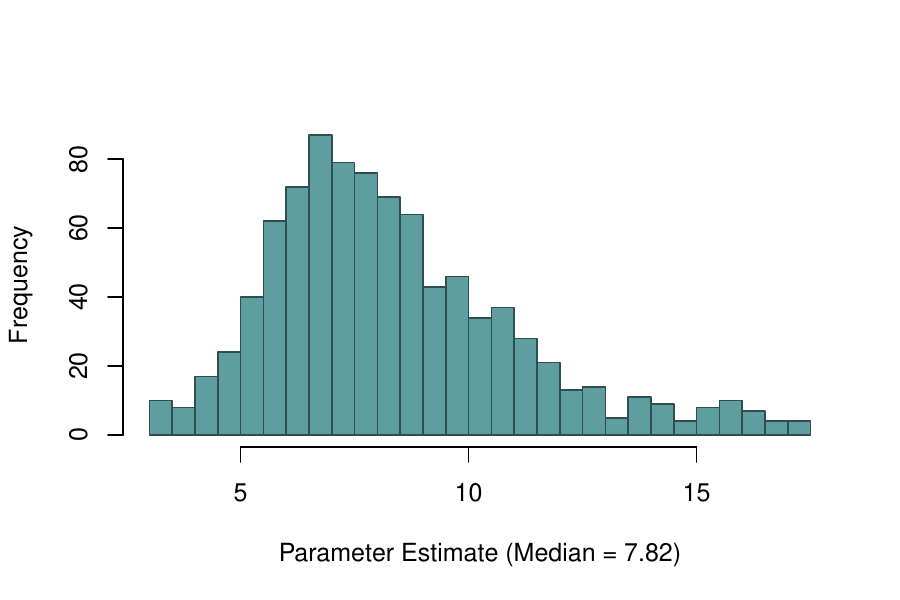} \vspace{-0.6cm}
		\caption{\scriptsize Parameter: $J_{11} - J_{21}$}
	\end{subfigure}
	\begin{subfigure}{.329\textwidth}
		\centering
		\includegraphics[width=\linewidth]{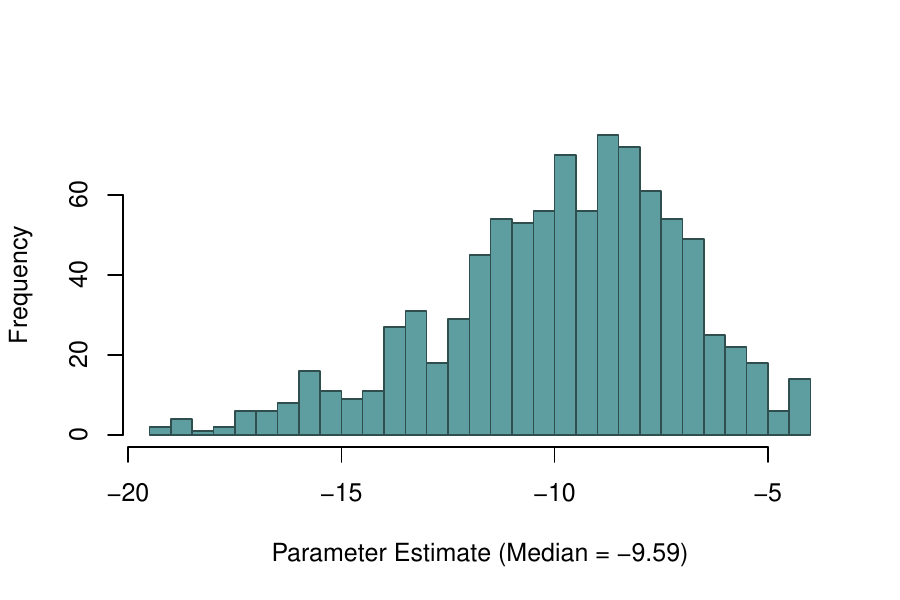} \vspace{-0.6cm}
		\caption{\scriptsize Parameter: $J_{12} - J_{22}$}
	\end{subfigure}
	\begin{subfigure}{.329\textwidth}
		\centering
		\includegraphics[width=\linewidth]{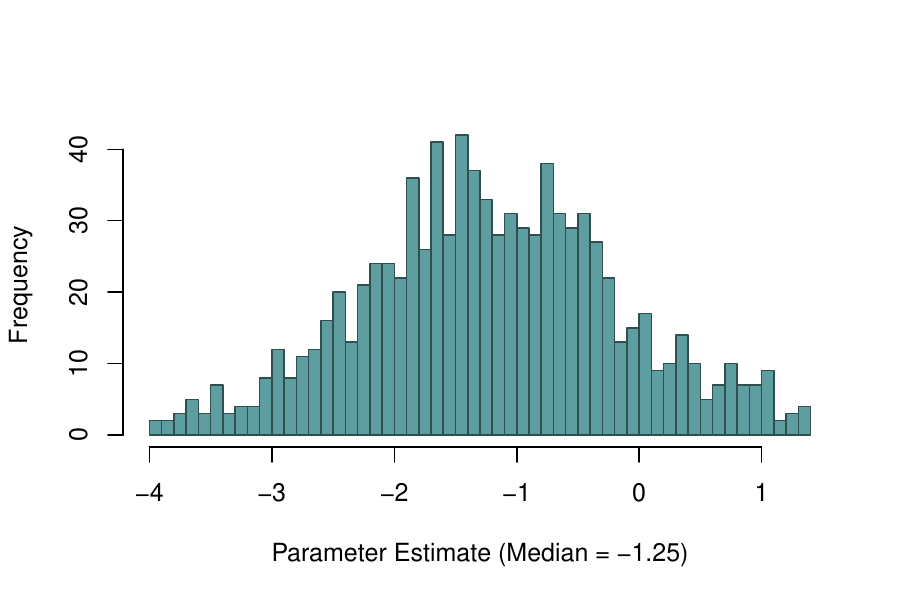} \vspace{-0.6cm}
		\caption{\scriptsize Parameter: $J_{13} - J_{23}$}
	\end{subfigure}
	\begin{subfigure}{.329\textwidth}
		\centering
		\includegraphics[width=\linewidth]{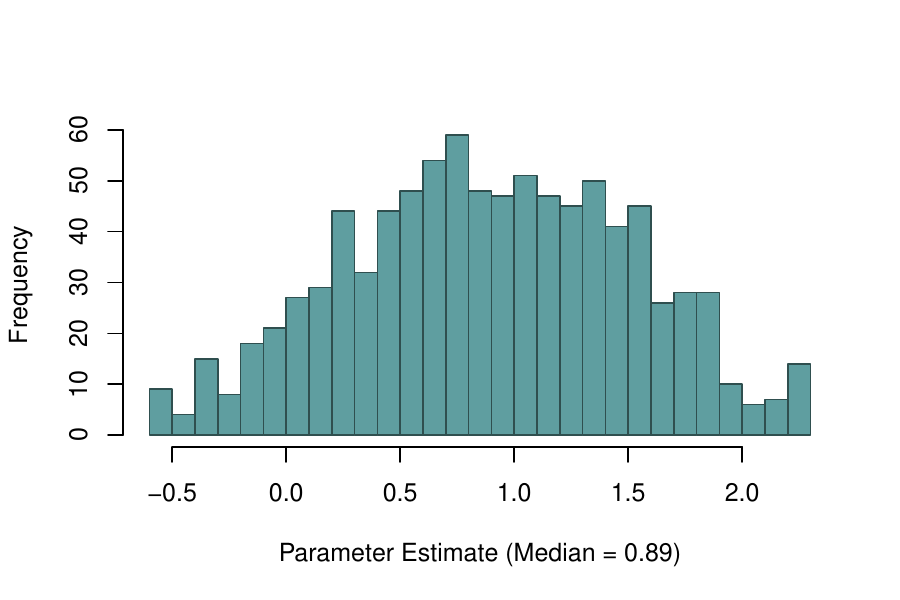} \vspace{-0.6cm}
		\caption{\scriptsize Parameter: $h_1 - h_3$}
	\end{subfigure}
	\begin{subfigure}{.329\textwidth}
		\centering
		\includegraphics[width=\linewidth]{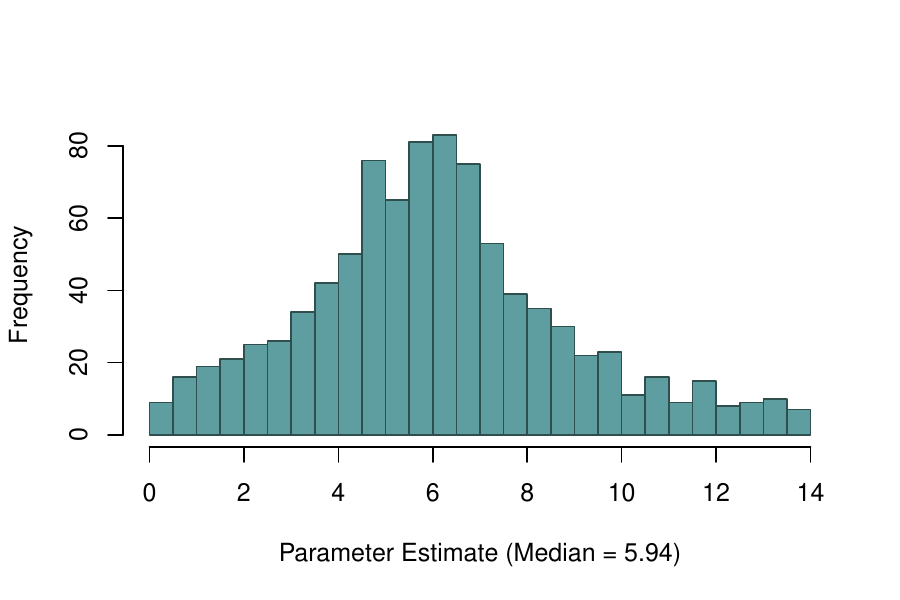} \vspace{-0.6cm}
		\caption{\scriptsize Parameter: $J_{11} - J_{31}$}
	\end{subfigure}
	\begin{subfigure}{.329\textwidth}
		\centering
		\includegraphics[width=\linewidth]{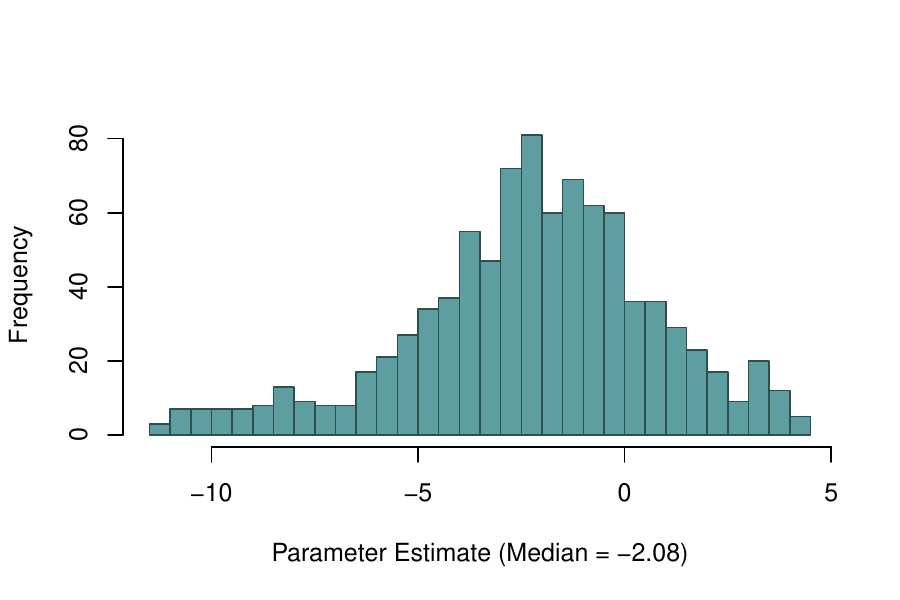} \vspace{-0.6cm}
		\caption{\scriptsize Parameter: $J_{12} - J_{32}$}
	\end{subfigure}
	\begin{subfigure}{.329\textwidth}
		\centering
		\includegraphics[width=\linewidth]{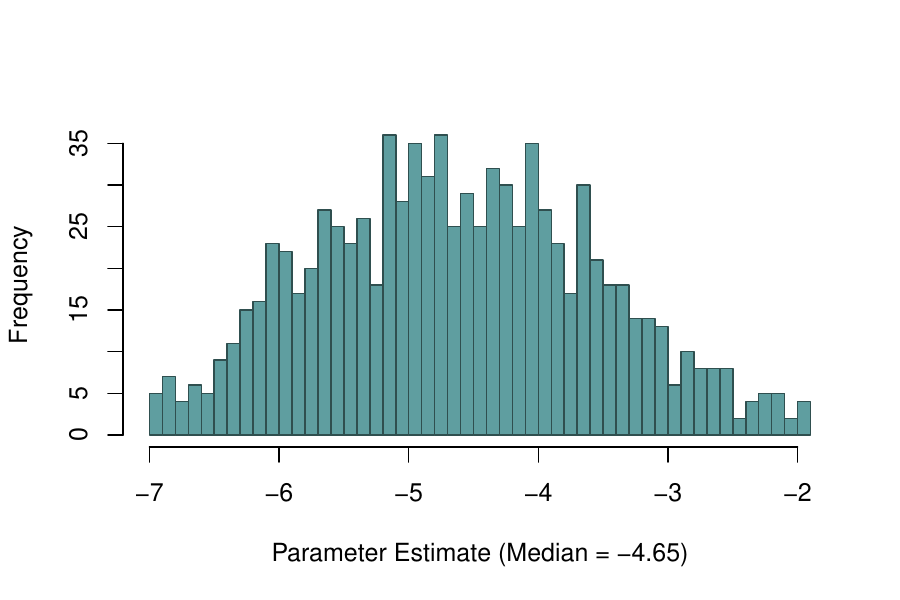} \vspace{-0.6cm}
		\caption{\scriptsize Parameter: $J_{13} - J_{33}$}
	\end{subfigure}
	\begin{subfigure}{.329\textwidth}
		\centering
		\includegraphics[width=\linewidth]{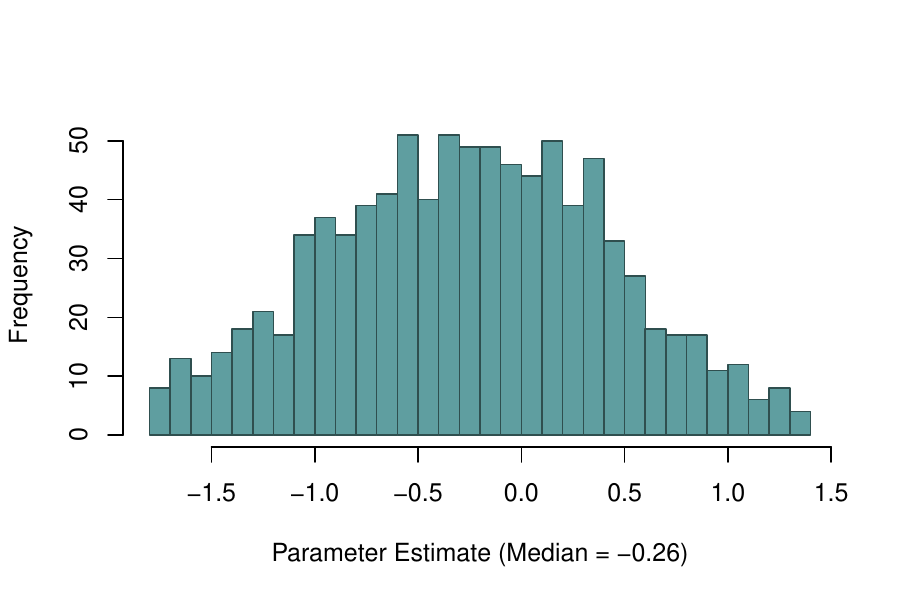} \vspace{-0.6cm}
		\caption{\scriptsize Parameter: $h_2 - h_3$}
	\end{subfigure}
	\begin{subfigure}{.329\textwidth}
		\centering
		\includegraphics[width=\linewidth]{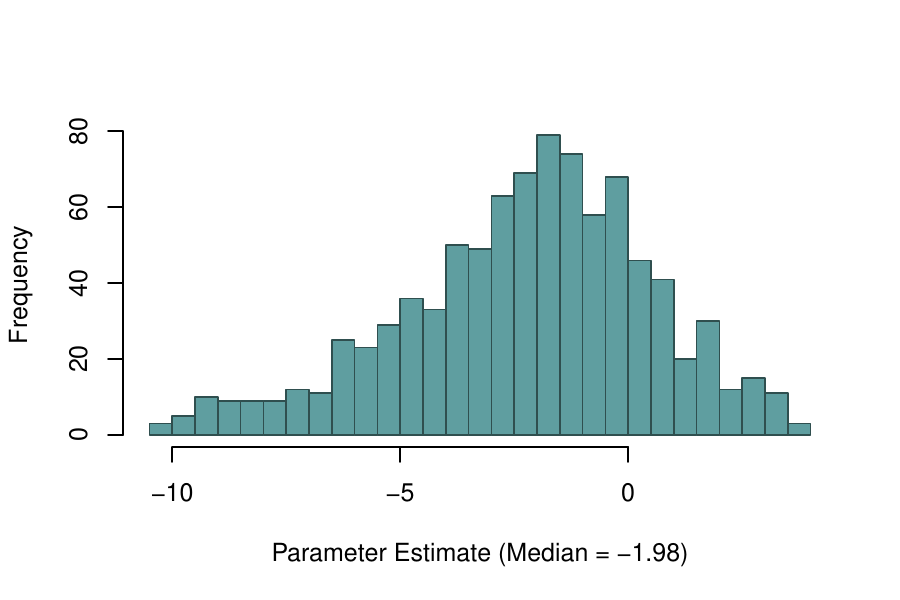} \vspace{-0.6cm}
		\caption{\scriptsize Parameter: $J_{21} - J_{31}$}
	\end{subfigure}
	\begin{subfigure}{.329\textwidth}
		\centering
		\includegraphics[width=\linewidth]{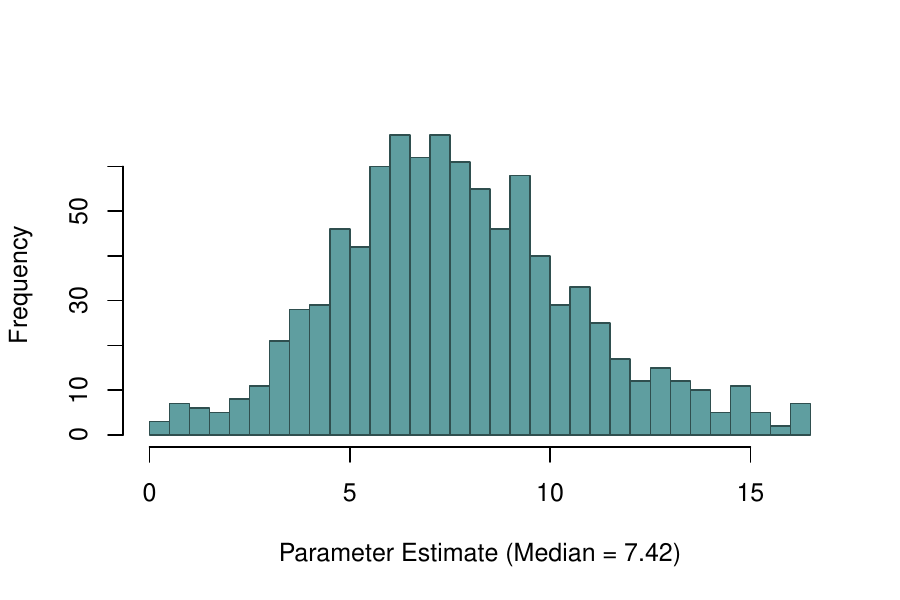} \vspace{-0.6cm}
		\caption{\scriptsize Parameter: $J_{22} - J_{32}$}
	\end{subfigure}
	\begin{subfigure}{.329\textwidth}
		\centering
		\includegraphics[width=\linewidth]{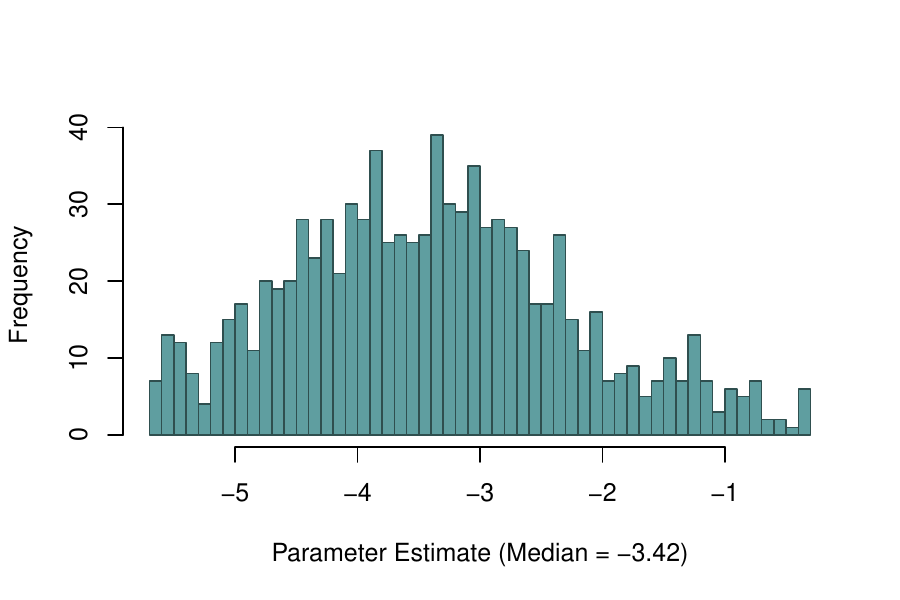} \vspace{-0.6cm}
		\caption{\scriptsize Parameter: $J_{23} - J_{33}$}
	\end{subfigure}
	\caption{\small Primary School Deworming: Histograms of IV Estimates across $M = 1000$ Random Partitions. The estimation procedure uses the same regression specification as in Table \ref*{main-tab:deworming-estimates}, Column 4.}
	\label{fig:deworming-partitions}
\end{figure}

\vspace{-0.75cm}
\subsubsection*{E.5.\hspace{0.1cm} Finite-Sample Performance of the IV Estimator}

\enlargethispage{1.5\baselineskip}
\noindent
Table \ref{tab:deworming-simulations} assesses the IV estimator’s finite-sample performance in simulations that reprod-\\uce the deworming application’s number of schools, school sizes, and gender and age composition. I calibrate the parameters to estimates from the specification without year or zone fixed effects or student or school controls.\footnote{As in Appendix D.6, only differences in the parameters are identified, so I set the levels of $h_k$ and $J_{k\ell}$, and the distribution of the network effects $\alpha_n$, to match the distribution of treatment uptake across schools in the data.} For this reason, the target values differ slightly from the estimates in Table 4, which includes these controls. I apply the same partitioning rule and standard error calculation used to obtain the paper's main estimates. To simulate data, I follow the procedure described in Appendix C.1, running a fixed-point iteration under the logistic model and then drawing agents' choices conditional on those probabilities.

Based on $2{,}000$ simulation draws, Table \ref{tab:deworming-simulations} reports root mean squared errors, average estimated standard errors, and coverage rates for 95\% confidence intervals. The three panels present results for each pair of student types. Average estimated standard errors are close to root mean squared errors, and coverage rates range from 91\% to 99\%. Overall, these results support the reliability of inference in finite samples that resemble the empirical application.

\enlargethispage{0.5\baselineskip}
\begin{singlespace}
	\begin{table}[ht!]
	\centering
	\caption{Primary School Deworming: Finite-Sample Performance of the IV Estimator}
	\label{tab:deworming-simulations}
	\small
	\begin{tabular}{lcccc}
		\toprule
		 & Target Value & Root MSE & Avg.\ Standard Error & 95\% Coverage Rate \\
		\midrule
		\multicolumn{5}{l}{\underline{\textit{Girls $\leq$ 13 and Boys $\leq$ 13}}} \\
		$h_1 - h_2$ & 0.846 & 0.494 & 0.509 & 0.965 \\ 
		$J_{11} - J_{21}$ & 8.069 & 1.210 & 1.147 & 0.965 \\ 
		$J_{12} - J_{22}$ & -9.244 & 1.579 & 1.475 & 0.961 \\ 
		$J_{13} - J_{23}$ & 0.130 & 1.519 & 1.396 & 0.963 \\ 
		\midrule
		\multicolumn{5}{l}{\underline{\textit{Girls $\leq$ 13 and Boys $>$ 13}}} \\
		$h_1 - h_3$ & -0.629 & 0.492 & 0.517 & 0.991 \\ 
		$J_{11} - J_{31}$ & 7.563 & 1.304 & 1.176 & 0.929 \\ 
		$J_{12} - J_{32}$ & -1.490 & 1.792 & 1.513 & 0.956 \\ 
		$J_{13} - J_{33}$ & -5.229 & 1.699 & 1.431 & 0.915 \\ 
		\midrule
		\multicolumn{5}{l}{\underline{\textit{Boys $\leq$ 13 and Boys $>$ 13}}} \\
		$h_2 - h_3$ & -1.475 & 0.530 & 0.519 & 0.944 \\ 
		$J_{21} - J_{31}$ & -0.506 & 1.343 & 1.181 & 0.957 \\ 
		$J_{22} - J_{32}$ & 7.754 & 1.721 & 1.521 & 0.932 \\ 
		$J_{23} - J_{33}$ & -5.358 & 1.719 & 1.439 & 0.912 \\ 
		\bottomrule
	\end{tabular}
	\begin{minipage}{\textwidth}
		\renewcommand{\baselinestretch}{0}\selectfont\setlength{\spaceskip}{0em}\footnotesize
		\emph{Notes.} This table presents root mean squared errors, average estimated standard errors, and coverage rates for 95\% confidence intervals, based on $2{,}000$ simulation draws. The design exactly reproduces the number of schools and each school's size and type composition in the deworming data. The simulations use the specification without year or zone fixed effects or student or school controls, with target values set to the corresponding empirical estimates.
	\end{minipage}
\end{table}
\end{singlespace}

\end{appendix}
\vspace{-0.65cm}
\onehalfspacing
\bibliographystyle{ecta-fullname}
\bibliography{References}